\newtheorem{theorem}{Theorem}
\newtheorem{acknowledgement}[theorem]{Acknowledgement}
\newtheorem{axiom}[theorem]{Axiom}
\newtheorem{conjecture}[theorem]{Conjecture}
\newtheorem{corollary}[theorem]{Corollary}
\newtheorem{definition}[theorem]{Definition}
\newtheorem{example}[theorem]{Example}
\newtheorem{exercise}[theorem]{Exercise}
\newtheorem{lemma}[theorem]{Lemma}
\newtheorem{proposition}[theorem]{Proposition}
\newtheorem{remark}[theorem]{Remark}
\newenvironment{proof}[1][Proof]{\noindent\textbf{#1.} }{\ \rule{0.5em}{0.5em}}
\newenvironment{proof1}[1][Proof]{\noindent\textbf{#1.} }{\ \rule{0.0em}{0.0em}}
\let\pdfoutput=\undefined\fi
\chardef\@x10\chardef\@xv60
\def\tcitime{
\def\@time{%
  \@minute\time\@hour\@minute\divide\@hour\@xv
  \ifnum\@hour<\@x 0\fi\the\@hour:%
  \multiply\@hour\@xv\advance\@minute-\@hour
  \ifnum\@minute<\@x 0\fi\the\@minute
  }}%
\def\x@hyperref#1#2#3{%
   \catcode`\~ = 12
   \catcode`\$ = 12
   \catcode`\_ = 12
   \catcode`\# = 12
   \catcode`\& = 12
   \catcode`\% = 12
   \y@hyperref{#1}{#2}{#3}%
}
\def\y@hyperref#1#2#3#4{%
   #2\ref{#4}#3
   \catcode`\~ = 13
   \catcode`\$ = 3
   \catcode`\_ = 8
   \catcode`\# = 6
   \catcode`\& = 4
   \catcode`\% = 14
}
\def\QCTOpt[#1]#2{%
  \def\QCTOptB{#1}
  \def\QCTOptA{#2}
}
\def\QCTNOpt#1{%
  \def\QCTOptA{#1}
  \let\QCTOptB\empty
}
\def\Qct{%
  \@ifnextchar[{%
    \QCTOpt}{\QCTNOpt}
}
\def\QCBOpt[#1]#2{%
  \def\QCBOptB{#1}%
  \def\QCBOptA{#2}%
}
\def\QCBNOpt#1{%
  \def\QCBOptA{#1}%
  \let\QCBOptB\empty
}
\def\Qcb{%
  \@ifnextchar[{%
    \QCBOpt}{\QCBNOpt}%
}
\def\PrepCapArgs{%
  \ifx\QCBOptA\empty
    \ifx\QCTOptA\empty
      {}%
    \else
      \ifx\QCTOptB\empty
        {\QCTOptA}%
      \else
        [\QCTOptB]{\QCTOptA}%
      \fi
    \fi
  \else
    \ifx\QCBOptA\empty
      {}%
    \else
      \ifx\QCBOptB\empty
        {\QCBOptA}%
      \else
        [\QCBOptB]{\QCBOptA}%
      \fi
    \fi
  \fi
}
\def\GRAPHICSPS#1{%
 \ifcase\GRAPHICSTYPE
   \special{ps: #1}%
 \or
   \special{language "PS", include "#1"}%
 \fi
}%
\def\graffile#1#2#3#4{%
    \bgroup
	   \@inlabelfalse
       \leavevmode
       \@ifundefined{bbl@deactivate}{\def~{\string~}}{\activesoff}%
        \raise -#4 \BOXTHEFRAME{%
           \hbox to #2{\raise #3\hbox to #2{\null #1\hfil}}}%
    \egroup
}%
\def\draftbox#1#2#3#4{%
 \leavevmode\raise -#4 \hbox{%
  \frame{\rlap{\protect\tiny #1}\hbox to #2%
   {\vrule height#3 width\z@ depth\z@\hfil}%
  }%
 }%
}%
\let\nographics=\@msidraft
\newif\ifwasdraft
\def\GRAPHIC#1#2#3#4#5{%
   \ifnum\@msidraft=\@ne\draftbox{#2}{#3}{#4}{#5}%
   \else\graffile{#1}{#3}{#4}{#5}%
   \fi
}
\def\addtoLaTeXparams#1{%
    \edef\LaTeXparams{\LaTeXparams #1}}%
\newif\ifBoxFrame \BoxFramefalse
\newif\ifOverFrame \OverFramefalse
\newif\ifUnderFrame \UnderFramefalse
\def\BOXTHEFRAME#1{%
   \hbox{%
      \ifBoxFrame
         \frame{#1}%
      \else
         {#1}%
      \fi
   }%
}
\def\doFRAMEparams#1{\BoxFramefalse\OverFramefalse\UnderFramefalse\readFRAMEparams#1\end}%
\def\readFRAMEparams#1{%
 \ifx#1\end%
  \let\next=\relax
  \else
  \ifx#1i\dispkind=\z@\fi
  \ifx#1d\dispkind=\@ne\fi
  \ifx#1f\dispkind=\tw@\fi
  \ifx#1t\addtoLaTeXparams{t}\fi
  \ifx#1b\addtoLaTeXparams{b}\fi
  \ifx#1p\addtoLaTeXparams{p}\fi
  \ifx#1h\addtoLaTeXparams{h}\fi
  \ifx#1X\BoxFrametrue\fi
  \ifx#1O\OverFrametrue\fi
  \ifx#1U\UnderFrametrue\fi
  \ifx#1w
    \ifnum\@msidraft=1\wasdrafttrue\else\wasdraftfalse\fi
    \@msidraft=\@ne
  \fi
  \let\next=\readFRAMEparams
  \fi
 \next
 }%
\def\IFRAME#1#2#3#4#5#6{%
      \bgroup
      \let\QCTOptA\empty
      \let\QCTOptB\empty
      \let\QCBOptA\empty
      \let\QCBOptB\empty
      #6%
      \parindent=0pt
      \leftskip=0pt
      \rightskip=0pt
      \setbox0=\hbox{\QCBOptA}%
      \@tempdima=#1\relax
      \ifOverFrame
          \typeout{This is not implemented yet}%
          \show\HELP
      \else
         \ifdim\wd0>\@tempdima
            \advance\@tempdima by \@tempdima
            \ifdim\wd0 >\@tempdima
               \setbox1 =\vbox{%
                  \unskip\hbox to \@tempdima{\hfill\GRAPHIC{#5}{#4}{#1}{#2}{#3}\hfill}%
                  \unskip\hbox to \@tempdima{\parbox[b]{\@tempdima}{\QCBOptA}}%
               }%
               \wd1=\@tempdima
            \else
               \textwidth=\wd0
               \setbox1 =\vbox{%
                 \noindent\hbox to \wd0{\hfill\GRAPHIC{#5}{#4}{#1}{#2}{#3}\hfill}\\%
                 \noindent\hbox{\QCBOptA}%
               }%
               \wd1=\wd0
            \fi
         \else
            \ifdim\wd0>0pt
              \hsize=\@tempdima
              \setbox1=\vbox{%
                \unskip\GRAPHIC{#5}{#4}{#1}{#2}{0pt}%
                \break
                \unskip\hbox to \@tempdima{\hfill \QCBOptA\hfill}%
              }%
              \wd1=\@tempdima
           \else
              \hsize=\@tempdima
              \setbox1=\vbox{%
                \unskip\GRAPHIC{#5}{#4}{#1}{#2}{0pt}%
              }%
              \wd1=\@tempdima
           \fi
         \fi
         \@tempdimb=\ht1
         \advance\@tempdimb by -#2
         \advance\@tempdimb by #3
         \leavevmode
         \raise -\@tempdimb \hbox{\box1}%
      \fi
      \egroup%
}%
\def\DFRAME#1#2#3#4#5{%
  \vspace\topsep
  \hfil\break
  \bgroup
     \leftskip\@flushglue
	 \rightskip\@flushglue
	 \parindent\z@
	 \parfillskip\z@skip
     \let\QCTOptA\empty
     \let\QCTOptB\empty
     \let\QCBOptA\empty
     \let\QCBOptB\empty
	 \vbox\bgroup
        \ifOverFrame 
           #5\QCTOptA\par
        \fi
        \GRAPHIC{#4}{#3}{#1}{#2}{\z@}%
        \ifUnderFrame 
           \break#5\QCBOptA
        \fi
	 \egroup
  \egroup
  \vspace\topsep
  \break
}%
\def\FFRAME#1#2#3#4#5#6#7{%
  \@ifundefined{floatstyle}
    {
     \begin{figure}[#1]%
    }
    {
	 \ifx#1h
      \begin{figure}[H]%
	 \else
      \begin{figure}[#1]%
	 \fi
	}
  \let\QCTOptA\empty
  \let\QCTOptB\empty
  \let\QCBOptA\empty
  \let\QCBOptB\empty
  \ifOverFrame
    #4
    \ifx\QCTOptA\empty
    \else
      \ifx\QCTOptB\empty
        \caption{\QCTOptA}%
      \else
        \caption[\QCTOptB]{\QCTOptA}%
      \fi
    \fi
    \ifUnderFrame\else
      \label{#5}%
    \fi
  \else
    \UnderFrametrue%
  \fi
  \begin{center}\GRAPHIC{#7}{#6}{#2}{#3}{\z@}\end{center}%
  \ifUnderFrame
    #4
    \ifx\QCBOptA\empty
      \caption{}%
    \else
      \ifx\QCBOptB\empty
        \caption{\QCBOptA}%
      \else
        \caption[\QCBOptB]{\QCBOptA}%
      \fi
    \fi
    \label{#5}%
  \fi
  \end{figure}%
 }%
\def\makeactives{
  \catcode`\"=\active
  \catcode`\;=\active
  \catcode`\:=\active
  \catcode`\'=\active
  \catcode`\~=\active
}
   \gdef\activesoff{%
      \def"{\string"}%
      \def;{\string;}%
      \def:{\string:}%
      \def'{\string'}%
      \def~{\string~}%
    }
\def\FRAME#1#2#3#4#5#6#7#8{%
 \bgroup
 \ifnum\@msidraft=\@ne
   \wasdrafttrue
 \else
   \wasdraftfalse%
 \fi
 \def\LaTeXparams{}%
 \dispkind=\z@
 \def\LaTeXparams{}%
 \doFRAMEparams{#1}%
 \ifnum\dispkind=\z@\IFRAME{#2}{#3}{#4}{#7}{#8}{#5}\else
  \ifnum\dispkind=\@ne\DFRAME{#2}{#3}{#7}{#8}{#5}\else
   \ifnum\dispkind=\tw@
    \edef\@tempa{\noexpand\FFRAME{\LaTeXparams}}%
    \@tempa{#2}{#3}{#5}{#6}{#7}{#8}%
    \fi
   \fi
  \fi
  \ifwasdraft\@msidraft=1\else\@msidraft=0\fi{}%
  \egroup
 }%
\def\TEXUX#1{"texux"}
\def\limfunc#1{\mathop{\rm #1}}%
\def\func#1{\mathop{\rm #1}\nolimits}%
\long\def\QQQ#1#2{%
     \long\expandafter\def\csname#1\endcsname{#2}}%
\long\def\QQA#1#2{}%
\def\QTR#1#2{{\csname#1\endcsname {#2}}}%
\def\EXPAND#1[#2]#3{}%
\def\NOEXPAND#1[#2]#3{}%
\def\LaTeXparent#1{}%
\def\ChildStyles#1{}%
\def\ChildDefaults#1{}%
\def\QTagDef#1#2#3{}%
  \providecommand{\UNICODE}[2][]{\protect\rule{.1in}{.1in}}
  \providecommand{\U}[1]{\protect\rule{.1in}{.1in}}
\def\QQfnmark#1{\footnotemark}
 \def\abstract{%
  \if@twocolumn
   \section*{Abstract (Not appropriate in this style!)}%
   \else \small 
   \begin{center}{\bf Abstract\vspace{-.5em}\vspace{\z@}}\end{center}%
   \quotation 
   \fi
  }%
   \def\registered{\relax\ifmmode{}\r@gistered
                    \else$\m@th\r@gistered$\fi}%
 \def\r@gistered{^{\ooalign
  {\hfil\raise.07ex\hbox{$\scriptstyle\rm\text{R}$}\hfil\crcr
  \mathhexbox20D}}}}{}%
\newdimen\theight
\def\newfmtname{LaTeX2e}
  \DeclareOldFontCommand{\rm}{\normalfont\rmfamily}{\mathrm}
  \DeclareOldFontCommand{\sf}{\normalfont\sffamily}{\mathsf}
  \DeclareOldFontCommand{\tt}{\normalfont\ttfamily}{\mathtt}
  \DeclareOldFontCommand{\bf}{\normalfont\bfseries}{\mathbf}
  \DeclareOldFontCommand{\it}{\normalfont\itshape}{\mathit}
  \DeclareOldFontCommand{\sl}{\normalfont\slshape}{\@nomath\sl}
  \DeclareOldFontCommand{\sc}{\normalfont\scshape}{\@nomath\sc}
\def\alpha{{\Greekmath 010B}}%
\def\beta{{\Greekmath 010C}}%
\def\gamma{{\Greekmath 010D}}%
\def\delta{{\Greekmath 010E}}%
\def\epsilon{{\Greekmath 010F}}%
\def\zeta{{\Greekmath 0110}}%
\def\eta{{\Greekmath 0111}}%
\def\theta{{\Greekmath 0112}}%
\def\iota{{\Greekmath 0113}}%
\def\kappa{{\Greekmath 0114}}%
\def\lambda{{\Greekmath 0115}}%
\def\mu{{\Greekmath 0116}}%
\def\nu{{\Greekmath 0117}}%
\def\xi{{\Greekmath 0118}}%
\def\pi{{\Greekmath 0119}}%
\def\rho{{\Greekmath 011A}}%
\def\sigma{{\Greekmath 011B}}%
\def\tau{{\Greekmath 011C}}%
\def\upsilon{{\Greekmath 011D}}%
\def\phi{{\Greekmath 011E}}%
\def\chi{{\Greekmath 011F}}%
\def\psi{{\Greekmath 0120}}%
\def\omega{{\Greekmath 0121}}%
\def\varepsilon{{\Greekmath 0122}}%
\def\vartheta{{\Greekmath 0123}}%
\def\varpi{{\Greekmath 0124}}%
\def\varrho{{\Greekmath 0125}}%
\def\varsigma{{\Greekmath 0126}}%
\def\varphi{{\Greekmath 0127}}%
\def\nabla{{\Greekmath 0272}}
\def\FindBoldGroup{%
   {\setbox0=\hbox{$\mathbf{x\global\edef\theboldgroup{\the\mathgroup}}$}}%
}
\def\Greekmath#1#2#3#4{%
    \if@compatibility
        \ifnum\mathgroup=\symbold
           \mathchoice{\mbox{\boldmath$\displaystyle\mathchar"#1#2#3#4$}}%
                      {\mbox{\boldmath$\textstyle\mathchar"#1#2#3#4$}}%
                      {\mbox{\boldmath$\scriptstyle\mathchar"#1#2#3#4$}}%
                      {\mbox{\boldmath$\scriptscriptstyle\mathchar"#1#2#3#4$}}%
        \else
           \mathchar"#1#2#3#4%
        \fi 
    \else 
        \FindBoldGroup
        \ifnum\mathgroup=\theboldgroup 
           \mathchoice{\mbox{\boldmath$\displaystyle\mathchar"#1#2#3#4$}}%
                      {\mbox{\boldmath$\textstyle\mathchar"#1#2#3#4$}}%
                      {\mbox{\boldmath$\scriptstyle\mathchar"#1#2#3#4$}}%
                      {\mbox{\boldmath$\scriptscriptstyle\mathchar"#1#2#3#4$}}%
        \else
           \mathchar"#1#2#3#4%
        \fi     	    
	  \fi}
\newif\ifGreekBold  \GreekBoldfalse
\let\SAVEPBF=\pbf
\def\pbf{\GreekBoldtrue\SAVEPBF}%
  \newcounter{equationnumber}  
  \def\mathletters{%
     \addtocounter{equation}{1}
     \edef\@currentlabel{\theequation}%
     \setcounter{equationnumber}{\c@equation}
     \setcounter{equation}{0}%
     \edef\theequation{\@currentlabel\noexpand\alph{equation}}%
  }
    \def\BibTeX{{\rm B\kern-.05em{\sc i\kern-.025em b}\kern-.08em
                 T\kern-.1667em\lower.7ex\hbox{E}\kern-.125emX}}}{}%
\def\AmS{{\protect\usefont{OMS}{cmsy}{m}{n}%
                A\kern-.1667em\lower.5ex\hbox{M}\kern-.125emS}}}{}%
\def\@@eqncr{\let\@tempa\relax
    \ifcase\@eqcnt \def\@tempa{& & &}\or \def\@tempa{& &}%
      \else \def\@tempa{&}\fi
     \@tempa
     \if@eqnsw
        \iftag@
           \@taggnum
        \else
           \@eqnnum\stepcounter{equation}%
        \fi
     \fi
     \global\tag@false
     \global\@eqnswtrue
     \global\@eqcnt\z@\cr}
\def\TCItag{\@ifnextchar*{\@TCItagstar}{\@TCItag}}
\def\@TCItag#1{%
    \global\tag@true
    \global\def\@taggnum{(#1)}%
    \global\def\@currentlabel{#1}}
\def\@TCItagstar*#1{%
    \global\tag@true
    \global\def\@taggnum{#1}%
    \global\def\@currentlabel{#1}}
\def\tint{\msi@int\textstyle\int}%
\def\tiint{\msi@int\textstyle\iint}%
\def\tiiint{\msi@int\textstyle\iiint}%
\def\tiiiint{\msi@int\textstyle\iiiint}%
\def\tidotsint{\msi@int\textstyle\idotsint}%
\def\toint{\msi@int\textstyle\oint}%
\newtoks\temptoksa
\newtoks\temptoksb
\newtoks\temptoksc
\def\msi@int#1#2{%
 \def\@temp{{#1#2\the\temptoksc_{\the\temptoksa}^{\the\temptoksb}}}%
 \futurelet\@nextcs
 \@int
}
\def\@int{%
   \ifx\@nextcs\limits
      \typeout{Found limits}%
      \temptoksc={\limits}%
	  \let\@next\@intgobble%
   \else\ifx\@nextcs\nolimits
      \typeout{Found nolimits}%
      \temptoksc={\nolimits}%
	  \let\@next\@intgobble%
   \else
      \typeout{Did not find limits or no limits}%
      \temptoksc={}%
      \let\@next\msi@limits%
   \fi\fi
   \@next   
}%
\def\@intgobble#1{%
   \typeout{arg is #1}%
   \msi@limits
}
\def\msi@limits{%
   \temptoksa={}%
   \temptoksb={}%
   \@ifnextchar_{\@limitsa}{\@limitsb}%
}
\def\@limitsa_#1{%
   \temptoksa={#1}%
   \@ifnextchar^{\@limitsc}{\@temp}%
}
\def\@limitsb{%
   \@ifnextchar^{\@limitsc}{\@temp}%
}
\def\@limitsc^#1{%
   \temptoksb={#1}%
   \@ifnextchar_{\@limitsd}{\@temp}%
}
\def\@limitsd_#1{%
   \temptoksa={#1}%
   \@temp
}
\def\dint{\msi@int\displaystyle\int}%
\def\diint{\msi@int\displaystyle\iint}%
\def\diiint{\msi@int\displaystyle\iiint}%
\def\diiiint{\msi@int\displaystyle\iiiint}%
\def\didotsint{\msi@int\displaystyle\idotsint}%
\def\doint{\msi@int\displaystyle\oint}%
\def\ExitTCILatex{\makeatother }
\if@compatibility\message{amsmath already loaded}\fi\aftergroup\ExitTCILatex}
\if@compatibility\message{amstex already loaded}\fi\aftergroup\ExitTCILatex}
\if@compatibility\message{amsgen already loaded}\fi\aftergroup\ExitTCILatex}
\let\DOTSI\relax
\def\RIfM@{\relax\ifmmode}%
\def\FN@{\futurelet\next}%
\def\iint{\DOTSI\intno@\tw@\FN@\ints@}%
\def\iiint{\DOTSI\intno@\thr@@\FN@\ints@}%
\def\iiiint{\DOTSI\intno@4 \FN@\ints@}%
\def\idotsint{\DOTSI\intno@\z@\FN@\ints@}%
\def\ints@{\findlimits@\ints@@}%
\newif\iflimtoken@
\newif\iflimits@
\def\findlimits@{\limtoken@true\ifx\next\limits\limits@true
 \else\ifx\next\nolimits\limits@false\else
 \limtoken@false\ifx\ilimits@\nolimits\limits@false\else
 \ifinner\limits@false\else\limits@true\fi\fi\fi\fi}%
\def\multint@{\int\ifnum\intno@=\z@\intdots@                          
 \else\intkern@\fi                                                    
 \ifnum\intno@>\tw@\int\intkern@\fi                                   
 \ifnum\intno@>\thr@@\int\intkern@\fi                                 
 \int}
\def\multintlimits@{\intop\ifnum\intno@=\z@\intdots@\else\intkern@\fi
 \ifnum\intno@>\tw@\intop\intkern@\fi
 \ifnum\intno@>\thr@@\intop\intkern@\fi\intop}%
\def\intic@{%
    \mathchoice{\hskip.5em}{\hskip.4em}{\hskip.4em}{\hskip.4em}}%
\def\negintic@{\mathchoice
 {\hskip-.5em}{\hskip-.4em}{\hskip-.4em}{\hskip-.4em}}%
\def\ints@@{\iflimtoken@                                              
 \def\ints@@@{\iflimits@\negintic@
   \mathop{\intic@\multintlimits@}\limits                             
  \else\multint@\nolimits\fi                                          
  \eat@}
 \else                                                                
 \def\ints@@@{\iflimits@\negintic@
  \mathop{\intic@\multintlimits@}\limits\else
  \multint@\nolimits\fi}\fi\ints@@@}%
\def\intkern@{\mathchoice{\!\!\!}{\!\!}{\!\!}{\!\!}}%
\def\plaincdots@{\mathinner{\cdotp\cdotp\cdotp}}%
\def\intdots@{\mathchoice{\plaincdots@}%
 {{\cdotp}\mkern1.5mu{\cdotp}\mkern1.5mu{\cdotp}}%
 {{\cdotp}\mkern1mu{\cdotp}\mkern1mu{\cdotp}}%
 {{\cdotp}\mkern1mu{\cdotp}\mkern1mu{\cdotp}}}%
\def\RIfM@{\relax\protect\ifmmode}
\def\text{\RIfM@\expandafter\text@\else\expandafter\mbox\fi}
\let\nfss@text\text
\def\text@#1{\mathchoice
   {\textdef@\displaystyle\f@size{#1}}%
   {\textdef@\textstyle\tf@size{\firstchoice@false #1}}%
   {\textdef@\textstyle\sf@size{\firstchoice@false #1}}%
   {\textdef@\textstyle \ssf@size{\firstchoice@false #1}}%
   \glb@settings}
\def\textdef@#1#2#3{\hbox{{%
                    \everymath{#1}%
                    \let\f@size#2\selectfont
                    #3}}}
\newif\iffirstchoice@
\def\Let@{\relax\iffalse{\fi\let\\=\cr\iffalse}\fi}%
\def\vspace@{\def\vspace##1{\crcr\noalign{\vskip##1\relax}}}%
\def\multilimits@{\bgroup\vspace@\Let@
 \baselineskip\fontdimen10 \scriptfont\tw@
 \advance\baselineskip\fontdimen12 \scriptfont\tw@
 \lineskip\thr@@\fontdimen8 \scriptfont\thr@@
 \lineskiplimit\lineskip
 \vbox\bgroup\ialign\bgroup\hfil$\m@th\scriptstyle{##}$\hfil\crcr}%
\def\Sb{_\multilimits@}%
\def\endSb{\crcr\egroup\egroup\egroup}%
\def\Sp{^\multilimits@}%
\newdimen\ex@
\def\rightarrowfill@#1{$#1\m@th\mathord-\mkern-6mu\cleaders
 \hbox{$#1\mkern-2mu\mathord-\mkern-2mu$}\hfill
 \mkern-6mu\mathord\rightarrow$}%
\def\leftarrowfill@#1{$#1\m@th\mathord\leftarrow\mkern-6mu\cleaders
 \hbox{$#1\mkern-2mu\mathord-\mkern-2mu$}\hfill\mkern-6mu\mathord-$}%
\def\leftrightarrowfill@#1{$#1\m@th\mathord\leftarrow
\mkern-6mu\cleaders
 \hbox{$#1\mkern-2mu\mathord-\mkern-2mu$}\hfill
 \mkern-6mu\mathord\rightarrow$}%
\def\overrightarrow{\mathpalette\overrightarrow@}%
\def\overrightarrow@#1#2{\vbox{\ialign{##\crcr\rightarrowfill@#1\crcr
 \noalign{\kern-\ex@\nointerlineskip}$\m@th\hfil#1#2\hfil$\crcr}}}%
\def\overleftarrow{\mathpalette\overleftarrow@}%
\def\overleftarrow@#1#2{\vbox{\ialign{##\crcr\leftarrowfill@#1\crcr
 \noalign{\kern-\ex@\nointerlineskip}$\m@th\hfil#1#2\hfil$\crcr}}}%
\def\overleftrightarrow{\mathpalette\overleftrightarrow@}%
\def\overleftrightarrow@#1#2{\vbox{\ialign{##\crcr
   \leftrightarrowfill@#1\crcr
 \noalign{\kern-\ex@\nointerlineskip}$\m@th\hfil#1#2\hfil$\crcr}}}%
\def\underrightarrow{\mathpalette\underrightarrow@}%
\def\underrightarrow@#1#2{\vtop{\ialign{##\crcr$\m@th\hfil#1#2\hfil
  $\crcr\noalign{\nointerlineskip}\rightarrowfill@#1\crcr}}}%
\def\underleftarrow{\mathpalette\underleftarrow@}%
\def\underleftarrow@#1#2{\vtop{\ialign{##\crcr$\m@th\hfil#1#2\hfil
  $\crcr\noalign{\nointerlineskip}\leftarrowfill@#1\crcr}}}%
\def\underleftrightarrow{\mathpalette\underleftrightarrow@}%
\def\underleftrightarrow@#1#2{\vtop{\ialign{##\crcr$\m@th
  \hfil#1#2\hfil$\crcr
 \noalign{\nointerlineskip}\leftrightarrowfill@#1\crcr}}}%
\def\qopnamewl@#1{\mathop{\operator@font#1}\nlimits@}
\let\nlimits@\displaylimits
\def\setboxz@h{\setbox\z@\hbox}
\def\varlim@#1#2{\mathop{\vtop{\ialign{##\crcr
 \hfil$#1\m@th\operator@font lim$\hfil\crcr
 \noalign{\nointerlineskip}#2#1\crcr
 \noalign{\nointerlineskip\kern-\ex@}\crcr}}}}
 \def\rightarrowfill@#1{\m@th\setboxz@h{$#1-$}\ht\z@\z@
  $#1\copy\z@\mkern-6mu\cleaders
  \hbox{$#1\mkern-2mu\box\z@\mkern-2mu$}\hfill
  \mkern-6mu\mathord\rightarrow$}
\def\leftarrowfill@#1{\m@th\setboxz@h{$#1-$}\ht\z@\z@
  $#1\mathord\leftarrow\mkern-6mu\cleaders
  \hbox{$#1\mkern-2mu\copy\z@\mkern-2mu$}\hfill
  \mkern-6mu\box\z@$}
\def\projlim{\qopnamewl@{proj\,lim}}
\def\injlim{\qopnamewl@{inj\,lim}}
\def\varinjlim{\mathpalette\varlim@\rightarrowfill@}
\def\varprojlim{\mathpalette\varlim@\leftarrowfill@}
\def\varliminf{\mathpalette\varliminf@{}}
\def\varliminf@#1{\mathop{\underline{\vrule\@depth.2\ex@\@width\z@
   \hbox{$#1\m@th\operator@font lim$}}}}
\def\varlimsup{\mathpalette\varlimsup@{}}
\def\varlimsup@#1{\mathop{\overline
  {\hbox{$#1\m@th\operator@font lim$}}}}
\def\align{\@verbatim \frenchspacing\@vobeyspaces \@alignverbatim
You are using the "align" environment in a style in which it is not defined.}
\let\csname endalign*\endcsname =\endtrivlist
\def\alignat{\@verbatim \frenchspacing\@vobeyspaces \@alignatverbatim
You are using the "alignat" environment in a style in which it is not defined.}
\let\csname endalignat*\endcsname =\endtrivlist
\def\xalignat{\@verbatim \frenchspacing\@vobeyspaces \@xalignatverbatim
You are using the "xalignat" environment in a style in which it is not defined.}
\let\csname endxalignat*\endcsname =\endtrivlist
\def\gather{\@verbatim \frenchspacing\@vobeyspaces \@gatherverbatim
You are using the "gather" environment in a style in which it is not defined.}
\let\csname endgather*\endcsname =\endtrivlist
\def\multiline{\@verbatim \frenchspacing\@vobeyspaces \@multilineverbatim
You are using the "multiline" environment in a style in which it is not defined.}
\let\csname endmultiline*\endcsname =\endtrivlist
\def\arrax{\@verbatim \frenchspacing\@vobeyspaces \@arraxverbatim
You are using a type of "array" construct that is only allowed in AmS-LaTeX.}
\def\tabulax{\@verbatim \frenchspacing\@vobeyspaces \@tabulaxverbatim
You are using a type of "tabular" construct that is only allowed in AmS-LaTeX.}
\let\csname endarrax*\endcsname =\endtrivlist
\let\csname endtabulax*\endcsname =\endtrivlist
 \def\endequation{%
     \ifmmode\ifinner 
      \iftag@
        \addtocounter{equation}{-1} 
        $\hfil
           \displaywidth\linewidth\@taggnum\egroup \endtrivlist
        \global\tag@false
        \global\@ignoretrue   
      \else
        $\hfil
           \displaywidth\linewidth\@eqnnum\egroup \endtrivlist
        \global\tag@false
        \global\@ignoretrue 
      \fi
     \else   
      \iftag@
        \addtocounter{equation}{-1} 
        \eqno \hbox{\@taggnum}
        \global\tag@false%
        $$\global\@ignoretrue
      \else
        \eqno \hbox{\@eqnnum}
        $$\global\@ignoretrue
      \fi
     \fi\fi
 } 
 \newif\iftag@ \tag@false
 \def\TCItag{\@ifnextchar*{\@TCItagstar}{\@TCItag}}
 \def\@TCItag#1{%
     \global\tag@true
     \global\def\@taggnum{(#1)}%
     \global\def\@currentlabel{#1}}
 \def\@TCItagstar*#1{%
     \global\tag@true
     \global\def\@taggnum{#1}%
     \global\def\@currentlabel{#1}}
     \def\tag{\@ifnextchar*{\@tagstar}{\@tag}}
     \def\@tag#1{%
         \global\tag@true
         \global\def\@taggnum{(#1)}}
     \def\@tagstar*#1{%
         \global\tag@true
         \global\def\@taggnum{#1}}
\begin{document}

\title{On a system of equations arising in viscoelasticity theory of
fractional type}
\author{Teodor M. Atanackovic%
\begin{footnote}
{Department of Mechanics, Faculty of Technical Sciences,
University of Novi Sad, Trg D. Obradovica, 6, 21000 Novi Sad,
Serbia, atanackovic@uns.ac.rs}
\end{footnote}, Stevan Pilipovic%
\begin{footnote}
{Department of Mathematics, Faculty of Natural Sciences and
Mathematics, University of Novi Sad, Trg D. Obradovica, 4, 21000
Novi Sad, Serbia, stevan.pilipovic@dmi.uns.ac.rs}
\end{footnote} and Dusan Zorica%
\begin{footnote}
{Mathematical Institute, Serbian Academy of Sciences and Arts,
Kneza Mihaila 36, 11000 Beograd, Serbia, dusan\textunderscore
zorica@mi.sanu.ac.rs}
\end{footnote}}
\maketitle

\begin{abstract}
\noindent We study a system of partial differential equations with integer
and fractional derivatives arising in the study of forced oscillatory motion
of a viscoelastic rod. We propose a new approach considering a quotient of
relations appearing in the constitutive equation instead the constitutive
equation itself. Both, a rod and a body are assumed to have finite mass. The
motion of a body is assumed to be translatory. Existence and uniqueness for
the corresponding initial-boundary value problem is proved within the spaces
of functions and distributions.

\bigskip

\noindent \textbf{Keywords:} fractional derivative, distributed-order
fractional derivative, fractional viscoelastic material, forced oscillations
of a rod, forced oscillations of a body
\end{abstract}

\section{Introduction}

In this paper we study (\ref{sys-1}) - (\ref{BC}) derived in \cite{APZ-6}.
The system corresponds to a motion of a viscoelastic rod fixed at one end
and of a body of finite mass attached to the other end. Also, an outer
force, having the action line coinciding with the axis of the rod, acts at
the body attached to the free end of a rod. In the dimensionless form the
system of equations, initial and boundary conditions, describing such a
motion, reads%
\begin{gather}
\frac{\partial }{\partial x}\sigma \left( x,t\right) =\kappa ^{2}\frac{%
\partial ^{2}}{\partial t^{2}}u\left( x,t\right) ,\;\;\;\;\varepsilon \left(
x,t\right) =\frac{\partial }{\partial x}u\left( x,t\right) ,\;\;x\in \left[
0,1\right] ,\;t>0,  \label{sys-1} \\
\int_{0}^{1}\phi _{\sigma }\left( \gamma \right) {}_{0}\mathrm{D}%
_{t}^{\gamma }\sigma \left( x,t\right) \mathrm{d}\gamma =\int_{0}^{1}\phi
_{\varepsilon }\left( \gamma \right) {}_{0}\mathrm{D}_{t}^{\gamma
}\varepsilon \left( x,t\right) \mathrm{d}\gamma ,\;\;x\in \left[ 0,1\right]
,\;t>0,  \label{sys-2} \\
u\left( x,0\right) =0,\;\;\;\frac{\partial }{\partial t}u\left( x,0\right)
=0,\;\;\;\sigma \left( x,0\right) =0,\;\;\;\varepsilon \left( x,0\right)
=0,\;\;x\in \left[ 0,1\right] ,  \label{IC} \\
u\left( 0,t\right) =0,\;\;\;\;-\sigma \left( 1,t\right) +F\left( t\right) =%
\frac{\partial ^{2}}{\partial t^{2}}u\left( 1,t\right) ,\;\;t>0.  \label{BC}
\end{gather}

We note that (\ref{sys-1})$_{1}$ represents equation of motion for an
arbitrary material point of a rod. In it, $\sigma $ denotes the stress at
the point $x$ at time $t,$ $u$ is the displacement, $\kappa $ is a constant
representing the ratio between the masses of a rod and a body. In (\ref%
{sys-1})$_{2}$ we use $\varepsilon $ to denote the axial strain of a rod,
while in (\ref{sys-2}) $\phi _{\sigma }$ and $\phi _{\varepsilon }$ denote
constitutive functions or distributions, that are assumed to be known. The
operator of the left Riemann-Liouville fractional derivative of order $%
\gamma \in \left( 0,1\right) $ ${}_{0}\mathrm{D}_{t}^{\gamma }$ is defined
as
\begin{equation*}
_{0}\mathrm{D}_{t}^{\gamma }y\left( t\right) :=\frac{\mathrm{d}}{\mathrm{d}t}%
\left( \frac{t^{-\gamma }}{\Gamma \left( 1-\gamma \right) }\ast y\left(
t\right) \right) ,\;\;t>0,
\end{equation*}%
where $\Gamma $ is the Euler gamma function, $\ast $ is a convolution, i.e.,
if $f,g\in L_{loc}^{1}\left( \mathbf{%
\mathbb{R}
}\right) ,$ $\limfunc{supp}f,g\subset \left[ 0,\infty \right) ,$ then $%
\left( f\ast g\right) \left( t\right) :=\int_{0}^{t}f\left( \tau \right)
g\left( t-\tau \right) \mathrm{d}\tau ,$ $t\in \mathbb{R}$. We refer to \cite%
{TAFDE,Pod,SKM} for a detailed account on fractional calculus. In (\ref{IC})
the initial conditions for an initially undeformed rod are presented.
Finally, (\ref{BC}) represents the boundary conditions corresponding to a
rod with one end fixed at $x=0$ and with force $F$ acting at the body
attached to the other end (at $x=1$). Note that initial-boundary value
problem (\ref{sys-1}) - (\ref{BC}) represents a generalization of a problem
of forced oscillations in the case of a light rod, presented in \cite{a}. We
refer to \cite{APZ-6} for the details for the physical interpretation basis
of system (\ref{sys-1}) - (\ref{BC}).

The main novelty of our approach is that we consider a constitutive equation
in essentially new way. Instead of examining operators acting on stress and
strain in constitutive equation separately, we analyze their quotient after
the application of the Laplace transform to (\ref{sys-2}), that we denote by
$M,$ see (\ref{M}). In this context $M$ appears as a new important quantity
which reflects the inherent properties of a material of a rod. General form
of $M$ requires detailed mathematical analysis which is given in the paper.

The aim of this paper is to prove the existence and uniqueness of a solution
to system (\ref{sys-1}) - (\ref{BC}). Our main results are stated as
Theorems \ref{thmP} and \ref{thmQ} bellow. In proving these theorems we use
several auxiliary results presented in a separate section.

Constitutive equations (\ref{sys-2}) were used earlier in \cite%
{a-2002-a,APZ-3,APZ-4,H-L} in special forms. Also, in the case $\phi
_{\sigma }=\phi _{\varepsilon }$ (\ref{sys-2}) becomes the Hooke Law:%
\begin{equation*}
\sigma \left( x,t\right) =\varepsilon \left( x,t\right) ,\;\;x\in \left[ 0,L%
\right] ,\;t>0.
\end{equation*}%
We note that the constitutive functions or distributions $\phi _{\sigma }$
and $\phi _{\varepsilon }$ appearing in (\ref{sys-2}) must be taken in the
accordance with the Second Law of Thermodynamics. For example, if we take
\begin{equation}
\phi _{\sigma }\left( \gamma \right) :=a^{\gamma },\;\;\phi _{\varepsilon
}\left( \gamma \right) :=b^{\gamma },\;\;\gamma \in \left( 0,1\right)
,\;a,b>0,  \label{a-b}
\end{equation}%
then there is a restriction $a\leq b,$ see \cite{a-2002,a-2003,AKOZ}. The
special case when the constitutive distributions $\phi _{\sigma }$ and $\phi
_{\varepsilon }$ are given by
\begin{equation}
\phi _{\sigma }\left( \gamma \right) :=\delta \left( \gamma \right)
+a\,\delta \left( \gamma -\alpha \right) ,\;\;\phi _{\varepsilon }\left(
\gamma \right) :=\delta \left( \gamma \right) +b\,\delta \left( \gamma
-\alpha \right) ,\;\;\alpha \in \left( 0,1\right) ,\;0<a\leq b,
\label{zener}
\end{equation}%
where $\delta $ is the Dirac distribution, is of particular interest. This
case gives a generalization of the Zener constitutive equation for a
viscoelastic body. The waves in such type of materials were studied in \cite%
{KOZ10}. If $\phi _{\sigma }$ and $\phi _{\varepsilon }$ are given by%
\begin{eqnarray}
&&\phi _{\sigma }\left( \gamma \right)
\begin{tabular}{l}
:=%
\end{tabular}%
\delta \left( \gamma \right) +a\,\delta \left( \gamma -\alpha \right) ,
\notag \\
&&\phi _{\varepsilon }\left( \gamma \right)
\begin{tabular}{l}
:=%
\end{tabular}%
b_{0}\,\delta \left( \gamma -\beta _{0}\right) +b_{1}\,\delta \left( \gamma
-\beta _{1}\right) +b_{2}\,\delta \left( \gamma -\beta _{2}\right) ,
\label{hilf}
\end{eqnarray}%
where $a,$ $b_{0},$ $b_{1},$ $b_{2}$ are positive constants and $0<\alpha
<\beta _{0}<\beta _{1}<\beta _{2}\leq 1,$ then one obtains a constitutive
equation proposed in \cite{hilf}. We note that the system (\ref{sys-1}) - (%
\ref{BC}), with the choice of constitutive functions and distributions (\ref%
{a-b}) and (\ref{zener}), is considered in \cite{APZ-6}. We refer to \cite%
{Mai-10,R-2010,R-S-2010} for the detailed account on the use of fractional
calculus in viscoelasticity.

Note that we can apply our results in the study of behavior of solid-like
materials, as done in \cite{APZ-6} for the cases when $M$ takes the forms (%
\ref{a-b-M}) and (\ref{zener-M}). The behavior of a fluid-like material in a
special form is analyzed in \cite{AKOZ}, where we used constitutive
distributions in the form given by (\ref{hilf}).

The paper is organized as follows.\ We present in \S \thinspace \ref{thms}
main results of our work formulated as Theorems \ref{thmP} and \ref{thmQ}.
Proofs of these theorems are given in two steps: the first one is given at
the beginning of \S \thinspace \ref{prufs-0}, while the second one is given
in \S \thinspace \ref{prufs}. We obtain the displacement $u$\ and stress $%
\sigma $\ as solutions to (\ref{sys-1}) - (\ref{BC}) in the convolution form
by the use of the Laplace transform method. In order to be able to invert
the Laplace transform in \S \thinspace \ref{prufs}, we need several
auxiliary results, that are given in \S \thinspace \ref{AR}. On the basis of
Theorems \ref{thmP} and \ref{thmQ}, we discuss, in \S \thinspace \ref{ER},
Theorems \ref{thmP-el} and \ref{thmQ-el}, a model of elastic rod as a
special case.

\section{Notation and assumptions}

In the sequel we consider analytic functions in%
\begin{equation*}
V=%
\mathbb{C}
\backslash \left( -\infty ,0\right] =\left\{ z=r\mathrm{e}^{\mathrm{i}%
\varphi }\mid r>0,\;\varphi \in \left( -\pi ,\pi \right) \right\} .
\end{equation*}%
We often use notation $\left\vert s\right\vert \rightarrow \infty $ and $%
\left\vert s\right\vert \rightarrow 0$, where we assume that $s\in V.$

The Laplace transform of $f\in L_{loc}^{1}\left( \mathbf{%
\mathbb{R}
}\right) ,$ $f\equiv 0$ in $\left( -\infty ,0\right] $ and $\left\vert
f\left( t\right) \right\vert \leq c\mathrm{e}^{kt},$ $t>0,$ for some $k>0,$
is defined by%
\begin{equation}
\tilde{f}\left( s\right) =\mathcal{L}\left[ f\left( t\right) \right] \left(
s\right) :=\int_{0}^{\infty }f\left( t\right) e^{-st}\mathrm{d}t,\;\;\func{Re%
}s>k  \label{lt}
\end{equation}%
and analytically continued in an appropriate domain (in our case $V$).

We consider spaces of tempered distributions supported by $\bar{%
\mathbb{R}%
}_{+}=\left[ 0,\infty \right) ,$ denoted by $\mathcal{S}_{+}^{\prime }.$ The
Laplace transform of distributions in $\mathcal{S}_{+}^{\prime }$ is derived
from (\ref{lt}), since tempered distributions are derivatives of
polynomially bounded continuous functions. We refer to \cite{vlad} for the
spaces of distributions, as well as for the Laplace and Fourier transforms
in such spaces. We use $C\left( \left[ 0,1\right] ,\mathcal{S}_{+}^{\prime
}\right) $ to denote the space of continuous functions on $\left[ 0,1\right]
$ with values in $\mathcal{S}_{+}^{\prime }.$

In the analysis that follows we shall need the properties of an analytic
function $M,$ defined in appropriate domain $\mathcal{V}\subset
\mathbb{C}
$
\begin{equation}
M\left( s\right) :=\sqrt{\frac{\int_{0}^{1}\phi _{\sigma }\left( \gamma
\right) s^{\gamma }\mathrm{d}\gamma }{\int_{0}^{1}\phi _{\varepsilon }\left(
\gamma \right) s^{\gamma }\mathrm{d}\gamma }},\;\;s\in \mathcal{V}.
\label{M}
\end{equation}
We shall have $\mathcal{V}=V.$ For the cases of constitutive functions (\ref%
{a-b}) and (\ref{zener}), $M$ has the respective forms
\begin{eqnarray}
M\left( s\right) &=&\sqrt{\frac{\ln \left( bs\right) }{\ln \left( as\right) }%
\frac{as-1}{bs-1}},\;\;s\in V,\;a\leq b,  \label{a-b-M} \\
M\left( s\right) &=&\sqrt{\frac{1+as^{\alpha }}{1+bs^{\alpha }}},\;\;s\in
V,\;\alpha \in \left( 0,1\right) ,\;a\leq b.  \label{zener-M}
\end{eqnarray}%
In our analysis the next function has a special role%
\begin{equation}
f\left( s\right) :=sM\left( s\right) \sinh \left( \kappa sM\left( s\right)
\right) +\kappa \cosh \left( \kappa sM\left( s\right) \right) ,\;\;s\in V.
\label{polovi-01}
\end{equation}%
As it will be seen from (\ref{P-tilda}) and (\ref{Q-tilda}), $f$ is a
denominator of functions $\tilde{P}$ and $\tilde{Q},$ which, after the
inversion of the Laplace transform, represent solution kernels of $u$ and $%
\sigma ,$ respectively.

We summarize all the assumptions used throughout the manuscript. Let $M$ be
of the form%
\begin{equation*}
M\left( s\right) =r\left( s\right) +\mathrm{i}h\left( s\right) ,\;\;\text{as}%
\;\;\left\vert s\right\vert \rightarrow \infty .
\end{equation*}
We assume:

\begin{enumerate}
\item[(A1)]
\begin{eqnarray*}
&&\lim_{\left\vert s\right\vert \rightarrow \infty }r\left( s\right)
=c_{\infty }>0,\;\;\lim_{\left\vert s\right\vert \rightarrow \infty }h\left(
s\right) =0,\;\;\lim_{\left\vert s\right\vert \rightarrow 0}M\left( s\right)
=c_{0}, \\
&&\text{for some constants}\;c_{\infty },c_{0}>0.
\end{eqnarray*}
\end{enumerate}

Let $s_{n}=\xi _{n}+\mathrm{i}\zeta _{n},$ $n\in
\mathbb{N}
,$ satisfy the equation%
\begin{equation}
f\left( s\right) =0,\;\;s\in V,  \label{polovi-0}
\end{equation}%
where $f$ is given by (\ref{polovi-01}).

\begin{enumerate}
\item[(A2)] There exists $n_{0}>0,$ such that for $n>n_{0}$%
\begin{eqnarray*}
&&\func{Im}s_{n}\in
\mathbb{R}
_{+}\Rightarrow h\left( s_{n}\right) \leq 0,\;\;\;\;\func{Im}s_{n}\in
\mathbb{R}
_{-}\Rightarrow h\left( s_{n}\right) \geq 0, \\
&&\text{where }h:=\func{Im}M.
\end{eqnarray*}

\item[(A3)] There exist $s_{0}>0$ and $c>0$ such that
\begin{equation*}
\left\vert \frac{\mathrm{d}}{\mathrm{d}s}(sM\left( s\right) )\right\vert
\geq c,\;\;\left\vert s\right\vert >s_{0}.
\end{equation*}

\item[(A4)] For every $\gamma >0$ there exists $\theta >0$ and $s_{0}$ such
that%
\begin{equation*}
\left\vert \left( s+\Delta s\right) M\left( s+\Delta s\right) -sM\left(
s\right) \right\vert \leq \gamma ,\;\;\text{if}\;\;\left\vert \Delta
s\right\vert <\theta \;\;\text{and}\;\;\left\vert s\right\vert >s_{0}.
\end{equation*}
\end{enumerate}

Alternatively to $\left( \mathrm{A2}\right) ,$ in Proposition \ref{pr-left},
we shall consider the following assumption.

\begin{enumerate}
\item[(B)] $\left\vert h\left( s\right) \right\vert \leq \frac{C}{\left\vert
s\right\vert },$ $\left\vert s\right\vert >s_{0},$ for some constants $C>0$
and $s_{0}>0.$
\end{enumerate}

It is shown in \cite{APZ-6} that $\left( \mathrm{A1}\right) $ - $\left(
\mathrm{A4}\right) $ hold for $M$ given by (\ref{a-b-M}) and (\ref{zener-M}).

\section{Theorems on the existence and uniqueness\label{thms}}

Our central results are stated in the next two theorems on the existence,
uniqueness and properties of $u$ and $\sigma .$ Recall, $f$ is given by (\ref%
{polovi-01}) and $s_{n},$ $n\in
\mathbb{N}
,$ are solutions of (\ref{polovi-0}).

\begin{theorem}
\label{thmP}Let $F\in \mathcal{S}_{+}^{\prime }$ and suppose that $M$
satisfies assumptions $\left( \mathrm{A1}\right) $ - $\left( \mathrm{A4}%
\right) .$ Then the unique solution $u$ to (\ref{sys-1}) - (\ref{BC}) is
given by%
\begin{equation}
u\left( x,t\right) =F\left( t\right) \ast P\left( x,t\right) ,\;\;x\in \left[
0,1\right] ,\;t>0,  \label{u}
\end{equation}%
where%
\begin{eqnarray}
P\left( x,t\right) &=&\frac{1}{\pi }\dint\nolimits_{0}^{\infty }\func{Im}%
\left( \frac{M\left( q\mathrm{e}^{-\mathrm{i}\pi }\right) \sinh \left(
\kappa xqM\left( q\mathrm{e}^{-\mathrm{i}\pi }\right) \right) }{qM\left( q%
\mathrm{e}^{-\mathrm{i}\pi }\right) \sinh \left( \kappa qM\left( q\mathrm{e}%
^{-\mathrm{i}\pi }\right) \right) +\kappa \cosh \left( \kappa qM\left( q%
\mathrm{e}^{-\mathrm{i}\pi }\right) \right) }\right) \frac{\mathrm{e}^{-qt}}{%
q}\mathrm{d}q  \notag \\
&&+2\sum_{n=1}^{\infty }\func{Re}\left( \func{Res}\left( \tilde{P}\left(
x,s\right) \mathrm{e}^{st},s_{n}\right) \right) ,\;\;x\in \left[ 0,1\right]
,\;t>0,  \label{P1} \\
P\left( x,t\right) &=&0,\;\;x\in \left[ 0,1\right] ,\;t<0.  \notag
\label{P0}
\end{eqnarray}%
The residues are given by%
\begin{equation}
\func{Res}\left( \tilde{P}\left( x,s\right) \mathrm{e}^{st},s_{n}\right) =%
\left[ \frac{1}{s}\frac{M\left( s\right) \sinh \left( \kappa xsM\left(
s\right) \right) }{\frac{\mathrm{d}}{\mathrm{d}s}f\left( s\right) }\mathrm{e}%
^{st}\right] _{s=s_{n}},\;\;x\in \left[ 0,1\right] ,\;t>0,  \label{res-P}
\end{equation}

Then $P\in C\left( \left[ 0,1\right] \times \left[ 0,\infty \right) \right) $
and $u\in C\left( \left[ 0,1\right] ,\mathcal{S}_{+}^{\prime }\right) .$ In
particular, if $F\in L_{loc}^{1}\left( \left[ 0,\infty \right) \right) ,$
then $u$ is continuous on $\left[ 0,1\right] \times \left[ 0,\infty \right)
. $
\end{theorem}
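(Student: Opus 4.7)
The proof proceeds in three main steps: first, Laplace-transform the system in $t$ to reduce it to a parameter-dependent ODE in $x$; second, solve this ODE explicitly and identify the factorization $\tilde u = \tilde F \tilde P$; third, invert the Laplace transform by deforming the Bromwich contour around the branch cut of $M$, thereby producing the representation (\ref{P1}) and verifying convergence and regularity.

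\textbf{Reduction to an ODE.} Since the initial data in (\ref{IC}) all vanish, applying $\mathcal{L}_t$ to (\ref{sys-2}) converts the distributed-order fractional derivatives into the multipliers $\int_{0}^{1}\phi(\gamma) s^{\gamma}\mathrm{d}\gamma$, giving $\tilde\sigma(x,s) = M(s)^{-2}\tilde\varepsilon(x,s)$. Combined with the transformed (\ref{sys-1}) this yields
\begin{equation*}
\frac{\partial^{2}}{\partial x^{2}}\tilde u(x,s) = \kappa^{2}s^{2}M(s)^{2}\,\tilde u(x,s),
\end{equation*}
whose general solution is $\tilde u = A\cosh(\kappa sM(s)x)+B\sinh(\kappa sM(s)x)$. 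The transformed boundary conditions in (\ref{BC}) give $A=0$ and, after a short computation using $\tilde\sigma = M^{-2}\partial_{x}\tilde u$, yield $B = M(s)\tilde F(s)/(s f(s))$, with $f$ as in (\ref{polovi-01}). Thus
\begin{equation*}
\tilde u(x,s) = \tilde F(s)\,\tilde P(x,s),\qquad \tilde P(x,s) = \frac{M(s)\sinh(\kappa x sM(s))}{s f(s)}.
\end{equation*}
Uniqueness at the level of Laplace transforms is automatic; the remaining task is to identify $P=\mathcal{L}^{-1}[\tilde P]$ and transfer the uniqueness back to the $t$-domain via injectivity of $\mathcal{L}$ on $\mathcal{S}_{+}^{\prime}$.

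\textbf{Laplace inversion.} The function $\tilde P(x,\cdot)$ is meromorphic in $V$: its poles are the zeros $s_{n}$ of $f$ and a simple factor $1/s$, but the singularity at $s=0$ is removable because $\sinh(\kappa x sM(s))/s\to \kappa x c_{0}$ by (A1) while $f(0)=\kappa$. The plan is to apply the Bromwich inversion formula on a vertical line to the right of all singularities and deform it into a keyhole contour around the cut $(-\infty,0]$ of $M$. The contribution of the two horizontal edges of the keyhole produces $2i\,\mathrm{Im}(\cdot)$ jumps across the cut, which after division by $2\pi i$ gives exactly the first term of (\ref{P1}); the residue theorem accounts for the poles $s_{n}$ swept in the deformation, and the simple-pole residue formula $\mathrm{Res}(g/f,s_{n}) = g(s_{n})/f'(s_{n})$ yields (\ref{res-P}). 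Vanishing of the contribution on the large arcs is the technical heart of the inversion, and rests on (A3)--(A4): (A3) keeps $sM(s)$ from being degenerate, and (A4) gives the uniform control of $sM(s)$ needed to see that the quotient $M\sinh/(sf)$ decays (the $\sinh/\cosh$ ratio tends to $\pm 1$ for $|\mathrm{Re}(\kappa sM(s))|\to\infty$, making $|\tilde P|=O(1/|s|^{2})$ there).

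\textbf{Convergence, regularity, and main obstacle.} The cut integral in (\ref{P1}) converges uniformly in $(x,t)\in[0,1]\times[0,T]$: by (A1) the integrand at $q\to\infty$ has real part $c_{\infty}$, so $\sinh/\cosh$ saturates and the integrand is $O(e^{-qt})$, while near $q=0$ the factor $\sinh(\kappa xqM)/q$ stays bounded. The residue sum converges thanks to (A2): once $\mathrm{Im}\,h(s_{n})$ is controlled by the sign of $\mathrm{Im}\,s_{n}$, one deduces $\mathrm{Re}\,s_{n}\le 0$ for all but finitely many $n$, so $|e^{s_{n}t}|$ decays, and the asymptotic behavior of $f'(s_{n})$ from (A3) keeps the residues summable. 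Joint continuity of $P$ on $[0,1]\times[0,\infty)$ then follows by dominated convergence, and the statements on $u$ reduce to standard convolution properties: $F\ast P$ makes sense in $\mathcal{S}_{+}^{\prime}$ for $F\in\mathcal{S}_{+}^{\prime}$, and it is jointly continuous in $(x,t)$ when $F\in L_{loc}^{1}$. The main obstacle I expect is the contour-deformation bookkeeping: justifying that the large arcs can be pushed to infinity along a subsequence of radii avoiding the poles, and that no poles are missed in the process. Precisely for this, (A2)--(A4) have been packaged so that the asymptotic distribution of the zeros $s_{n}$, together with the modulus estimates on $sM(s)$, can be combined into an effective Jordan-lemma argument; the rest, including uniqueness of $u$, follows formally.
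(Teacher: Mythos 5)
Your proposal follows essentially the same route as the paper: Laplace transform to the ODE $\frac{\partial ^{2}}{\partial x^{2}}\tilde{u}=\left( \kappa sM\left( s\right) \right) ^{2}\tilde{u}$, the factorization $\tilde{u}=\tilde{F}\tilde{P}$, and inversion over a keyhole contour whose branch-cut edges produce the integral term and whose poles $s_{n}$ produce the residue series, with $\left( \mathrm{A1}\right) $--$\left( \mathrm{A2}\right) $ forcing $\func{Re}s_{n}\leq 0$ and $\left( \mathrm{A3}\right) $ giving the $O(n^{-2})$ residue bound. The one step you flag but do not execute --- the uniform lower bound on $\left\vert f\right\vert $ along large arcs threading between the poles, which the paper supplies via Lemmas \ref{lema0} and \ref{lema} and Corollary \ref{zbognjegasveovo} --- is indeed the technical core, and your appeal to $\left( \mathrm{A3}\right) $--$\left( \mathrm{A4}\right) $ for it is exactly how the paper proceeds.
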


The following theorem is related to stress $\sigma .$ We formulate this
theorem with $F=H,$ where $H$ denotes the Heaviside function, while the more
general cases of $F$ are discussed in Remark \ref{o-efu}, below.

\begin{theorem}
\label{thmQ}Let $F=H$ and suppose that $M$ satisfies assumptions $\left(
\mathrm{A1}\right) $ - $\left( \mathrm{A4}\right) .$ Then the unique
solution $\sigma _{H}$ to (\ref{sys-1}) - (\ref{BC}), is given by%
\begin{eqnarray}
\sigma _{H}\left( x,t\right) &=&H\left( t\right) +\frac{\kappa }{\pi }%
\dint\nolimits_{0}^{\infty }\func{Im}\left( \frac{\cosh \left( \kappa
xqM\left( q\mathrm{e}^{\mathrm{i}\pi }\right) \right) }{qM\left( q\mathrm{e}%
^{\mathrm{i}\pi }\right) \sinh \left( \kappa qM\left( q\mathrm{e}^{\mathrm{i}%
\pi }\right) \right) +\kappa \cosh \left( \kappa qM\left( q\mathrm{e}^{%
\mathrm{i}\pi }\right) \right) }\right) \frac{\mathrm{e}^{-qt}}{q}\mathrm{d}q
\notag \\
&&+2\sum_{n=1}^{\infty }\func{Re}\left( \func{Res}\left( \tilde{\sigma}%
_{H}\left( x,s\right) \mathrm{e}^{st},s_{n}\right) \right) ,\;\;x\in \left[
0,1\right] ,\;t>0,  \label{Q1} \\
\sigma _{H}\left( x,t\right) &=&0,\;\;x\in \left[ 0,1\right] ,\;t<0.
\label{Q0}
\end{eqnarray}%
The residues are given by%
\begin{equation}
\func{Res}\left( \sigma _{H}\left( x,s\right) \mathrm{e}^{st},s_{n}\right) =%
\left[ \frac{\kappa \cosh \left( \kappa xsM\left( s\right) \right) }{s\frac{%
\mathrm{d}}{\mathrm{d}s}f\left( s\right) }\mathrm{e}^{st}\right]
_{s=s_{n}},\;\;x\in \left[ 0,1\right] ,\;t>0.  \label{res-Q}
\end{equation}%
In particular, $\sigma _{H}$ is continuous on $\left[ 0,1\right] \times %
\left[ 0,\infty \right) .$
\end{theorem}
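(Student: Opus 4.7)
The plan is to follow the same Laplace-transform plus contour-inversion strategy that gives Theorem \ref{thmP}, but now applied to the stress. First I would apply the Laplace transform in $t$ to (\ref{sys-1})--(\ref{BC}) with the homogeneous initial data (\ref{IC}). Using the definition (\ref{M}), the constitutive equation (\ref{sys-2}) becomes $\tilde{\sigma}(x,s)=M^{-2}(s)\,\partial_x\tilde{u}(x,s)$, which combined with the transformed equation of motion yields the ODE $\partial_x^2\tilde{u}=\kappa^2 s^2 M^2(s)\,\tilde{u}$ on $[0,1]$. The boundary condition $\tilde{u}(0,s)=0$ selects $\tilde{u}(x,s)=A(s)\sinh(\kappa x s M(s))$, and the remaining boundary condition at $x=1$, together with $\tilde{H}(s)=1/s$, fixes $A(s)$ so that
\[
\tilde{\sigma}_H(x,s) \;=\; \frac{\kappa\,\cosh(\kappa x s M(s))}{s\,f(s)},\qquad s\in V,
\]
with $f$ as in (\ref{polovi-01}). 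This step occupies \S\,\ref{prufs-0}.

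Next, I would invert this Laplace transform by a Bromwich contour closed to the left by a large circle of radius $R$, indented along the branch cut $(-\infty,0]$ of $M$ in Hankel fashion and avoiding $s=0$ by a small circle. The enclosed singularities are the simple pole at $s=0$, where by (A1) we have $sM(s)\to 0$ and $f(s)\to\kappa$, so the residue equals $1$ and produces the leading $H(t)$ term; and the zeros $s_n$ of $f$, which by (A2) lie in the closed left half-plane for $n>n_0$ and provide the residue sum in (\ref{Q1}), with each residue computed by (\ref{res-Q}) from $\tilde{\sigma}_H(x,s)/f'(s)$ after grouping complex-conjugate pairs into $2\,\func{Re}$. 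The two horizontal pieces of the contour above and below the cut combine, via the jump of $M(q\mathrm{e}^{\pm\mathrm{i}\pi})$, into the real integral of $\func{Im}$ along $(0,\infty)$.

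The main obstacle is estimating the large circular arcs and proving convergence of the residue series. On the arcs, (A1) gives $M(s)\to c_\infty$, so the numerator $\cosh(\kappa x s M(s))$ and the denominator $f(s)$ are, up to bounded factors, both of order $\mathrm{e}^{|\kappa x s M(s)|}$, producing the decay $1/|sM(s)|$ needed for Jordan's lemma; assumption (A4) gives the uniform continuity of $sM(s)$ required to make this rigorous, while (A3) bounds $|f'(s)|$ from below so that arcs can be chosen away from small neighbourhoods of the $s_n$. Asymptotic analysis of $f(s)=0$ gives $\kappa s_n M(s_n)\sim\mathrm{i}\pi(n+\tfrac12)$, hence $|s_n|\to\infty$ and, in view of (A2), $\func{Re} s_n\le 0$ for $n>n_0$, so $|\mathrm{e}^{s_n t}|\le 1$ for $t\ge 0$, and the lower bound from (A3) on $|f'(s_n)|$ is strong enough for absolute convergence of the series uniformly on $[0,1]\times[t_0,\infty)$ for each $t_0>0$.

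Finally, uniqueness in $C([0,1],\mathcal{S}_+^{\prime})$ follows from the injectivity of the Laplace transform on $\mathcal{S}_+^{\prime}$: any other solution would have a Laplace transform solving the same ODE-plus-boundary problem in $x$ with parameter $s$, hence coincide with $\tilde{\sigma}_H$. Continuity of $\sigma_H$ on $[0,1]\times[0,\infty)$ is read off the explicit representation: the improper $q$-integral converges absolutely and uniformly in $(x,t)$ because of the $\mathrm{e}^{-qt}/q$ weight combined with the bound on $|M(q\mathrm{e}^{\pm\mathrm{i}\pi})|$ coming from (A1), the residue sum converges uniformly by the preceding paragraph, and the $H(t)$ term carries the jump imposed by the sudden Heaviside load at $t=0$, giving the stated continuity for $t>0$.
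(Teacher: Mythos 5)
Your strategy is the same as the paper's: the identical Laplace--transform computation leading to $\tilde{\sigma}_{H}\left( x,s\right) =\kappa \cosh \left( \kappa xsM\left( s\right) \right) /\left( sf\left( s\right) \right) $, the same branch-cut contour whose small circle around the origin contributes the $H\left( t\right) $ term, the same Hankel integrals along the two sides of the cut, and the same residue sum over the zeros of $f$, with uniqueness from injectivity of the Laplace transform. There is, however, a concrete error that turns into a gap in the convergence of the residue series. The zeros satisfy $s_{n}M\left( s_{n}\right) =\mathrm{i}w_{n}$ with $\tan \left( \kappa w_{n}\right) =\kappa /w_{n}$, so $\kappa w_{n}\rightarrow n\pi $ (zeros of the tangent), not $\kappa s_{n}M\left( s_{n}\right) \sim \mathrm{i}\pi \left( n+\tfrac{1}{2}\right) $ as you assert. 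This matters because absolute convergence of $\sum \func{Res}$ is \emph{not} a consequence of a constant lower bound on $\left\vert f^{\prime }\left( s_{n}\right) \right\vert $: since
\begin{equation*}
f^{\prime }\left( s\right) =\left( \left( 1+\kappa ^{2}\right) \sinh \left(
\kappa sM\left( s\right) \right) +\kappa sM\left( s\right) \cosh \left(
\kappa sM\left( s\right) \right) \right) \frac{\mathrm{d}}{\mathrm{d}s}
\left( sM\left( s\right) \right) ,
\end{equation*}
the bracket at $s=s_{n}$ equals $\mathrm{i}\left( \left( 1+\kappa
^{2}\right) \sin \left( \kappa w_{n}\right) +\kappa w_{n}\cos \left( \kappa
w_{n}\right) \right) ,$ and one needs $\cos \left( \kappa w_{n}\right)
\rightarrow \pm 1$ (true precisely because $\kappa w_{n}\rightarrow n\pi $)
to conclude $\left\vert f^{\prime }\left( s_{n}\right) \right\vert \gtrsim
cn $; combined with $\left\vert s_{n}\right\vert \sim n\pi /\left( c_{\infty
}\kappa \right) $ this gives residues of size $O\left( 1/n^{2}\right) .$
With your asymptotics one would have $\cos \left( \kappa w_{n}\right)
\rightarrow 0$ and a merely bounded bracket, hence residues of size $O\left(
1/n\right) $ only, and the claimed absolute convergence would fail. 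The
growth of the bracket, not assumption $\left( \mathrm{A3}\right) $ alone
(which only bounds $\left\vert \frac{\mathrm{d}}{\mathrm{d}s}\left( sM\left(
s\right) \right) \right\vert $ from below), is the missing ingredient.

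A secondary looseness: on the large arcs the denominator $f\left( s\right) $
is comparable to $\left\vert sM\left( s\right) \right\vert \mathrm{e}
^{\kappa \left\vert \func{Re}\left( sM\left( s\right) \right) \right\vert }$
only where $\left\vert \func{Re}\left( sM\left( s\right) \right)
\right\vert $ is bounded away from zero; near the imaginary axis this fails
and $f$ can be small even for $\left\vert s\right\vert $ large. Your remark
that the arcs must avoid neighbourhoods of the $s_{n}$ points in the right
direction, but the actual bound requires the dichotomy of Lemma \ref{lema0}
and Corollary \ref{IIzbognjegasveovo} (either $\left\vert \func{Re}\left(
sM\right) \right\vert >d$ and the $\sinh $ dominates, or $\left\vert
f\right\vert \geq p_{2\varepsilon }$ away from the zeros), together with the
choice of radii $R$ interlacing the poles; the statement that numerator and
denominator are both of order $\mathrm{e}^{\left\vert \kappa xsM\left(
s\right) \right\vert }$ is not correct as written.
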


\begin{remark}
\label{o-efu}\qquad

\begin{enumerate}
\item The assumption $F=H$ in Theorem \ref{thmQ} can be relaxed by requiring
that $F$ is locally integrable and
\begin{equation*}
\tilde{F}\left( s\right) \approx \frac{1}{s^{\alpha }},\;\;\text{as}%
\;\;\left\vert s\right\vert \rightarrow \infty ,
\end{equation*}%
for some $\alpha \in \left( 0,1\right) .$ This condition ensures the
convergence of the series in (\ref{Q1}).

\item If $F=\delta ,$ or even $F\left( t\right) =\frac{\mathrm{d}^{k}}{%
\mathrm{d}t^{k}}\delta \left( t\right) ,$ one uses $\sigma _{H},$ given by (%
\ref{Q1}), in order to obtain $\sigma $ as the $k+1$-th distributional
derivative:%
\begin{equation*}
\sigma =\frac{\mathrm{d}^{k+1}}{\mathrm{d}t^{k+1}}\sigma _{H}\in C\left( %
\left[ 0,1\right] ,\mathcal{S}_{+}^{\prime }\right) .
\end{equation*}
\end{enumerate}
\end{remark}

\section{Proofs of Theorems \protect\ref{thmP} and \protect\ref{thmQ} \label%
{prufs-0}}

Theorems \ref{thmP} and \ref{thmQ} will be proved in two steps.

\begin{proof1}[Step 1]
Applying formally the Laplace transform to (\ref{sys-1}) - (\ref{BC}), we
obtain%
\begin{gather}
\frac{\partial }{\partial x}\tilde{\sigma}\left( x,s\right) =\kappa ^{2}s^{2}%
\tilde{u}\left( x,s\right) ,\;\;\;\;\tilde{\varepsilon}\left( x,s\right) =%
\frac{\partial }{\partial x}\tilde{u}\left( x,s\right) ,\;\;x\in \left[ 0,1%
\right] ,\;s\in V,  \label{S-LT-1} \\
\tilde{\sigma}\left( x,s\right) \int_{0}^{1}\phi _{\sigma }\left( \gamma
\right) s^{\gamma }\mathrm{d}\gamma =\tilde{\varepsilon}\left( x,s\right)
\int_{0}^{1}\phi _{\varepsilon }\left( \gamma \right) s^{\gamma }\mathrm{d}%
\gamma ,\;\;x\in \left[ 0,1\right] ,\;s\in V,  \label{S-LT-2} \\
\tilde{u}\left( 0,s\right) =0,\;\;\tilde{\sigma}\left( 1,s\right) +s^{2}%
\tilde{u}\left( 1,s\right) =\tilde{F}\left( s\right) ,\;\;s\in V.
\label{S-LT-3}
\end{gather}%
By (\ref{S-LT-2}) and (\ref{M}), we have
\begin{equation}
\tilde{\sigma}\left( x,s\right) =\frac{1}{M^{2}\left( s\right) }\tilde{%
\varepsilon}\left( x,s\right) ,\;\;x\in \left[ 0,1\right] ,\;s\in V.
\label{sigma-tilda}
\end{equation}

In order to obtain the displacement $u,$ we use (\ref{S-LT-1}) and (\ref%
{sigma-tilda}) to obtain%
\begin{equation}
\frac{\partial ^{2}}{\partial x^{2}}\tilde{u}\left( x,s\right) -\left(
\kappa sM\left( s\right) \right) ^{2}\tilde{u}\left( x,s\right) =0,\;\;x\in %
\left[ 0,1\right] ,\;s\in V.  \label{zvezda}
\end{equation}%
The solution of (\ref{zvezda}) is%
\begin{equation*}
\tilde{u}\left( x,s\right) =C_{1}\left( s\right) \mathrm{e}^{\kappa
xsM\left( s\right) }+C_{2}\left( s\right) \mathrm{e}^{-\kappa xsM\left(
s\right) },\;\;x\in \left[ 0,1\right] ,\;s\in V,
\end{equation*}%
where $C_{1}$ and $C_{2}$ are arbitrary functions which are determined from (%
\ref{S-LT-3})$_{1}$ as $2C=C_{1}=-C_{2}.$ Therefore,%
\begin{equation}
\tilde{u}\left( x,s\right) =C\left( s\right) \sinh \left( \kappa xsM\left(
s\right) \right) ,\;\;x\in \left[ 0,1\right] ,\;s\in V.  \label{u-tilda-1}
\end{equation}%
By (\ref{M}), (\ref{S-LT-1})$_{2}$, (\ref{sigma-tilda}) and (\ref{u-tilda-1}%
) we have
\begin{equation}
\tilde{\sigma}\left( x,s\right) =C\left( s\right) \frac{\kappa s}{M\left(
s\right) }\cosh \left( \kappa xsM\left( s\right) \right) ,\;\;x\in \left[ 0,1%
\right] ,\;s\in V.  \label{sigma-tilda-1}
\end{equation}%
Using (\ref{u-tilda-1}) and (\ref{sigma-tilda-1}) at $x=1,$ by (\ref{S-LT-3})%
$_{2},$ we obtain%
\begin{equation*}
C\left( s\right) =\frac{M\left( s\right) \tilde{F}\left( s\right) }{s\left(
sM\left( s\right) \sinh \left( \kappa sM\left( s\right) \right) +\kappa
\cosh \left( \kappa sM\left( s\right) \right) \right) },\;\;s\in V.
\end{equation*}%
Therefore, the Laplace transforms of displacement (\ref{u-tilda-1}) and
stress (\ref{sigma-tilda-1}) are
\begin{equation}
\tilde{u}\left( x,s\right) =\tilde{F}\left( s\right) \tilde{P}\left(
x,s\right) \;\;\text{and}\;\;\tilde{\sigma}\left( x,s\right) =\tilde{F}%
\left( s\right) \tilde{Q}\left( x,s\right) ,\;\;x\in \left[ 0,1\right]
,\;s\in V,  \label{u,sigma-tilda}
\end{equation}%
where%
\begin{eqnarray}
\tilde{P}\left( x,s\right) &=&\frac{1}{s}\frac{M\left( s\right) \sinh \left(
\kappa xsM\left( s\right) \right) }{sM\left( s\right) \sinh \left( \kappa
sM\left( s\right) \right) +\kappa \cosh \left( \kappa sM\left( s\right)
\right) },\;\;x\in \left[ 0,1\right] ,\;s\in V,  \label{P-tilda} \\
\tilde{Q}\left( x,s\right) &=&\frac{\kappa \cosh \left( \kappa xsM\left(
s\right) \right) }{sM\left( s\right) \sinh \left( \kappa sM\left( s\right)
\right) +\kappa \cosh \left( \kappa sM\left( s\right) \right) },\;\;x\in %
\left[ 0,1\right] ,\;s\in V.  \label{Q-tilda}
\end{eqnarray}%
Applying the inverse Laplace transform to (\ref{u,sigma-tilda}) we obtain $u$
in the form (\ref{u}), while by using $F=H,$ i.e., $\tilde{F}\left( s\right)
=\frac{1}{s},$ in (\ref{u,sigma-tilda}) we obtain $\sigma _{H}$ in the form (%
\ref{Q1}), (\ref{Q0}).
\end{proof1}

In the sequel we shall justify the formal calculation given above.

\subsection{Auxiliary results\label{AR}}

\subsubsection{Zeros of the function $f$ \label{zeros}}

The following propositions establish the location and the multiplicity of
poles of functions $\tilde{P}$ and $\tilde{Q},$ given by (\ref{P-tilda}) and
(\ref{Q-tilda}), respectively.

\begin{proposition}
\label{pr-conj}Assume $\left( \mathrm{A1}\right) .$ Equation (\ref{polovi-0}%
) has countably many solutions $s_{n},$ $n\in
\mathbb{N}
,$ with the properties
\begin{equation}
s_{n}M\left( s_{n}\right) =\mathrm{i}w_{n},\;\;\tan \left( \kappa
w_{n}\right) =\frac{\kappa }{w_{n}},\;\;w_{n}\in
\mathbb{R}
,\;w_{n}\neq 0.  \label{sn}
\end{equation}%
Complex conjugate $\bar{s}_{n}$ also satisfies (\ref{polovi-0}), $n\in
\mathbb{N}
$.
\end{proposition}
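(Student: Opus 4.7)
The plan is to substitute $z = sM(s)$, reducing equation (\ref{polovi-0}) to
\begin{equation*}
g(z) := z\sinh(\kappa z) + \kappa\cosh(\kappa z) = 0,
\end{equation*}
then to (i) describe the zero set of $g$, and (ii) lift each zero back to a root of $f$ via the map $s \mapsto sM(s)$. Since $g(0) = \kappa \neq 0$, and since $\cosh(\kappa z) = 0$ forces $\sinh(\kappa z) = \pm \mathrm{i}$ and hence $g(z) \neq 0$, every root of $g$ satisfies $\cosh(\kappa z) \neq 0$, so the equation is equivalent to $z\tanh(\kappa z) = -\kappa$.

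The crux is showing that every such $z$ is purely imaginary. Writing $z = u + \mathrm{i}v$ with $u,v\in\mathbb{R}$ and using
\begin{equation*}
\tanh(\kappa z) = \frac{\sinh(2\kappa u) + \mathrm{i}\sin(2\kappa v)}{\cosh(2\kappa u) + \cos(2\kappa v)},
\end{equation*}
the imaginary part of $z\tanh(\kappa z) = -\kappa$ reads $v\sinh(2\kappa u) + u\sin(2\kappa v) = 0$. If $u \neq 0$ and $v \neq 0$, this rearranges to $\sinh(2\kappa u)/u = -\sin(2\kappa v)/v$, which is impossible because $\sinh(y)/y > 1$ for real $y \neq 0$ (so the left side strictly exceeds $2\kappa$), while $|\sin(2\kappa v)/v| \leq 2\kappa$. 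The remaining case $u \neq 0$, $v = 0$ is excluded because the real-part equation then reduces (after clearing $2\cosh(\kappa u)>0$) to $u\sinh(\kappa u) + \kappa\cosh(\kappa u) = 0$, whose left side is strictly positive for every real $u$.

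Setting $z = \mathrm{i}w$ with $w \in \mathbb{R}$, the equation $g(\mathrm{i}w) = -w\sin(\kappa w) + \kappa\cos(\kappa w) = 0$ is equivalent to $\tan(\kappa w) = \kappa/w$ with $w \neq 0$ (the cases $w = 0$ and $\cos(\kappa w) = 0$ being ruled out by direct substitution into $g$). This classical transcendental equation has countably many real solutions $\{w_n\}_{n \in \mathbb{N}}$, arising in $\pm$ pairs by its oddness in $w$.

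It remains to lift each $w_n$ to a zero $s_n \in V$ of $f$ by solving $sM(s) = \mathrm{i}w_n$. Assumption $(\mathrm{A1})$ yields $sM(s) \sim c_\infty s$ as $|s|\to\infty$; combined with the lower bound $(\mathrm{A3})$ on $|\frac{\mathrm{d}}{\mathrm{d}s}(sM(s))|$ and the uniform-continuity estimate $(\mathrm{A4})$, a standard inverse-function argument in $V$ produces a unique $s_n$ for every $w_n$ of sufficiently large modulus, with the finitely many remaining small $w_n$ handled directly using $M(s)\to c_0$ as $|s|\to 0$ from the second half of $(\mathrm{A1})$. The complex-conjugate statement is immediate from $M(\bar s) = \overline{M(s)}$ (itself a consequence of the real-valuedness of $\phi_\sigma$ and $\phi_\varepsilon$), which gives $f(\bar s) = \overline{f(s)}$ and hence $f(\bar s_n) = 0$. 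The main obstacle is the imaginarity step; once $\func{Re} z = 0$ is forced via the comparison $\sinh(y)/y > 1 \geq |\sin(y)/y|$ for real $y$, everything else is bookkeeping.
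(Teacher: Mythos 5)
Your proof is correct, and its skeleton matches the paper's: reduce $f(s)=0$ to a statement about $z=sM(s)$, show $z$ must be purely imaginary, deduce the tangent equation, and use $M(\bar s)=\overline{M(s)}$ for the conjugate claim. The one genuine difference is how the imaginarity of $z$ is forced. The paper rewrites $f(s)=0$ as $\mathrm{e}^{2\kappa z}=\frac{z-\kappa}{z+\kappa}$ and compares moduli: for $\func{Re}z<0$ the left side is $<1$ and the right side is $>1$, and vice versa for $\func{Re}z>0$. You instead divide by $\cosh(\kappa z)$ (legitimately, after checking it cannot vanish at a root), take the imaginary part of $z\tanh(\kappa z)=-\kappa$, and invoke the comparison $\sinh(y)/y>1\geq\left\vert \sin(y)/y\right\vert$ for real $y\neq 0$. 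Both arguments are elementary and airtight; the paper's modulus trick avoids the explicit $\tanh$ addition formula and the separate treatment of the real-axis case $v=0$, $u\neq 0$, while yours makes the quantitative mechanism (the strict gap between $\sinh(y)/y$ and $\sin(y)/y$) visible. One further point in your favour: you explicitly flag that each $w_{n}$ must be lifted back to an $s_{n}\in V$ solving $sM(s)=\mathrm{i}w_{n}$, and sketch how $\left(\mathrm{A1}\right)$, $\left(\mathrm{A3}\right)$, $\left(\mathrm{A4}\right)$ yield this; the paper passes over this surjectivity step silently ("Hence, we have (\ref{sn})"), so your version is, if anything, slightly more careful on the existence half of the claim, even though the inverse-function argument is only sketched rather than carried out.
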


Note $\func{Im}s_{n}\neq 0$ in (\ref{sn}), so all the solutions belong to $%
V. $

\begin{proof}
We seek for the solutions of (\ref{polovi-0}), or equivalently of%
\begin{equation}
\mathrm{e}^{2\kappa sM\left( s\right) }=\frac{sM\left( s\right) -\kappa }{%
sM\left( s\right) +\kappa },\;\;s\in V.  \label{polovi}
\end{equation}%
Put $sM\left( s\right) =v\left( s\right) +\mathrm{i}w\left( s\right) ,$ $%
s\in V,$ where $v,w$ are real-valued functions. Taking the modulus of (\ref%
{polovi}), we obtain%
\begin{equation}
\mathrm{e}^{2\kappa v}=\frac{\left( v-\kappa \right) ^{2}+w^{2}}{\left(
v+\kappa \right) ^{2}+w^{2}}.  \label{polovi-im-1}
\end{equation}%
Fix $w$ and let $v<0.$ Then%
\begin{equation*}
\mathrm{e}^{2\kappa v}<1\;\;\text{and}\;\;\frac{\left( v-\kappa \right)
^{2}+w^{2}}{\left( v+\kappa \right) ^{2}+w^{2}}>1.
\end{equation*}%
Now let $v>0.$ Then%
\begin{equation*}
\mathrm{e}^{2\kappa v}>1\;\;\text{and}\;\;\frac{\left( v-\kappa \right)
^{2}+w^{2}}{\left( v+\kappa \right) ^{2}+w^{2}}<1.
\end{equation*}%
Thus, in both cases we have a contradiction and we conclude that the
solutions to (\ref{polovi-im-1}) satisfy $v(s)=0.$ Therefore, the solutions
of (\ref{polovi}) satisfy%
\begin{equation*}
sM\left( s\right) =\mathrm{i}w\left( s\right) ,\;\;s\in V.
\end{equation*}%
Inserting this into (\ref{polovi-0}) yields%
\begin{equation}
\tan \left( \kappa w\right) =\frac{\kappa }{w},\;\;w\in
\mathbb{R}
.  \label{v,w}
\end{equation}%
Since the tangent function is periodic, we conclude that there are countably
many values of $w,$ denoted by $w_{n},$ $n\in
\mathbb{N}
,$ satisfying (\ref{v,w}). Hence, we have (\ref{sn}).

In order to prove that the solutions $s_{n}\in V$ of (\ref{polovi-0}) are
complex conjugated we note that $M\left( \bar{s}\right) =\overline{M\left(
s\right) },$ $s\in V.$ By (\ref{sn}), $\bar{s}_{n}M\left( \bar{s}_{n}\right)
=\overline{s_{n}M\left( s_{n}\right) }=-\mathrm{i}w_{n}.$ Thus, $\bar{s}_{n}$
also solves (\ref{polovi-0}).
\end{proof}

\begin{proposition}
\label{pr-conj-besk}Assume $\left( \mathrm{A1}\right) .$ Positive (negative)
solutions of $\tan \left( \kappa w_{n}\right) =\frac{\kappa }{w_{n}}$
satisfy $w_{n}\approx \frac{n\pi }{\kappa }$ ($w_{n}\approx -\frac{n\pi }{%
\kappa }$) as $n\rightarrow \infty .$
\end{proposition}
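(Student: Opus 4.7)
The plan is to analyze the transcendental equation $\tan(\kappa w) = \kappa/w$ asymptotically, exploiting the fact that the right-hand side tends to zero as $|w|\to\infty$, so that the solutions must approach zeros of $\tan(\kappa w)$, namely $w = n\pi/\kappa$.

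First I would localize the roots. The function $\tan(\kappa w)$ has vertical asymptotes at $w = (2k+1)\pi/(2\kappa)$, $k\in\mathbb{Z}$, and between consecutive asymptotes it is continuous and strictly increasing from $-\infty$ to $+\infty$, vanishing at $w = n\pi/\kappa$. On the other hand, for $w>0$ the map $w\mapsto \kappa/w$ is continuous, strictly decreasing and positive, tending to $0$ as $w\to\infty$. On the interval $I_n := (n\pi/\kappa,(2n+1)\pi/(2\kappa))$, with $n$ large, $\tan(\kappa w)$ runs from $0$ to $+\infty$ while $\kappa/w$ is a small positive number; hence, by the intermediate value theorem together with strict monotonicity, there is exactly one solution $w_n \in I_n$. (One checks easily that for sufficiently large $n$ no other root lies in the complementary half-interval $((2n-1)\pi/(2\kappa), n\pi/\kappa)$, since there $\tan(\kappa w) < 0 < \kappa/w$.) By the oddness of both sides in $w$, the negative solutions are just $-w_n$.

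Next I would extract the asymptotics. Writing $w_n = n\pi/\kappa + \delta_n$ with $\delta_n \in (0,\pi/(2\kappa))$, I would substitute into the equation to obtain
\begin{equation*}
\tan(\kappa\delta_n) = \frac{\kappa}{n\pi/\kappa + \delta_n} = \frac{\kappa^2}{n\pi}\Bigl(1 + O(1/n)\Bigr).
\end{equation*}
Since the right-hand side tends to $0$, so does $\tan(\kappa\delta_n)$, and consequently $\delta_n \to 0$. Inverting via the Taylor expansion $\tan(\kappa\delta_n) = \kappa\delta_n + O(\delta_n^3)$ gives $\delta_n = \kappa/(n\pi) + O(1/n^3)$, and in particular $w_n = n\pi/\kappa + O(1/n)$, which yields $w_n \approx n\pi/\kappa$ as $n\to\infty$. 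The analogous statement for the negative branch follows from the symmetry noted above.

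There is no serious obstacle here: the argument is a textbook asymptotic analysis of a monotone transcendental equation, and assumption $(\mathrm{A1})$ is not actually needed for this sub-step — it only enters earlier, via Proposition \ref{pr-conj}, to guarantee that $s_n M(s_n) = \mathrm{i}w_n$ with real $w_n$ satisfying $\tan(\kappa w_n) = \kappa/w_n$. The only minor care required is ruling out spurious roots in the wrong sub-interval and confirming that each large $n$ contributes exactly one root, so that the indexing $w_n \approx n\pi/\kappa$ is legitimate.
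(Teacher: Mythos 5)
Your proof is correct and follows essentially the same route as the paper: since $\kappa/w$ tends to zero, the roots must track the zeros $n\pi/\kappa$ of the tangent, with the sign symmetry handling the negative branch. The paper states this in one sentence; you supply the localization between consecutive zeros and asymptotes and the quantitative correction $\delta_n = \kappa/(n\pi)+O(1/n^{3})$, which only sharpens the same argument (and your observation that $(\mathrm{A1})$ is not actually used here is accurate).
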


\begin{proof}
As we noted, if $w_{n}$ satisfies (\ref{sn})$,$ then $-w_{n}$ also satisfies
(\ref{sn})$.$ Since $\frac{\kappa }{w_{n}}$ monotonically decreases to zero
for all $w_{n}>0,$ by (\ref{sn}), we have that $\kappa w_{n}$ behave as
zeros of the tangent function, i.e., that
\begin{equation*}
w_{n}\approx \frac{n\pi }{\kappa }\;\;(w_{n}\approx -\frac{n\pi }{\kappa }%
)\;\;\text{as}\;\;n\rightarrow \infty .
\end{equation*}
\end{proof}

\begin{proposition}
\label{pr-left}Assume $\left( \mathrm{A1}\right) ,$ $\left( \mathrm{A2}%
\right) ,$ or $\left( \mathrm{A1}\right) ,$ $\left( \mathrm{B}\right) .$
Then there exist $\xi _{0}>0$ and $n_{0}\in \mathbb{N}$ so that the real
part of $s_{n},$ $n\in
\mathbb{N}
,$ denoted by $\xi _{n}$ satisfies $\xi _{n}<\xi _{0},$ $n>n_{0}.$ Moreover,
if we assume additionally $\left( \mathrm{A3}\right) ,$ then the solutions $%
s_{n}$ of (\ref{polovi-0}) are of multiplicity one for $n>n_{0}.$
\end{proposition}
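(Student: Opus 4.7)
My plan is to leverage Propositions \ref{pr-conj} and \ref{pr-conj-besk}, from which we already know $s_n M(s_n) = \mathrm{i}w_n$ with $w_n \in \mathbb{R}\setminus\{0\}$ and $|w_n|\sim n\pi/\kappa\to\infty$. Writing $M(s_n)=r(s_n)+\mathrm{i}h(s_n)$, a single algebraic manipulation gives
\begin{equation*}
s_n \;=\; \frac{\mathrm{i}w_n}{M(s_n)} \;=\; \frac{w_n h(s_n)}{r(s_n)^2+h(s_n)^2}\;+\;\mathrm{i}\,\frac{w_n r(s_n)}{r(s_n)^2+h(s_n)^2},
\end{equation*}
so that $\xi_n=\operatorname{Re}s_n = w_n h(s_n)/(r(s_n)^2+h(s_n)^2)$ and $\zeta_n=\operatorname{Im}s_n = w_n r(s_n)/(r(s_n)^2+h(s_n)^2)$. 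Since $|s_n|=|w_n|/|M(s_n)|$ and, by $(\mathrm{A1})$, $|M(s_n)|\to c_\infty>0$, the relation $|w_n|\to\infty$ forces $|s_n|\to\infty$, so the asymptotic hypotheses on $r,h$ apply at $s_n$ for large $n$.

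For the upper bound on $\xi_n$ I split into the two alternative assumptions. Under $(\mathrm{A2})$: for $n$ large $r(s_n)>0$, so $\operatorname{sgn}\zeta_n=\operatorname{sgn}w_n$; if $\zeta_n>0$ then $w_n>0$ and $h(s_n)\leq 0$, giving $\xi_n\leq 0$, while if $\zeta_n<0$ then $w_n<0$ and $h(s_n)\geq 0$, again $\xi_n\leq 0$. Under $(\mathrm{B})$: $|h(s_n)|\leq C/|s_n| = C\sqrt{r(s_n)^2+h(s_n)^2}/|w_n|$, whence
\begin{equation*}
|\xi_n| \;=\; \frac{|w_n|\,|h(s_n)|}{r(s_n)^2+h(s_n)^2} \;\leq\; \frac{C}{\sqrt{r(s_n)^2+h(s_n)^2}} \;\longrightarrow\; \frac{C}{c_\infty}.
\end{equation*}
In either case $\xi_n$ is bounded above for $n>n_0$, so choosing $\xi_0$ slightly larger than the bound yields the first claim.

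For the simplicity assertion, set $g(s):=sM(s)$ and differentiate
\begin{equation*}
f'(s) \;=\; g'(s)\bigl[(1+\kappa^2)\sinh(\kappa g(s)) + \kappa g(s)\cosh(\kappa g(s))\bigr].
\end{equation*}
At $s=s_n$, $g(s_n)=\mathrm{i}w_n$, hence $\sinh(\kappa g(s_n))=\mathrm{i}\sin(\kappa w_n)$ and $\cosh(\kappa g(s_n))=\cos(\kappa w_n)$. Using (\ref{sn}), i.e.\ $\sin(\kappa w_n)=(\kappa/w_n)\cos(\kappa w_n)$, the bracket factors as
\begin{equation*}
\mathrm{i}\cos(\kappa w_n)\Bigl[(1+\kappa^2)\tfrac{\kappa}{w_n}+\kappa w_n\Bigr] \;=\; \mathrm{i}\,\frac{\kappa(1+\kappa^2+w_n^2)}{w_n}\cos(\kappa w_n).
\end{equation*}
Since $\tan(\kappa w_n)=\kappa/w_n$ is finite we have $\cos(\kappa w_n)\neq 0$, and $1+\kappa^2+w_n^2>0$, $w_n\neq 0$; combined with $(\mathrm{A3})$, which guarantees $g'(s_n)\neq 0$ once $|s_n|>s_0$, this shows $f'(s_n)\neq 0$, so $s_n$ is simple for $n>n_0$.

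The only non-routine step is the upper bound on $\xi_n$: one must confirm that $|s_n|\to\infty$ so that the asymptotic statements in $(\mathrm{A1})$, $(\mathrm{A2})$, $(\mathrm{B})$ actually apply at the specific points $s_n$, and that the case distinction on $\operatorname{sgn}\zeta_n$ is exhaustive; both follow directly from $|w_n|\to\infty$ and $r(s_n)\to c_\infty>0$.
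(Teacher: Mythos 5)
Your proof is correct and follows essentially the same route as the paper: extracting $\xi_n$ and $\zeta_n$ from $s_nM(s_n)=\mathrm{i}w_n$, using the sign condition in $(\mathrm{A2})$ (resp.\ the decay in $(\mathrm{B})$) to bound $\xi_n$, and differentiating $f$ to get simplicity. In fact you are slightly more complete than the paper at the last step, since you explicitly verify via $\tan(\kappa w_n)=\kappa/w_n$ that the factor $(1+\kappa^2)\sin(\kappa w_n)+\kappa w_n\cos(\kappa w_n)$ is nonzero, a point the paper leaves implicit.
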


\begin{proof}
Assume $\left( \mathrm{A1}\right) $ and $\left( \mathrm{A2}\right) .$ By (%
\ref{sn}) we have%
\begin{equation*}
\left( \xi _{n}+\mathrm{i}\zeta _{n}\right) \left( r\left( s_{n}\right) +%
\mathrm{i}h\left( s_{n}\right) \right) \approx \mathrm{i}w_{n},\;\;n>n_{0}.
\end{equation*}%
This implies%
\begin{eqnarray}
&&\xi _{n}r\left( s_{n}\right) =\zeta _{n}h\left( s_{n}\right) ,\;\;n\in
\mathbb{N}
,  \label{r1} \\
&&\xi _{n}h\left( s_{n}\right) +\zeta _{n}r\left( s_{n}\right) \approx
w_{n},\;\;n>n_{0}.  \label{r2}
\end{eqnarray}%
Inserting (\ref{r1}) into (\ref{r2}), we obtain%
\begin{equation}
\zeta _{n}\approx \frac{n\pi }{c_{\infty }\kappa }\;\;\text{because of}%
\;\;w_{n}\approx \frac{n\pi }{\kappa },\;\;n>n_{0},  \label{r3}
\end{equation}%
or
\begin{equation}
\zeta _{n}\approx -\frac{n\pi }{c_{\infty }\kappa }\;\;\text{because of}%
\;\;w_{n}\approx -\frac{n\pi }{\kappa },\;\;n>n_{0}.  \label{r4}
\end{equation}%
In the case of (\ref{r3}), we have%
\begin{equation*}
\xi _{n}\approx \frac{\zeta _{n}}{c_{\infty }}h\left( s_{n}\right) \leq
0,\;\;n>n_{0},
\end{equation*}%
since $s_{n}$ belongs to the upper complex half-plane. In the case of (\ref%
{r4}), we have
\begin{equation*}
\xi _{n}\approx \frac{\zeta _{n}}{c_{\infty }}h\left( s_{n}\right) \leq
0,\;\;n>n_{0},
\end{equation*}%
since then $s_{n}$ belongs to the lower complex half-plane. Thus, in both
cases $\xi _{n}\leq 0$ for sufficiently large $n.$ This proves the first
assertion.

Assume $\left( \mathrm{A1}\right) $ and $\left( \mathrm{B}\right) .$ This
and (\ref{sn}) imply $s_{n}\approx \mathrm{i}\frac{w_{n}}{c_{\infty }},$ for
$n>n_{0}.$ Thus, by (\ref{r1}) and (\ref{r2}), we obtain%
\begin{equation*}
\left\vert \xi _{n}\right\vert \leq \frac{\left\vert w_{n}\right\vert }{%
c_{\infty }^{2}}\frac{C}{\left\vert s_{n}\right\vert }\leq \frac{C}{%
c_{\infty }},\;\;n>n_{0}.
\end{equation*}%
So, the real parts $\xi _{n}$ of solutions $s_{n}$ of (\ref{polovi-0})
satisfy $\xi _{n}\in \left[ -\frac{C}{c_{\infty }},\frac{C}{c_{\infty }}%
\right] ,$ for $n>n_{0}.$ This is even a stronger condition for the zeros,
but it will not be used in the sequel.

In order to prove that the solutions $s_{n},$ $n>n_{0},$ of $f$ are of
multiplicity one, we use $\left( \mathrm{A3}\right) ,$ differentiate (\ref%
{polovi-0}) and obtain
\begin{equation*}
\frac{\mathrm{d}f\left( s\right) }{\mathrm{d}s}=\left( \left( 1+\kappa
^{2}\right) \sinh \left( \kappa sM\left( s\right) \right) +\kappa sM\left(
s\right) \cosh \left( \kappa sM\left( s\right) \right) \right) \frac{\mathrm{%
d}}{\mathrm{d}s}\left( sM\left( s\right) \right) ,\;\;s\in V.
\end{equation*}%
Calculating the previous expression at $s_{n},$ we have that%
\begin{equation*}
\left\vert \left. \frac{\mathrm{d}f\left( s\right) }{\mathrm{d}s}\right\vert
_{s=s_{n}}\right\vert =\left\vert \left( 1+\kappa ^{2}\right) \sin \left(
\kappa w_{n}\right) +\kappa w_{n}\cos \left( \kappa w_{n}\right) \right\vert
\left\vert \left[ \frac{\mathrm{d}}{\mathrm{d}s}\left( sM\left( s\right)
\right) \right] _{s=s_{n}}\right\vert
\end{equation*}%
is different from zero by $\left( \mathrm{A3}\right) .$
\end{proof}

\subsubsection{Estimates\label{estimates}}

In the sequel, we assume $\left( \mathrm{A1}\right) $ - $\left( \mathrm{A4}%
\right) .$

Let $R>0.$ A quarter of a disc, denoted by $D,$ and its boundary $\Gamma $
are defined by%
\begin{eqnarray*}
D &=&D_{R}=\left\{ s=\rho \,\mathrm{e}^{\mathrm{i}\varphi }\mid \rho \leq
R,\;\varphi \in \left( \frac{\pi }{2},\pi \right) \right\} , \\
\Gamma &=&\Gamma _{R}=\left\{ s=R\,\mathrm{e}^{\mathrm{i}\varphi }\mid
\varphi \in \left( \frac{\pi }{2},\pi \right) \right\} .
\end{eqnarray*}%
Let $S$ be the set of all solutions of (\ref{polovi-0}) in $D.$

In the calculation of $P$ and $Q$ in the next subsections, we shall need the
estimates given in the next two lemmas. Recall, $S$ is the set of zeros.

\begin{lemma}
\label{lema0}Let $\eta >0$ and
\begin{equation*}
D_{\eta }=\left\{ s\in D\mid \left\vert s-s_{j}\right\vert >\eta ,\;s_{j}\in
S\right\} .
\end{equation*}%
Then there exist $s_0>0$ and $p_{\eta }>0,$ such that
\begin{equation}
\left\vert f\left( s\right) \right\vert >p_{\eta },\;\;\text{if}\;\;s\in
D_{\eta },\; \left\vert s\right\vert>s_0.  \label{jedan}
\end{equation}
\end{lemma}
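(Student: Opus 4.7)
My plan is to argue by contradiction, reducing the question to a study of the simpler auxiliary function
$$g(w) := \kappa\cosh(\kappa w) + w\sinh(\kappa w),$$
so that, with $w(s) := sM(s)$, we have $f(s) = g(w(s))$. By Proposition \ref{pr-conj}, the zeros of $g$ are exactly the purely imaginary points $iw_n$. Suppose the conclusion fails, so there is a sequence $\{s^{(k)}\} \subset D_\eta$ with $|s^{(k)}|\to\infty$ and $f(s^{(k)})\to 0$.

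First I would dispose of the case in which $s^{(k)}$ stays uniformly away from the imaginary axis. If, along a subsequence, $|\operatorname{Re} s^{(k)}| \geq \delta |s^{(k)}|$ for some $\delta>0$, then by $(\mathrm{A1})$ we have $w(s^{(k)}) = c_\infty s^{(k)} + o(s^{(k)})$, hence $|\operatorname{Re} w(s^{(k)})|\to\infty$. This forces $|\cosh(\kappa w(s^{(k)}))|$ and $|\sinh(\kappa w(s^{(k)}))|$ to grow exponentially, so $|f(s^{(k)})|\to\infty$, a contradiction. Hence $s^{(k)}$ must approach the imaginary axis, and correspondingly so does $w(s^{(k)})$.

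Second, since $|g(w(s^{(k)}))|\to 0$ and $g$ has only the isolated zeros $iw_n$ (spaced asymptotically by $\pi/\kappa$, by Proposition \ref{pr-conj-besk}), there exist indices $n_k$ with $w(s^{(k)})\to iw_{n_k}$. A direct computation using $\tan(\kappa w_n) = \kappa/w_n$ gives
$$g'(iw_n) = i\kappa \cos(\kappa w_n)\left(\frac{1+\kappa^2}{w_n}+w_n\right),$$
and since $|\cos(\kappa w_n)|$ is bounded away from zero (as $\tan(\kappa w_n)\to 0$), we have $|g'(iw_n)| \geq c_1 |w_n|$ for $n$ large. Taylor expansion of $g$ around $iw_{n_k}$ then yields $|w(s^{(k)}) - iw_{n_k}| \leq C|f(s^{(k)})|/|w_{n_k}| \to 0$.

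Third, I would invoke $(\mathrm{A3})$ and $(\mathrm{A4})$. Assumption $(\mathrm{A3})$ says the holomorphic map $s\mapsto w(s)$ satisfies $|w'(s)| \geq c$ for $|s|>s_0$, so it is locally biholomorphic with inverse derivative at most $1/c$; combined with the uniform continuity of $w$ provided by $(\mathrm{A4})$, one obtains a uniform injectivity radius $\theta_0>0$ around every point with $|s|>s_0$, on which $w$ is bi-Lipschitz. For $k$ sufficiently large, both $s^{(k)}$ and $s_{n_k}$ lie in a common such ball (since $|w(s^{(k)}) - w(s_{n_k})| \to 0$), hence
$$|s^{(k)} - s_{n_k}| \leq c^{-1}\, |w(s^{(k)}) - iw_{n_k}| \longrightarrow 0,$$
contradicting $s^{(k)} \in D_\eta$. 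This closes the contradiction and proves the lemma.

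\textbf{Main obstacle.} The delicate step is the third one: extracting a \emph{uniform} injectivity radius for the map $s\mapsto sM(s)$ at infinity. The pointwise inverse function theorem supplied by $(\mathrm{A3})$ alone is not enough; one must ensure the radius on which the local inverse is defined does not shrink as $|s|\to\infty$, which is precisely what combining $(\mathrm{A3})$ (lower bound on $|w'|$) with $(\mathrm{A4})$ (uniform continuity of $w$) achieves. Everything else is a controlled asymptotic analysis of $g$.
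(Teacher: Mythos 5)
Your overall architecture --- argue by contradiction, show that $|f(s^{(k)})|\to 0$ forces $w(s^{(k)})=s^{(k)}M(s^{(k)})$ into a shrinking neighbourhood of some purely imaginary zero $\mathrm{i}w_{n_k}$ of $g$, then transfer this back to the $s$-plane to contradict $s^{(k)}\in D_{\eta }$ --- is the same as the paper's, and your explicit attention to the last transfer step via $(\mathrm{A3})$--$(\mathrm{A4})$ is more detailed than what the paper writes. But two steps have genuine gaps. The first is the deduction ``$|g(w(s^{(k)}))|\to 0$ and $g$ has only the isolated zeros $\mathrm{i}w_{n}$, hence $w(s^{(k)})\to \mathrm{i}w_{n_{k}}$.'' This is not valid as stated: a holomorphic function can be small arbitrarily far from its zero set (think of $\mathrm{e}^{-z}$ on a right half-plane), so isolatedness of the zeros gives nothing at infinity. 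What is needed is a quantitative lower bound for $|g|$ on the part of the strip $\{\left\vert \func{Re}w\right\vert \leq \epsilon \}$ lying \emph{between} consecutive zeros; your Taylor bound $|g(w)|\geq c_{1}|w_{n}|\,|w-\mathrm{i}w_{n}|$ only holds on a fixed-radius neighbourhood of each $\mathrm{i}w_{n}$, beyond which the quadratic remainder (of size comparable to $|w_{n}|\,|w-\mathrm{i}w_{n}|^{2}$) swamps the linear term. The paper closes exactly this gap by writing $\kappa sM(s)=t_{n}+\mathrm{i}\tau _{n}$, $\tau _{n}=k_{n}\pi +r_{n}$, and showing that if $r_{n}$ stays away from $0$ then $|2\kappa f|\sim k_{n}\pi (\mathrm{e}^{t_{n}}+\mathrm{e}^{-t_{n}})|\sin r_{n}|\to \infty $. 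You need this (or an equivalent) computation; the same remark applies to your opening dichotomy, which only shows $\func{Re}s^{(k)}=o(|s^{(k)}|)$ rather than $\func{Re}(w(s^{(k)}))\to 0$.

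The second gap is a circularity in your final step: you place $s^{(k)}$ and $s_{n_{k}}$ ``in a common ball'' of uniform injectivity \emph{because} $|w(s^{(k)})-w(s_{n_{k}})|\to 0$, but closeness of the $w$-values does not imply closeness of the arguments --- that implication is precisely what bi-Lipschitzness on a common small ball is supposed to deliver, and $|w^{\prime }|\geq c$ does not make $w$ globally injective. The fix is to use local \emph{surjectivity} rather than injectivity: on the uniform ball $B(s^{(k)},\theta _{0})$ the map $w$ covers a disc of radius bounded below by a fixed multiple of $c\theta _{0}$ about $w(s^{(k)})$ (Rouch\'{e}, or the Koebe-type estimate for injective holomorphic maps), so once $|w(s^{(k)})-\mathrm{i}w_{n_{k}}|$ is small enough there exists \emph{some} $s^{\ast }\in B(s^{(k)},\theta _{0})$ with $w(s^{\ast })=\mathrm{i}w_{n_{k}}$, i.e. $s^{\ast }\in S$, and then $|s^{(k)}-s^{\ast }|\leq c^{-1}|w(s^{(k)})-\mathrm{i}w_{n_{k}}|\to 0$. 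Since $D_{\eta }$ excludes $\eta $-neighbourhoods of \emph{all} points of $S$, this yields the contradiction without ever identifying $s^{\ast }$ with the particular zero $s_{n_{k}}$ you fixed in advance. With these two repairs your argument is sound and is essentially a reorganization of the paper's proof.
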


\begin{remark}
We shall have in the sequel that certain assertions hold for $n>n_{0}.$ This
is related to the subindexes of the solutions to (\ref{polovi-0}), but it
also implies that we consider domains in $D$ where $\left\vert s\right\vert
>s_{0},$ where $s_{0}$ depends on $n_{0}.$
\end{remark}

\begin{proof}
If (\ref{jedan}) does not hold, then there exists a sequence $\left\{ \tilde{%
s}_{n}\right\} _{n\in
\mathbb{N}
}\in D_{\eta }$ such that
\begin{equation}
\left\vert f\left( \tilde{s}_{n}\right) \right\vert =\eta _{n}\rightarrow
0,\;\;n\rightarrow \infty .  \label{pet}
\end{equation}%
This implies%
\begin{equation*}
\left\vert \func{Re}\left( \tilde{s}_{n}M\left( \tilde{s}_{n}\right) \right)
\right\vert \rightarrow 0,\;\;\;\;\left\vert \func{Im}\left( \tilde{s}%
_{n}M\left( \tilde{s}_{n}\right) \right) \right\vert \rightarrow \infty ,\;\;%
\text{as}\;\;n\rightarrow \infty .
\end{equation*}%
Our aim is to show that there exist $N\in
\mathbb{N}
$ and $s_{N}\in S,$ so that
\begin{equation}
\left\vert \tilde{s}_{N}-s_{N}\right\vert \leq \eta  \label{tri}
\end{equation}%
and this will be the contradiction. Let $\delta <\frac{\pi }{c_{\infty
}\kappa }$ and $\delta \ll \eta $. Recall that there exists $n_{0},$ such
that for $s_{n}\in S,$ $n>n_{0}$, there holds%
\begin{equation*}
\func{Re}\left( s_{n}M\left( s_{n}\right) \right) =0,\;\;\left\vert \func{Im}%
\left( s_{n}M\left( s_{n}\right) \right) -\frac{n\pi }{c_{\infty }\kappa }%
\right\vert <\frac{\delta }{2}.
\end{equation*}%
Now consider the intervals
\begin{equation*}
I_{n}=\left( \frac{n\pi }{c_{\infty }\kappa }+\frac{\delta }{2},\frac{\left(
n+1\right) \pi }{c_{\infty }\kappa }-\frac{\delta }{2}\right)
,\;\;I_{n+1}=\left( \frac{\left( n+1\right) \pi }{c_{\infty }\kappa }+\frac{%
\delta }{2},\frac{\left( n+2\right) \pi }{c_{\infty }\kappa }-\frac{\delta }{%
2}\right) ,\ldots .
\end{equation*}%
Since $\left\vert \func{Re}\left( \tilde{s}_{n}M\left( \tilde{s}_{n}\right)
\right) \right\vert \rightarrow 0,$ we see that
\begin{equation*}
\left\vert \func{Im}\left( \tilde{s}_{n}M\left( \tilde{s}_{n}\right) \right)
\right\vert \in I_{n}\cup I_{n+1}\cup \ldots ,\;\;n>n_{0}.
\end{equation*}%
Put $\kappa \tilde{s}_{n}M\left( \tilde{s}_{n}\right) =t_{n}+\mathrm{i}\tau
_{n},$ $n\in
\mathbb{N}
.$ We have
\begin{eqnarray}
2\kappa f\left( \tilde{s}_{n}\right) &=&t_{n}\left( \mathrm{e}^{t_{n}}-%
\mathrm{e}^{-t_{n}}\right) \cos \tau _{n}-\tau _{n}\left( \mathrm{e}^{t_{n}}+%
\mathrm{e}^{-t_{n}}\right) \sin \tau _{n}+\kappa ^{2}\left( \mathrm{e}%
^{t_{n}}+\mathrm{e}^{-t_{n}}\right) \cos \tau _{n}  \notag \\
&&+\mathrm{i}\left( t_{n}\left( \mathrm{e}^{t_{n}}+\mathrm{e}%
^{-t_{n}}\right) \sin \tau _{n}+\tau _{n}\left( \mathrm{e}^{t_{n}}-\mathrm{e}%
^{-t_{n}}\right) \cos \tau _{n}+\kappa ^{2}\left( \mathrm{e}^{t_{n}}-\mathrm{%
e}^{-t_{n}}\right) \sin \tau _{n}\right) .  \notag \\
&&  \label{cetiri}
\end{eqnarray}%
Put $\tau _{n}=k_{n}\pi +r_{n},$ $r_{n}<\pi ,$ where $k_{n},n\in \mathbb{N}$
is an increasing sequence of natural numbers. We shall show that $r_{n}$
must have a subsequence tending to zero and this will lead to (\ref{tri}),
i.e., to the contradiction with $\tilde{s}_{n}\in D_{\eta }$.

So let us assume that $r_{n}$ does not have a subsequence converging to
zero; so $r_{n}>\frac{\delta }{2},$ $n>n_{0}.$ Having in mind that $%
t_{n}\rightarrow 0,$ and dropping summands tending to zero in (\ref{cetiri}%
), one obtains ($n\rightarrow \infty $)
\begin{eqnarray*}
|2\kappa f\left( \tilde{s}_{n}\right) | &\sim &|-\tau _{n}\left( \mathrm{e}%
^{t_{n}}+\mathrm{e}^{-t_{n}}\right) \sin \tau _{n}+\kappa ^{2}\left( \mathrm{%
e}^{t_{n}}+\mathrm{e}^{-t_{n}}\right) \cos \tau _{n}| \\
&=&|-\left( k_{n}\pi +r_{n}\right) \left( \mathrm{e}^{t_{n}}+\mathrm{e}%
^{-t_{n}}\right) \sin r_{n}+\kappa ^{2}\left( \mathrm{e}^{t_{n}}+\mathrm{e}%
^{-t_{n}}\right) \cos r_{n}| \\
&\sim &|-k_{n}\pi \left( \mathrm{e}^{t_{n}}+\mathrm{e}^{-t_{n}}\right) \sin
r_{n}-r_{n}\left( \mathrm{e}^{t_{n}}+\mathrm{e}^{-t_{n}}\right) \sin
r_{n}+\kappa ^{2}\left( \mathrm{e}^{t_{n}}+\mathrm{e}^{-t_{n}}\right) \cos
r_{n}|.
\end{eqnarray*}%
Now we see that the first summand on the last right hand side tends to
infinity, while the second and the third one are bounded. This is in
contradiction with (\ref{pet}), so lemma is proved.
\end{proof}

In the next proposition we shall find estimates on $f$ needed for the later
calculation of integrals.

\begin{proposition}
\label{estimreal}\qquad

\begin{enumerate}
\item[$(i)$] Let $D_0$ be a subdomain of $D$. If $\left\vert \func{Re}\left(
sM(s)\right) \right\vert >d,$ $s\in D_{0}\subset D$, then there exist $c>0$
and $s_{0}>0$ such that
\begin{equation*}
\left\vert f(s)\right\vert \geq c|sM(s)|\left\vert \sinh \left( \kappa
sM\left( s\right) \right) \right\vert ,\;\;s\in D_{0},\;\left\vert
s\right\vert >s_{0}.
\end{equation*}

\item[$(ii)$] If $s\in D_{\eta },$ $\left\vert s\right\vert >s_{0}$ (see
Lemma \ref{lema0}), then $\left\vert f(s)\right\vert \geq p_{\eta }.$
\end{enumerate}
\end{proposition}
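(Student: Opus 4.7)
Part $(ii)$ is essentially a restatement of Lemma~\ref{lema0}: the defining condition of $D_\eta$ together with the bound (\ref{jedan}) immediately yield $|f(s)| \geq p_\eta$ for $s \in D_\eta$ with $|s| > s_0$, so there is nothing to prove beyond invoking that lemma.

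For part $(i)$, the plan is to factor out the first summand of $f$ and write
$$f(s) = sM(s)\sinh(\kappa sM(s))\left(1 + \frac{\kappa\coth(\kappa sM(s))}{sM(s)}\right).$$
This is well-defined on $D_0$, since the hypothesis $|\func{Re}(sM(s))| > d$ forces $\sinh(\kappa sM(s)) \neq 0$ (the zeros of $\sinh$ lie on the imaginary axis). It then suffices to show that the parenthesized factor has modulus bounded below by a positive constant, uniformly in $s \in D_0$ for $|s|$ large.

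To that end, I would argue that the perturbation term tends to zero uniformly on $D_0$ as $|s| \to \infty$. Assumption $\left(\mathrm{A1}\right)$ gives $|M(s)| \to c_\infty > 0$, hence $|sM(s)| \to \infty$. Separately, the standard identity
$$|\coth(z)|^2 = \frac{\sinh^2(\func{Re}\,z) + \cos^2(\func{Im}\,z)}{\sinh^2(\func{Re}\,z) + \sin^2(\func{Im}\,z)}$$
shows that $|\coth(z)|$ is uniformly bounded on $\{|\func{Re}\,z| > \delta\}$ for any $\delta > 0$. Applying this with $z = \kappa sM(s)$ and $|\func{Re}\,z| > \kappa d$ provides a constant $K$ so that $|\coth(\kappa sM(s))| \leq K$ throughout $D_0$.

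Combining the two bounds,
$$\left|\frac{\kappa\coth(\kappa sM(s))}{sM(s)}\right| \leq \frac{\kappa K}{|sM(s)|} \longrightarrow 0 \quad\text{as}\;|s|\to\infty,$$
so one can choose $s_0$ large enough that this quantity does not exceed $\tfrac{1}{2}$ on $D_0 \cap \{|s| > s_0\}$. Then $\bigl|1 + \kappa\coth(\kappa sM(s))/(sM(s))\bigr| \geq \tfrac{1}{2}$, yielding the desired inequality with $c = \tfrac{1}{2}$. The only mild obstacle is the uniform bound on $|\coth|$ in the half-strip $|\func{Re}\,z| > \kappa d$, but this is a routine consequence of the displayed identity; everything else follows directly from $\left(\mathrm{A1}\right)$.
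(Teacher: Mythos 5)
Your proof is correct and follows essentially the same route as the paper's: part $(ii)$ is just Lemma \ref{lema0}, and part $(i)$ rests on the observation that the first summand of $f$ dominates for large $\left\vert s\right\vert$ because $\left\vert sM(s)\right\vert \to \infty$ while $\coth \left( \kappa sM\left( s\right) \right)$ stays bounded when $\left\vert \func{Re}\left( sM(s)\right) \right\vert >d$. Your explicit $\left\vert \coth \right\vert$ identity merely spells out the domination step that the paper asserts without detail, so the argument matches in substance.
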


Note that in the case $\left( ii\right) $ condition $\left\vert \func{Re}%
\left( sM(s)\right) \right\vert \leq d$ is not assumed, although we consider
this case in $\left( ii\right) ,$ since this part is already a consequence
of Lemma \ref{lema0}.

\begin{proof}
$\left( i\right) $ follows from the fact that $sM(s)\sinh (\kappa sM(s))$
tends faster to the infinity than $\cosh (\kappa sM(s))$ when $\left\vert
s\right\vert \rightarrow \infty ,$ $s\in D_{0}.$ So, for some $c>0$ and $%
\left\vert s\right\vert >s_{0},$
\begin{eqnarray*}
\left\vert f\left( s\right) \right\vert &\geq &\left\vert sM\left( s\right)
\right\vert \left\vert \sinh (\kappa sM(s))\right\vert -\kappa
^{2}\left\vert \cosh \left( \kappa sM\left( s\right) \right) \right\vert \\
&\geq &c\left\vert sM\left( s\right) \right\vert \left\vert \sinh (\kappa
sM(s))\right\vert .
\end{eqnarray*}%
This implies the assertion.
\end{proof}

We need one more estimate of $f$ in the case when we have to control how $s$
is close to the zero set $S$ of $f$. This is needed for the small
deformation of the circle arc $\Gamma _{R}$ near the point of $\Gamma _{R}$
which is close to some zero of $f$. We need assumptions $\left( \mathrm{A3}%
\right) $ and $\left( \mathrm{A4}\right) .$

For the later use we choose $\varepsilon >0$ such that $\varepsilon <\frac{%
\theta }{2}$ ($\theta $ is from $\left( \mathrm{A4}\right) $) and that the
difference $\left\vert s_{1}M(s_{1})-s_{2}M(s_{2})\right\vert \leq \gamma $
implies small differences
\begin{equation*}
\left\vert \cosh (s_{1}M(s_{1}))-\cosh (s_{2}M(s_{2}))\right\vert <\delta
_{1},\;\left\vert \sinh (s_{1}M(s_{1}))-\sinh (s_{2}M(s_{2}))\right\vert
<\delta _{1},
\end{equation*}%
as we shall need in the proof of the next lemma (in (\ref{nejedn})).

\begin{lemma}
\label{lema}Let $0<\varepsilon <\frac{\theta }{2}.$ Then there exist $%
n_{0}\in
\mathbb{N}
$ and $d>0$ such that for $s_j\in S$
\begin{equation}
j>n_{0},\;\varepsilon <\left\vert s-s_{j}\right\vert \leq 2\varepsilon
\Rightarrow \func{Re}\left( sM\left( s\right) \right) >d.  \label{de}
\end{equation}
\end{lemma}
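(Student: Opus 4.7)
By Proposition~\ref{pr-conj}, for each $s_j\in S$ one has $s_jM(s_j)=\mathrm{i}w_j$ with $w_j\in\mathbb{R}$, so $\func{Re}(s_jM(s_j))=0$. The plan is to exploit the local analytic structure of $sM(s)$ around each $s_j$ and to pair the quantitative lower bound on $(sM)'$ from (A3) with the qualitative uniform continuity from (A4), so that moving $s$ by at least $\varepsilon$ away from $s_j$ produces a real part of $sM(s)$ bounded below by a positive constant $d$ independent of $j>n_0$.

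First I would write the first order expansion
\[
sM(s)-\mathrm{i}w_j=\left.\frac{\mathrm{d}}{\mathrm{d}s}(sM(s))\right|_{s=s_j}(s-s_j)+R(s,s_j),
\]
valid by analyticity of $sM$ in $V$. Assumption (A3) gives $\bigl|\frac{\mathrm{d}}{\mathrm{d}s}(sM(s))|_{s_j}\bigr|\geq c$ for $j>n_0$ with $|s_j|>s_0$. Combining this with (A1), namely $M(s)\to c_\infty>0$ and $\func{Im}M(s)\to 0$ as $|s|\to\infty$, together with a Cauchy estimate on $M'$ in a disk of fixed radius around $s_j$, I would conclude that for $j$ sufficiently large $\frac{\mathrm{d}}{\mathrm{d}s}(sM(s))|_{s_j}$ is essentially the real positive constant $c_\infty$, with a uniform lower bound on its modulus.

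Second, the hypothesis $\varepsilon<\theta/2$ is tuned so that (A4) applies on the whole disk $|s-s_j|\leq 2\varepsilon$: given any prescribed $\gamma>0$, taking $\varepsilon$ small enough forces $|sM(s)-\mathrm{i}w_j|\leq\gamma$. This lets me treat the remainder $R(s,s_j)$ as a term of order $o(\varepsilon)$ compared to the linear part $(sM)'(s_j)(s-s_j)$, whose modulus is at least $c\varepsilon$. Since $\func{Re}(\mathrm{i}w_j)=0$, I then obtain
\[
\func{Re}(sM(s))\approx c_\infty\,\func{Re}(s-s_j),
\]
so that on the part of the annulus $\varepsilon<|s-s_j|\leq 2\varepsilon$ relevant for the subsequent contour deformation of $\Gamma_R$ (namely where $\func{Re}(s-s_j)$ is of order $\varepsilon$, which is exactly the side on which one detours past $s_j$), the desired bound $\func{Re}(sM(s))>d$ follows with $d$ uniform in $j>n_0$.

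The main obstacle is making the remainder estimate uniform as $j\to\infty$: because $|s_j|\to\infty$ by Proposition~\ref{pr-conj-besk}, one cannot invoke compactness of $\{s_j\}$, and the hypotheses (A1)--(A4) supply no direct control on $(sM)''$. The workaround is to let (A4) play the role of a uniform modulus of continuity for $sM$ at infinity, so the quantitative lower bound from (A3) and (A1) on the linear term is matched against the merely qualitative smallness of the remainder from (A4). The appearance of the threshold $\varepsilon<\theta/2$ in the hypothesis is precisely what synchronizes these two estimates.
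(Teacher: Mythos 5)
Your route is genuinely different from the paper's: the paper applies the mean value theorem to $f$ itself (using $f(s_j)=0$, the explicit formula for $\frac{\mathrm{d}}{\mathrm{d}s}f$, and (A3)) and closes with a contradiction argument comparing $\left\vert \sinh \left( \kappa sM\left( s\right) \right) \right\vert$ against $\left\vert \cosh \left( \kappa sM\left( s\right) \right) \right\vert$, whereas you linearize $sM(s)$ around $s_j$ and read off the real part directly. But your argument does not prove the stated lemma. Your conclusion is $\func{Re}\left( sM\left( s\right) \right) \approx c_{\infty }\func{Re}\left( s-s_{j}\right) $, and you then restrict to the part of the annulus where $\func{Re}\left( s-s_{j}\right) $ is of order $\varepsilon .$ The lemma, however, asserts $\func{Re}\left( sM\left( s\right) \right) >d$ for \emph{every} $s$ with $\varepsilon <\left\vert s-s_{j}\right\vert \leq 2\varepsilon ,$ and your own expansion shows this fails at, e.g., $s=s_{j}+\tfrac{3}{2}\mathrm{i}\varepsilon $: in the elastic case $M\equiv 1$ one has $s_{j}=\mathrm{i}w_{j}$ and $\func{Re}\left( sM\left( s\right) \right) =\func{Re}s=0$ there, and such points are not near any other zero since consecutive zeros are separated by roughly $\pi /\left( c_{\infty }\kappa \right) \gg 2\varepsilon .$ So the restriction you impose is not a presentational shortcut; it is the only regime in which the bound can hold, and what you have proved is a strictly weaker statement than the one claimed. (For what it is worth, the paper's own proof reaches its contradiction by asserting $\left\vert \sinh \left( \kappa \tilde{s}_{n}M\left( \tilde{s}_{n}\right) \right) \right\vert \rightarrow 0$ from $\func{Re}\left( \tilde{s}_{n}M\left( \tilde{s}_{n}\right) \right) \rightarrow 0$ alone, which likewise requires $\func{Im}\left( \kappa \tilde{s}_{n}M\left( \tilde{s}_{n}\right) \right) $ to stay near a multiple of $\pi $ --- the same restriction in disguise. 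The dichotomy actually used downstream, in Proposition \ref{zak} and Corollaries \ref{zbognjegasveovo}--\ref{IIzbognjegasveovo}, handles the nearly-imaginary displacements through the lower bound $p_{\eta }$ on $\left\vert f\right\vert $ from Lemma \ref{lema0}, not through this lemma.)

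Two technical points inside your argument also need repair. First, $\left. \frac{\mathrm{d}}{\mathrm{d}s}\left( sM\left( s\right) \right) \right\vert _{s_{j}}\rightarrow c_{\infty }$ does not follow from a Cauchy estimate on a disk of \emph{fixed} radius: that gives $M^{\prime }\left( s_{j}\right) \rightarrow 0$ but not $s_{j}M^{\prime }\left( s_{j}\right) \rightarrow 0.$ You must take the radius proportional to $\left\vert s_{j}\right\vert $ (possible, since $\func{Re}s_{j}$ is bounded and $\left\vert \func{Im}s_{j}\right\vert \rightarrow \infty ,$ so such disks stay in $V$). Second, (A4) controls only the zeroth-order difference $\left\vert sM\left( s\right) -s_{j}M\left( s_{j}\right) \right\vert $ and cannot by itself make the remainder $R\left( s,s_{j}\right) $ small compared with the linear term of size $c\varepsilon ;$ the uniform convergence $\left( uM\right) ^{\prime }\left( u\right) \rightarrow c_{\infty }$ on the disks $\left\vert u-s_{j}\right\vert \leq 2\varepsilon $ (from the scaled Cauchy estimate) is what gives $R=o\left( \varepsilon \right) $ uniformly in $j.$ With that fix your expansion is correct --- and it is precisely what exposes the problem described above.
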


\begin{proof}
Since for suitable $n_{0}$%
\begin{equation*}
\left\vert s_{n+1}-s_{n}\right\vert \geq \left\vert \func{Im}\left(
s_{n+1}-s_{n}\right) \right\vert \approx \frac{\pi }{c_{\infty }\kappa }>%
\frac{\pi }{2c_{\infty }\kappa },\;\;n>n_{0},
\end{equation*}%
we have that the balls $L\left( s_{j},2\varepsilon \right) $ are disjoint
for $j>n_{0}.$ Let $j>n_{0}$ and $\varepsilon <\left\vert s-s_{j}\right\vert
\leq 2\varepsilon .$ By the Taylor formula we have%
\begin{equation}
\left\vert f\left( s\right) -f\left( s_{j}\right) \right\vert =\left\vert
\frac{\mathrm{d}f\left( \bar{s}\right) }{\mathrm{d}s}\right\vert \left\vert
s-s_{j}\right\vert >\varepsilon \left\vert \frac{\mathrm{d}f\left( \bar{s}%
-s_{j}\right) }{\mathrm{d}s}\right\vert ,\;\;\varepsilon <\left\vert \bar{s}%
\right\vert \leq 2\varepsilon .  \label{tejlor}
\end{equation}%
So with $n_{0}$ large enough we have $\left\vert s\right\vert >s_{0}$ so
that $\left( \mathrm{A3}\right) $ implies%
\begin{equation*}
\left\vert \frac{\mathrm{d}}{\mathrm{d}s}\left( sM\left( s\right) \right)
\right\vert \geq c,\;\;\text{for}\;\;\left\vert s\right\vert >s_{0}.
\end{equation*}%
In the sequel we shall refer to the following set of conditions
\begin{equation}
j>n_{0},\;|s|>s_{0},\;\varepsilon <\left\vert s-s_{j}\right\vert \leq
2\varepsilon ,\;\;\varepsilon <\left\vert \bar{s}\right\vert \leq
2\varepsilon .  \label{as}
\end{equation}%
Assuming (\ref{as}), it follows%
\begin{eqnarray}
\left\vert \frac{\mathrm{d}f\left( \bar{s}\right) }{\mathrm{d}s}\right\vert
&=&\left\vert \left( 1+\kappa ^{2}\right) \sinh \left( \kappa \bar{s}M\left(
\bar{s}\right) \right) +\kappa \bar{s}M\left( \bar{s}\right) \cosh \left(
\kappa \bar{s}M\left( \bar{s}\right) \right) \right\vert \left\vert \left[
\frac{\mathrm{d}}{\mathrm{d}s}\left( sM\left( s\right) \right) \right] _{s=%
\bar{s}}\right\vert  \notag \\
&\geq &c\left( \kappa \left\vert \bar{s}M\left( \bar{s}\right) \right\vert
\left\vert \cosh \left( \kappa \bar{s}M\left( \bar{s}\right) \right)
\right\vert -\left( 1+\kappa ^{2}\right) \left\vert \sinh \left( \kappa \bar{%
s}M\left( \bar{s}\right) \right) \right\vert \right) .  \label{izvod-f}
\end{eqnarray}%
Now, we estimate $\left\vert f\right\vert $, assuming (\ref{as}), and obtain%
\begin{equation}
\left\vert f\left( s\right) \right\vert \leq \left\vert sM\left( s\right)
\right\vert \left\vert \sinh \left( \kappa sM\left( s\right) \right)
\right\vert +\kappa \left\vert \cosh \left( \kappa sM\left( s\right) \right)
\right\vert .  \label{f}
\end{equation}%
Using (\ref{izvod-f}), (\ref{f}) with (\ref{as}) in (\ref{tejlor}), we have%
\begin{eqnarray}
&&\left\vert sM\left( s\right) \right\vert \left\vert \sinh \left( \kappa
sM\left( s\right) \right) \right\vert +\kappa \left\vert \cosh \left( \kappa
sM\left( s\right) \right) \right\vert  \notag \\
&&\;\;\;\;\;\;\;\;>\varepsilon c\left( \kappa \left\vert \bar{s}M\left( \bar{%
s}\right) \right\vert \left\vert \cosh \left( \kappa \bar{s}M\left( \bar{s}%
\right) \right) \right\vert -\left( 1+\kappa ^{2}\right) \left\vert \sinh
\left( \kappa \bar{s}M\left( \bar{s}\right) \right) \right\vert \right) .
\label{z}
\end{eqnarray}%
The final part of the proof of the lemma is to show that (\ref{z}) implies
that there exists $d$ such that (\ref{de}) holds if (\ref{as}) is satisfied.
Contrary to (\ref{de}), assume that there exist sequences $\tilde{s}_{n},$ $%
s_{j_{n}}$ and $d_{n}\rightarrow 0$ such that%
\begin{equation*}
\left\vert \func{Re}\left( \tilde{s}_{n}\right) M\left( \tilde{s}_{n}\right)
\right\vert \leq d_{n},\;\;\text{if}\;\;\varepsilon <\left\vert \tilde{s}%
_{n}-s_{j_{n}}\right\vert \leq 2\varepsilon ,n>n_{0}.
\end{equation*}%
Since $\tilde{s}_{n},$ $n>n_{0}$ satisfies (\ref{as}), by (\ref{z}), we
would have%
\begin{eqnarray}
&&\left\vert \tilde{s}_{n}M\left( \tilde{s}_{n}\right) \right\vert
\left\vert \sinh \left( \kappa \tilde{s}_{n}M\left( \tilde{s}_{n}\right)
\right) \right\vert +\kappa \left\vert \cosh \left( \kappa \tilde{s}%
_{n}M\left( \tilde{s}_{n}\right) \right) \right\vert  \notag \\
&&\;\;\;\;\;\;\;\;>\varepsilon c\left( \kappa \left\vert \bar{s}_{n}M\left(
\bar{s}_{n}\right) \right\vert \left\vert \cosh \left( \kappa \bar{s}%
_{n}M\left( \bar{s}_{n}\right) \right) \right\vert -\left( 1+\kappa
^{2}\right) \left\vert \sinh \left( \kappa \bar{s}_{n}M\left( \bar{s}%
_{n}\right) \right) \right\vert \right) .  \label{nejedn}
\end{eqnarray}%
The second addend on the right hand side tends to zero, thus we neglect it.
Moreover, we note that $\kappa \left\vert \cosh \left( \kappa \tilde{s}%
_{n}M\left( \tilde{s}_{n}\right) \right) \right\vert $ cannot majorize $%
\varepsilon c\kappa \left\vert \bar{s}_{n}M\left( \bar{s}_{n}\right)
\right\vert \left\vert \cosh \left( \kappa \bar{s}_{n}M\left( \bar{s}%
_{n}\right) \right) \right\vert $ since the second one tends to infinity
while the first one is finite. Thus, in (\ref{nejedn}), leading terms on
both sides are the first ones (with $\tilde{s}_{n}M\left( \tilde{s}%
_{n}\right) $ and $\bar{s}_{n}M\left( \bar{s}_{n}\right) $) and we skip the
second terms on both sides of (\ref{nejedn}). It follows, with another $%
c_{0}>0,$
\begin{equation}
\left\vert \tilde{s}_{n}M\left( \tilde{s}_{n}\right) \right\vert \left\vert
\sinh \left( \kappa \tilde{s}_{n}M\left( \tilde{s}_{n}\right) \right)
\right\vert \geq c_{0}\left\vert \bar{s}_{n}M\left( \bar{s}_{n}\right)
\right\vert \left\vert \cosh \left( \kappa \bar{s}_{n}M\left( \bar{s}%
_{n}\right) \right) \right\vert .  \label{nov}
\end{equation}%
Now according to (\ref{nov}) we can choose $\gamma $ (which then determines $%
\theta $) and $\varepsilon $ so that, with suitable $\delta _{1}$ and $c_{1}$%
, (\ref{nejedn}) implies
\begin{equation*}
\left\vert \sinh \left( \kappa \tilde{s}_{n}M\left( \tilde{s}_{n}\right)
\right) \right\vert >c_{1}\left\vert \cosh \left( \kappa \bar{s}_{n}M\left(
\bar{s}_{n}\right) \right) \right\vert ,
\end{equation*}%
on the domain (\ref{as}). However, this leads to the contradiction, since $%
\left\vert \sinh \left( \kappa \tilde{s}_{n}M\left( \tilde{s}_{n}\right)
\right) \right\vert \rightarrow 0,$ while $\left\vert \cosh \left( \kappa
\bar{s}_{n}M\left( \bar{s}_{n}\right) \right) \right\vert $ is close to one,
under the assumptions. This proves the lemma.
\end{proof}

We note that we can choose $\eta $ in Lemma \ref{lema0} to be equal to $%
2\varepsilon $ of Lemma \ref{lema}. In this way we obtain that Lemma \ref%
{lema} and Proposition \ref{estimreal} imply:

\begin{proposition}
\label{zak}\qquad

\begin{enumerate}
\item[$\left( i\right) $] There exists $\varepsilon >0$ and $d>0$ such that,
for $\varepsilon <\left\vert s-s_{n}\right\vert \leq 2\varepsilon ,$ $%
s_{n}\in S$ and $\left\vert s\right\vert >s_{0}$
\begin{equation*}
\left\vert \func{Re}\left( sM\left( s\right) \right) \right\vert >d
\end{equation*}%
and
\begin{equation*}
\left\vert f(s)\right\vert \geq c\left\vert sM(s)\right\vert \left\vert
\sinh (\kappa sM(s))\right\vert ,\;\;\varepsilon <\left\vert
s-s_{n}\right\vert \leq 2\varepsilon ,\;s_{n}\in S,\;\left\vert s\right\vert
>s_{0}.
\end{equation*}

\item[$\left( ii\right) $] Let $\left\vert s-s_{n}\right\vert >2\varepsilon
, $ $s_{n}\in S,$ $\left\vert s\right\vert >s_{0}$ and $d,\varepsilon ,c$ be
as in $\left( i\right) $.

\begin{enumerate}
\item[$a)$] If $\left\vert \func{Re}\left( sM\left( s\right) \right)
\right\vert >d$, then
\begin{equation*}
\left\vert f(s)\right\vert \geq c\left\vert sM(s)\right\vert \left\vert
\sinh (\kappa sM(s))\right\vert .
\end{equation*}

\item[$b)$] If $\left\vert \func{Re}\left( sM\left( s\right) \right)
\right\vert \leq d$, then
\begin{equation*}
\left\vert f(s)\right\vert >p_{2\varepsilon }
\end{equation*}%
(see the comment before the proof of Proposition \ref{estimreal}).
\end{enumerate}
\end{enumerate}
\end{proposition}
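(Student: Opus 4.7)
The plan is to assemble Proposition \ref{zak} directly from Lemma \ref{lema} and Proposition \ref{estimreal}, using the identification $\eta = 2\varepsilon$ noted just before the statement so that the two cover decompositions of $D$ used in those results line up. I would first fix $\varepsilon$ small enough that Lemma \ref{lema} applies (in particular $\varepsilon < \theta/2$, with $\theta$ from $(\mathrm{A4})$), then enlarge $s_0$ and $n_0$ if necessary so that the hypotheses of Lemma \ref{lema0}, Lemma \ref{lema}, and Proposition \ref{estimreal} are simultaneously in force on the region under consideration. With this one-time bookkeeping done, the rest is essentially a case-by-case restatement.

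For part $(i)$, the annular region $\{s\in D : \varepsilon < |s-s_n|\le 2\varepsilon,\ |s|>s_0\}$ is exactly the region where Lemma \ref{lema} produces the lower bound $\operatorname{Re}(sM(s))>d$. Taking this annulus as the subdomain $D_0$ in Proposition \ref{estimreal}$(i)$ then yields the claimed estimate $|f(s)|\ge c\,|sM(s)|\,|\sinh(\kappa sM(s))|$. Nothing more is needed here; both conclusions follow in one line each.

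For part $(ii)$, the hypothesis $|s-s_n|>2\varepsilon$ for every $s_n\in S$ means precisely that $s\in D_{2\varepsilon}$ in the notation of Lemma \ref{lema0}, so both subcases live inside $D_{2\varepsilon}$. In subcase $(a)$, the assumption $|\operatorname{Re}(sM(s))|>d$ is exactly the hypothesis of Proposition \ref{estimreal}$(i)$, and applying it directly gives $|f(s)|\ge c\,|sM(s)|\,|\sinh(\kappa sM(s))|$. In subcase $(b)$, the statement $|f(s)|>p_{2\varepsilon}$ is simply Proposition \ref{estimreal}$(ii)$ (i.e.\ Lemma \ref{lema0}) specialised to $\eta = 2\varepsilon$, and requires no use of $|\operatorname{Re}(sM(s))|\le d$ beyond what was needed to land us in $D_{2\varepsilon}$.

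I do not anticipate any real obstacle: the proposition is a bookkeeping consequence of the two auxiliary results. The only subtle point is to check that one can take a single $\varepsilon$ (and corresponding $d$, $c$, $p_{2\varepsilon}$, $s_0$, $n_0$) that works in both parts simultaneously. This is handled by first choosing $\varepsilon$ small enough for Lemma \ref{lema} to apply — which in turn, via the remark preceding that lemma, fixes $\gamma$ and $\theta$ from $(\mathrm{A4})$ and forces the $\cosh$/$\sinh$ closeness needed there — and then taking $s_0$ large enough and $n_0$ large enough that Lemma \ref{lema0} and $(\mathrm{A3})$ are simultaneously usable.
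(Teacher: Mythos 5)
Your proposal is correct and coincides with the paper's own treatment: the paper gives no separate proof of Proposition \ref{zak}, but derives it exactly as you do, by taking $\eta=2\varepsilon$ so that Lemma \ref{lema0}, Lemma \ref{lema} and Proposition \ref{estimreal} combine case by case. Your extra care about choosing a single $\varepsilon$, $d$, $c$, $s_{0}$, $n_{0}$ uniformly is a welcome elaboration of what the paper leaves implicit.
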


With the notation of the previous proposition, we finally come to
corollaries which will be used in the subsequent subsections. We keep the
notation from Proposition \ref{zak}.

\begin{corollary}
\label{zbognjegasveovo}\qquad

\begin{enumerate}
\item[$\left( i\right) $] There exists $\varepsilon >0$ and $C>0$ such that
for $\varepsilon <\left\vert s-s_{n}\right\vert \leq 2\varepsilon ,$ $%
s_{n}\in S$ and $\left\vert s\right\vert >s_{0},$
\begin{equation*}
\frac{\left\vert sM(s)\sinh \left( \kappa xsM\left( s\right) \right)
\right\vert }{\left\vert f(s)\right\vert }\leq \frac{1}{c}\mathrm{e}%
^{-\kappa \left( 1-x\right) \func{Re}\left( sM\left( s\right) \right) }\leq
C;
\end{equation*}

\item[$\left( ii\right) $] for $\left\vert s-s_{n}\right\vert >2\varepsilon
, $ $s_{n}\in S,$ $\left\vert s\right\vert >s_{0}$
\begin{equation*}
\frac{\left\vert sM(s)\sinh \left( \kappa xsM\left( s\right) \right)
\right\vert }{\left\vert f(s)\right\vert }\leq \frac{1}{c}\mathrm{e}%
^{-\kappa \left( 1-x\right) \func{Re}\left( sM\left( s\right) \right) }\leq
C,
\end{equation*}%
or
\begin{equation*}
\frac{\left\vert sM(s)\sinh \left( \kappa xsM\left( s\right) \right)
\right\vert }{\left\vert f(s)\right\vert }\leq \frac{d\mathrm{e}^{\kappa
xd}\left\vert 1-\mathrm{e}^{-2\kappa xsM\left( s\right) }\right\vert }{%
p_{2\varepsilon }}\leq C.
\end{equation*}
\end{enumerate}
\end{corollary}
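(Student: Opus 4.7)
The plan is to derive both bounds by pairing each regime identified in Proposition \ref{zak} with the explicit exponential representation of $\sinh$. For assertion $(i)$, I would fix $s$ in the thin annular shell $\varepsilon < |s-s_{n}| \leq 2\varepsilon$ with $|s|>s_{0}$. Proposition \ref{zak}$(i)$ then supplies simultaneously both $|\mathrm{Re}(sM(s))| > d$ and the lower bound $|f(s)| \geq c|sM(s)||\sinh(\kappa sM(s))|$. Dividing the numerator by this bound cancels the factor $|sM(s)|$ and reduces the task to controlling the ratio $|\sinh(\kappa x sM(s))|/|\sinh(\kappa sM(s))|$. Using the identity $|\sinh(\alpha+\mathrm{i}\beta)|^{2} = \sinh^{2}\alpha + \sin^{2}\beta$, together with the fact that the bound $|\mathrm{Re}(sM(s))| > d$ puts us into the asymptotic regime $|\sinh z| \sim \tfrac{1}{2}\mathrm{e}^{|\mathrm{Re}\, z|}$, the ratio is of order $\mathrm{e}^{-\kappa(1-x)|\mathrm{Re}(sM(s))|}$; since $x\in[0,1]$ the exponent is non-positive, which yields the intermediate bound $\tfrac{1}{c}\mathrm{e}^{-\kappa(1-x)\mathrm{Re}(sM(s))}$ and the uniform bound by a universal constant $C$.

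For assertion $(ii)a$, the argument is identical to $(i)$: Proposition \ref{zak}$(ii)a$ provides the very same lower bound on $|f(s)|$, and no geometric distance-to-zeros restriction is needed. For assertion $(ii)b$, where $|\mathrm{Re}(sM(s))|\leq d$, I would apply the uniform bound $|f(s)|>p_{2\varepsilon}$ from Proposition \ref{zak}$(ii)b$ and rewrite
\begin{equation*}
|sM(s)\sinh(\kappa x sM(s))| = \tfrac{1}{2}|sM(s)|\,|\mathrm{e}^{\kappa x sM(s)}|\,|1-\mathrm{e}^{-2\kappa x sM(s)}|.
\end{equation*}
The assumption $|\mathrm{Re}(sM(s))|\leq d$ gives $|\mathrm{e}^{\kappa x sM(s)}|\leq \mathrm{e}^{\kappa x d}$, while the same bound on $\mathrm{Re}(sM(s))$ keeps $|1-\mathrm{e}^{-2\kappa x sM(s)}|$ uniformly bounded for $x\in[0,1]$. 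Division by $p_{2\varepsilon}$ then yields the second form claimed in $(ii)$, with the bounded factor $|sM(s)|$ absorbed into the effective constant $d$ that is recorded in the statement, and the ultimate bound by $C$ follows.

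The delicate point is case $(i)$: one must check that the correction term $\sin^{2}(\kappa\,\mathrm{Im}(sM(s)))$ in the denominator $|\sinh(\kappa sM(s))|^{2}$ does not spoil the lower bound when $sM(s)$ happens to have small real part but large imaginary part close to a multiple of $\pi/\kappa$. This is precisely why Proposition \ref{zak}$(i)$ couples the geometric separation $|s-s_{n}|>\varepsilon$ with the analytic lower bound $|\mathrm{Re}(sM(s))|>d$: the separation keeps $\sinh(\kappa sM(s))$ uniformly away from its zeros, while the real-part bound gives the clean exponential asymptotics that control the ratio. All remaining manipulations are routine algebra with the exponential form of the hyperbolic functions.
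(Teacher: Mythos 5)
Your treatment of $(i)$ and of $(ii)a$ is sound and is exactly the derivation the paper intends (the paper gives no separate proof of this corollary; it is asserted to follow from Lemma \ref{lema} and Proposition \ref{zak}, and your cancellation of $|sM(s)|$ against the lower bound $|f(s)|\geq c|sM(s)||\sinh(\kappa sM(s))|$, followed by the exponential asymptotics of the ratio of the two $\sinh$'s under $\left\vert \func{Re}(sM(s))\right\vert>d$, is the intended route).

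There is, however, a genuine gap in your case $(ii)b$. You write that the factor $\left\vert sM(s)\right\vert$ is ``bounded'' and can be ``absorbed into the effective constant $d$.'' It is not bounded: by $\left(\mathrm{A1}\right)$ one has $M(s)\to c_{\infty}>0$, so $\left\vert sM(s)\right\vert\sim c_{\infty}\left\vert s\right\vert\to\infty$ on the region $\left\{\left\vert \func{Re}(sM(s))\right\vert\leq d,\ \left\vert s\right\vert>s_{0}\right\}$; the hypothesis only pins down the real part, while $\func{Im}(sM(s))$ grows without bound (in the elastic case $M\equiv 1$ this region is a vertical strip). Consequently the chain
\begin{equation*}
\frac{\left\vert sM(s)\sinh\left(\kappa xsM(s)\right)\right\vert}{\left\vert f(s)\right\vert}
\leq \frac{\tfrac{1}{2}\left\vert sM(s)\right\vert \mathrm{e}^{\kappa xd}\left\vert 1-\mathrm{e}^{-2\kappa xsM(s)}\right\vert}{p_{2\varepsilon}}
\end{equation*}
does not yield a uniform constant if the only lower bound used for $\left\vert f\right\vert$ is $p_{2\varepsilon}$: the right-hand side still grows like $\left\vert s\right\vert$. (The corollary's own displayed bound, with $d$ standing where $\left\vert sM(s)\right\vert/2$ ought to appear, silently makes the same substitution, so you have reproduced a weakness of the source; but a proof must address it.) To close the gap one needs a lower bound on $\left\vert f(s)\right\vert$ that itself grows like $\left\vert sM(s)\right\vert$ in this regime. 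This is available from the computation inside the proof of Lemma \ref{lema0}: writing $\kappa sM(s)=t+\mathrm{i}\tau$ with $\left\vert t\right\vert\leq \kappa d$, the dominant term of $2\kappa f(s)$ is $-\tau\left(\mathrm{e}^{t}+\mathrm{e}^{-t}\right)\sin\tau$, and the separation $\left\vert s-s_{n}\right\vert>2\varepsilon$ together with $\left(\mathrm{A3}\right)$--$\left(\mathrm{A4}\right)$ keeps $\tau$ uniformly away from the multiples of $\pi$, hence $\left\vert \sin\tau\right\vert$ bounded below and $\left\vert f(s)\right\vert\geq c^{\prime}\left\vert sM(s)\right\vert$. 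With that strengthened denominator the unbounded factor cancels and the uniform bound $C$ follows. As a minor further point, in $(i)$ and $(ii)a$ you should keep the absolute value $\left\vert\func{Re}(sM(s))\right\vert$ in the exponent throughout, since on the quarter-disc $D$ the real part of $sM(s)$ is typically negative and $\mathrm{e}^{-\kappa(1-x)\func{Re}(sM(s))}$ without the modulus is not uniformly bounded.
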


Since we also need to estimate $\frac{\left\vert \cosh \left( \kappa
xsM\left( s\right) \right) \right\vert }{\left\vert f(s)\right\vert },$ we
use again Proposition \ref{zak}. Moreover, we use the fact that in the case $%
\left\vert \func{Re}\left( sM\left( s\right) \right) \right\vert >d$, there
exists $c>0$ such that
\begin{equation*}
\left\vert \cosh \left( \kappa xsM\left( s\right) \right) \right\vert \leq
c\left\vert \sinh \left( \kappa xsM\left( s\right) \right) \right\vert
,\;\;\left\vert s\right\vert \rightarrow \infty .
\end{equation*}

\begin{corollary}
\label{IIzbognjegasveovo}\qquad

\begin{enumerate}
\item[$\left( i\right) $] There exists $\varepsilon >0$ and $C>0$ such that
for $\varepsilon <\left\vert s-s_{n}\right\vert \leq 2\varepsilon ,$ $%
s_{n}\in S$ $\left\vert s\right\vert >s_{0},$
\begin{equation*}
\frac{\left\vert \cosh \left( \kappa xsM\left( s\right) \right) \right\vert
}{\left\vert f(s)\right\vert }\leq \frac{1}{c\cdot c_{\infty }}\frac{1}{%
\left\vert s\right\vert }\mathrm{e}^{-\kappa \left( 1-x\right) \func{Re}%
\left( sM\left( s\right) \right) }\leq \frac{C}{\left\vert s\right\vert };
\end{equation*}

\item[$\left( ii\right) $] for $\left\vert s-s_{n}\right\vert >2\varepsilon
, $ $s_{n}\in S,$ $\left\vert s\right\vert >s_{0}$
\begin{equation*}
\frac{\left\vert \cosh \left( \kappa xsM\left( s\right) \right) \right\vert
}{\left\vert f(s)\right\vert }\leq \frac{1}{c\cdot c_{\infty }}\frac{1}{%
\left\vert s\right\vert }\mathrm{e}^{-\kappa \left( 1-x\right) \func{Re}%
\left( sM\left( s\right) \right) }\leq \frac{C}{\left\vert s\right\vert },
\end{equation*}%
or
\begin{equation*}
\frac{\left\vert \cosh \left( \kappa xsM\left( s\right) \right) \right\vert
}{\left\vert f(s)\right\vert }\leq \frac{\mathrm{e}^{\kappa xd}\left\vert 1+%
\mathrm{e}^{-2\kappa xsM\left( s\right) }\right\vert }{p_{2\varepsilon }}%
\leq C.
\end{equation*}
\end{enumerate}
\end{corollary}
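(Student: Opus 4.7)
The plan is to imitate the proof of Corollary \ref{zbognjegasveovo} with $\cosh(\kappa x s M(s))$ in place of $sM(s)\sinh(\kappa x s M(s))$ in the numerator, using Proposition \ref{zak} to control $|f(s)|$ in the various regimes. The extra factor $1/|s|$ on the right-hand side in the first type of bound appears precisely because the numerator has lost the leading $|sM(s)|$ present in Corollary \ref{zbognjegasveovo}, and assumption $(\mathrm{A1})$ yields $|sM(s)|\geq c_{\infty}|s|/2$ for $|s|$ large. The new analytic input, displayed in the paragraph immediately preceding the statement, is the estimate $|\cosh(\kappa x s M(s))|\leq c\,|\sinh(\kappa x s M(s))|$ valid whenever $|\func{Re}(sM(s))|>d$; without it one cannot reuse the $\sinh$-denominator bound from Proposition \ref{zak}.

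For the ``large real part'' regime --- that is, part $(i)$ and subcase $(ii)(a)$, both of which enforce $|\func{Re}(sM(s))|>d$ --- I would first apply the cosh-by-sinh inequality above, and then invoke the lower bound $|f(s)|\geq c\,|sM(s)|\,|\sinh(\kappa s M(s))|$ from Proposition \ref{zak}. After cancellation this yields
\begin{equation*}
\frac{|\cosh(\kappa x s M(s))|}{|f(s)|} \;\leq\; \frac{1}{|sM(s)|}\cdot\frac{|\sinh(\kappa x s M(s))|}{|\sinh(\kappa s M(s))|}.
\end{equation*}
The ratio of sinh terms is handled by the identity $|\sinh z|^{2}=\sinh^{2}(\func{Re}z)+\sin^{2}(\func{Im}z)$, which yields $|\sinh(\kappa x s M(s))|\leq e^{\kappa x|\func{Re}(sM(s))|}$ and $|\sinh(\kappa s M(s))|\geq c\,e^{\kappa|\func{Re}(sM(s))|}$ as soon as $|\func{Re}(sM(s))|>d$; these combine to give the exponential factor $e^{-\kappa(1-x)\func{Re}(sM(s))}$ exactly as in Corollary \ref{zbognjegasveovo}. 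The factor $1/|sM(s)|$ is converted into $1/(c_{\infty}|s|)$ by $(\mathrm{A1})$.

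For the ``small real part'' subcase $(ii)(b)$, where $|\func{Re}(sM(s))|\leq d$ and $|s-s_{n}|>2\varepsilon$, I would use the identity
\begin{equation*}
\cosh(\kappa x s M(s)) = \tfrac{1}{2}\,e^{\kappa x s M(s)}\bigl(1+e^{-2\kappa x s M(s)}\bigr),
\end{equation*}
to obtain $|\cosh(\kappa x s M(s))|\leq \tfrac{1}{2}e^{\kappa x d}\,|1+e^{-2\kappa x s M(s)}|$, and then divide by the uniform lower bound $|f(s)|>p_{2\varepsilon}$ supplied by Proposition \ref{zak}. The resulting right-hand side is uniformly bounded because $|e^{-2\kappa x s M(s)}|=e^{-2\kappa x\func{Re}(sM(s))}\leq e^{2\kappa x d}$.

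No substantive obstacle is anticipated: the proof is a routine reassembly of Proposition \ref{zak}, the pre-statement cosh-by-sinh estimate, and the asymptotics from $(\mathrm{A1})$. The only delicate point, as in Corollary \ref{zbognjegasveovo}, is to track which of the two regimes $|\func{Re}(sM(s))|>d$ or $\leq d$ governs a given $s$, and to fix $\varepsilon$ small enough that Lemmas \ref{lema0} and \ref{lema} apply simultaneously; this bookkeeping is already organized by Proposition \ref{zak}.
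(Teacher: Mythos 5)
Your proposal is correct and follows essentially the same route as the paper: the authors give no separate proof of this corollary, but derive it exactly as you do, from Proposition \ref{zak} together with the displayed estimate $\left\vert \cosh \left( \kappa xsM\left( s\right) \right) \right\vert \leq c\left\vert \sinh \left( \kappa xsM\left( s\right) \right) \right\vert$ in the regime $\left\vert \func{Re}\left( sM\left( s\right) \right) \right\vert >d$, with the extra factor $1/\left\vert s\right\vert$ coming from $\left\vert sM\left( s\right) \right\vert \approx c_{\infty }\left\vert s\right\vert$ via $\left( \mathrm{A1}\right)$, and the subcase $\left\vert \func{Re}\left( sM\left( s\right) \right) \right\vert \leq d$ handled by the lower bound $p_{2\varepsilon }$ from Lemma \ref{lema0}. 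Your bookkeeping of the two regimes and the factorization of $\cosh$ in subcase $(ii)(b)$ match the intended argument.
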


\subsection{Continuation of the proofs of Theorems \protect\ref{thmP} and
\protect\ref{thmQ} \label{prufs}}

In this section we finish the proofs of Theorems \ref{thmP} and \ref{thmQ}.
First, we finish the proof of Theorem \ref{thmP}.

\begin{proof}[Proof of Theorem \protect\ref{thmP}. Step 2]
We calculate $P\left( x,t\right) ,$ $x\in \left[ 0,1\right] ,$ $t\in
\mathbb{R}
,$ by the integration over the contour given in Figure \ref{fig}.
\begin{figure}[h]
\centering
\includegraphics[scale=0.45]{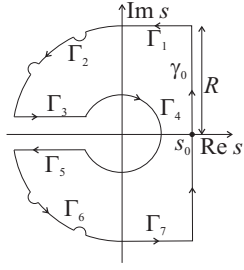}
\caption{Integration contour $\Gamma $}
\label{fig}
\end{figure}
Small, inside or outside half-circles, depending on the zeros of $f$ near $%
\Gamma _{2}$ and $\Gamma _{6},$ have radius $\varepsilon $ determined in
Corollaries \ref{zbognjegasveovo} and \ref{IIzbognjegasveovo}. This will be
explained in the proof. The Cauchy residues theorem yields
\begin{equation}
\oint\nolimits_{\Gamma }\tilde{P}\left( x,s\right) \mathrm{e}^{st}\mathrm{d}%
s=2\pi \mathrm{i}\sum_{n=1}^{\infty }\left( \func{Res}\left( \tilde{P}\left(
x,s\right) \mathrm{e}^{st},s_{n}\right) +\func{Res}\left( \tilde{P}\left(
x,s\right) \mathrm{e}^{st},\bar{s}_{n}\right) \right) ,  \label{KF-P}
\end{equation}%
where $\Gamma =\Gamma _{1}\cup \Gamma _{2}\cup \Gamma _{3}\cup \Gamma
_{4}\cup \Gamma _{5}\cup \Gamma _{6}\cup \Gamma _{7}\cup \gamma _{0},$ so
that poles of $\tilde{P}$ lie inside the contour $\Gamma .$

First we show that the series of residues in (\ref{P1}) is real-valued and
convergent. Proposition \ref{pr-left} implies that the poles $s_{n},$ $n\in
\mathbb{N}
,$ of $\tilde{P},$ given by (\ref{P-tilda}), are simple for sufficiently
large $n$. Then the residues in (\ref{KF-P}) can be calculated using (\ref%
{res-P}) as%
\begin{eqnarray}
\func{Res}\left( \tilde{P}\left( x,s\right) \mathrm{e}^{st},s_{n}\right) &=&%
\left[ \frac{M\left( s\right) \sinh \left( \kappa xsM\left( s\right) \right)
}{\left( 1+\kappa ^{2}\right) \sinh \left( \kappa sM\left( s\right) \right)
+\kappa sM\left( s\right) \cosh \left( \kappa sM\left( s\right) \right) }%
\right] _{s=s_{n}}  \notag \\
&&\times \left[ \frac{\mathrm{e}^{st}}{s\frac{\mathrm{d}}{\mathrm{d}s}\left(
sM\left( s\right) \right) }\right] _{s=s_{n}},\;\;x\in \left[ 0,1\right]
,\;t>0.  \label{rez}
\end{eqnarray}%
Substituting (\ref{sn}) in (\ref{rez}), one obtains ($x\in \left[ 0,1\right]
,$ $t>0$)%
\begin{equation*}
\func{Res}\left( \tilde{P}\left( x,s\right) \mathrm{e}^{st},s_{n}\right) =%
\frac{w_{n}\sin \left( \kappa w_{n}x\right) }{\left( 1+\kappa ^{2}\right)
\sin \left( \kappa w_{n}\right) +\kappa w_{n}\cos \left( \kappa w_{n}\right)
}\frac{\mathrm{e}^{s_{n}t}}{\left[ s^{2}\frac{\mathrm{d}}{\mathrm{d}s}\left(
sM\left( s\right) \right) \right] _{s=s_{n}}}.
\end{equation*}%
Proposition \ref{pr-left} implies%
\begin{equation*}
\left\vert \mathrm{e}^{s_{n}t}\right\vert <\bar{c}\mathrm{e}^{at},\;\;t>0,
\end{equation*}%
for some $a\in
\mathbb{R}
.$ Since $\left\vert s_{n}\right\vert \approx \frac{n\pi }{c_{\infty }\kappa
}$ and $\left\vert w_{n}\right\vert \approx \frac{n\pi }{\kappa },$ for $%
n>n_{0},\ $it follows%
\begin{equation*}
\left\vert \func{Res}\left( \tilde{P}\left( x,s\right) \mathrm{e}%
^{st},s_{n}\right) \right\vert \leq \frac{\bar{c}}{\kappa }\frac{\mathrm{e}%
^{at}}{\left\vert \left[ s^{2}\frac{\mathrm{d}}{\mathrm{d}s}\left( sM\left(
s\right) \right) \right] _{s=s_{n}}\right\vert },\;\;x\in \left[ 0,1\right]
,\;t>0.
\end{equation*}%
Now we use assumption $\left( \mathrm{A3}\right) .$ This implies that there
exists $K>0$ such that%
\begin{equation*}
\left\vert \func{Res}\left( \tilde{P}\left( x,s\right) \mathrm{e}%
^{st},s_{n}\right) \right\vert \leq K\frac{\mathrm{e}^{at}}{n^{2}},\;\;x\in %
\left[ 0,1\right] ,\;t>0.
\end{equation*}%
This implies that the series of residues in (\ref{KF-P}), i.e., in (\ref{P1}%
) is convergent.

Now, we calculate the integral over $\Gamma $ in (\ref{KF-P}). First, we
consider the integral along contour $\Gamma _{1}=\left\{ s=p+\mathrm{i}R\mid
p\in \left[ 0,s_{0}\right] ,\;R>0\right\} ,$ where $R$ is defined as
follows. Take $n_{0}$ so that $\left\vert \func{Im}s_{n}-\frac{n\pi }{%
c_{\infty }\kappa }\right\vert <\eta ,$ where $0<\eta \ll \frac{1}{2}\frac{%
\pi }{c_{\infty }\kappa },$ for $n>n_{0},$ and put%
\begin{equation}
R=\frac{n\pi }{c_{\infty }\kappa }+\frac{1}{2}\frac{\pi }{c_{\infty }\kappa }%
,\;\;n>n_{0}.  \label{R}
\end{equation}%
By (\ref{P-tilda}) and Corollary \ref{zbognjegasveovo}, we have
\begin{equation}
\left\vert \tilde{P}\left( x,s\right) \right\vert \leq \frac{C}{\left\vert
s\right\vert ^{2}},\;\;\left\vert s\right\vert \rightarrow \infty .
\label{P-s-besk}
\end{equation}%
Using (\ref{P-s-besk}), we calculate the integral over $\Gamma _{1}$ as%
\begin{eqnarray*}
\lim_{R\rightarrow \infty }\left\vert \int\nolimits_{\Gamma _{1}}\tilde{P}%
\left( x,s\right) \mathrm{e}^{st}\mathrm{d}s\right\vert &\leq
&\lim_{R\rightarrow \infty }\int_{0}^{s_{0}}\left\vert \tilde{P}\left( x,p+%
\mathrm{i}R\right) \right\vert \left\vert \mathrm{e}^{\left( p+\mathrm{i}%
R\right) t}\right\vert \mathrm{d}p \\
&\leq &C\lim_{R\rightarrow \infty }\int_{0}^{s_{0}}\frac{1}{R^{2}}\mathrm{e}%
^{pt}\mathrm{d}p=0,\;\;x\in \left[ 0,1\right] ,\;t>0.
\end{eqnarray*}%
Similar arguments are valid for the integral along contour $\Gamma _{7}.$
Thus, we have%
\begin{equation*}
\lim\limits_{R\rightarrow \infty }\left\vert \int\nolimits_{\Gamma _{7}}%
\tilde{P}\left( x,s\right) \mathrm{e}^{st}\mathrm{d}s\right\vert =0,\;\;x\in %
\left[ 0,1\right] ,\;t>0.
\end{equation*}

Next, we consider the integral along contour $\Gamma _{2}.$ As it is said
for Figure \ref{fig}, the contour $\Gamma _{2}$ consists of parts of contour
$\Gamma _{R}=\left\{ s=R\,\mathrm{e}^{\mathrm{i}\phi }\mid \phi \in \left[
\frac{\pi }{2},\pi \right] \right\} $ and of finite number of contours $%
\Gamma _{\varepsilon }=\left\{ \left\vert s-s_{k}\right\vert =\varepsilon
\mid f\left( s_{k}\right) =0\right\} $ encircling the poles $s_{k},$ either
from inside, or from outside of $\Gamma _{R}.$ (Note that the distances
between poles are greater than $\varepsilon ,$ $n>n_{0}$.) More precisely,
if a pole is inside of $D_{R},$ then $\Gamma _{\varepsilon }$ is outside of $%
D_{R}$ and if a pole is outside of $D_{R},$ then $\Gamma _{\varepsilon }$ is
inside of $D_{R}.$ By (\ref{P-s-besk}), the integral over the contour $%
\Gamma _{2}$ becomes%
\begin{eqnarray*}
\lim_{R\rightarrow \infty }\left\vert \int\nolimits_{\Gamma _{2}}\tilde{P}%
\left( x,s\right) \mathrm{e}^{st}\mathrm{d}s\right\vert &\leq
&\lim_{R\rightarrow \infty }\int\nolimits_{\frac{\pi }{2}}^{\pi }\left\vert
\tilde{P}\left( x,R\mathrm{e}^{\mathrm{i}\phi }\right) \right\vert
\left\vert \mathrm{e}^{Rt\mathrm{e}^{\mathrm{i}\phi }}\right\vert \left\vert
\mathrm{i}R\mathrm{e}^{\mathrm{i}\phi }\right\vert \mathrm{d}\phi \\
&\leq &C\lim_{R\rightarrow \infty }\int\nolimits_{\frac{\pi }{2}}^{\pi }%
\frac{1}{R}\mathrm{e}^{Rt\cos \phi }\mathrm{d}\phi =0,\;\;x\in \left[ 0,1%
\right] ,\;t>0,
\end{eqnarray*}%
since $\cos \phi \leq 0$ for $\phi \in \left[ \frac{\pi }{2},\pi \right] .$
Similar arguments are valid for the integral along $\Gamma _{6}.$ Thus, we
have%
\begin{equation*}
\lim\limits_{R\rightarrow \infty }\left\vert \int\nolimits_{\Gamma _{6}}%
\tilde{P}\left( x,s\right) \mathrm{e}^{st}\mathrm{d}s\right\vert =0,\;\;x\in %
\left[ 0,1\right] ,\;t>0.
\end{equation*}

Consider the integral along $\Gamma _{4}.$ Let $\left\vert s\right\vert
\rightarrow 0.$ Then, by $\left( \mathrm{A1}\right) ,$ $sM\left( s\right)
\rightarrow 0,$ $\cosh \left( \kappa sM\left( s\right) \right) \rightarrow
1, $ $\sinh \left( \kappa sM\left( s\right) \right) \rightarrow 0$ and $%
\sinh \left( \kappa xsM\left( s\right) \right) \approx \kappa xsM\left(
s\right) .$ Hence, from (\ref{P-tilda}) we have
\begin{equation}
\left\vert \tilde{P}\left( x,s\right) \right\vert \approx x\left\vert
M\left( s\right) \right\vert ^{2}\approx c_{0}^{2}x,\;\;x\in \left[ 0,1%
\right] ,\;s\in V,|s|\rightarrow 0.  \label{P za s nula}
\end{equation}%
The integration along contour $\Gamma _{4}$ gives%
\begin{eqnarray*}
\lim_{r\rightarrow 0}\left\vert \int\nolimits_{\Gamma _{4}}\tilde{P}\left(
x,s\right) \mathrm{e}^{st}\mathrm{d}s\right\vert &\leq &\lim_{r\rightarrow
0}\int\nolimits_{-\pi }^{\pi }\left\vert \tilde{P}\left( x,r\mathrm{e}^{%
\mathrm{i}\phi }\right) \right\vert \left\vert \mathrm{e}^{rt\mathrm{e}^{%
\mathrm{i}\phi }}\right\vert \left\vert \mathrm{i}r\mathrm{e}^{\mathrm{i}%
\phi }\right\vert \mathrm{d}\phi \\
&\leq &c_{0}^{2}x\lim_{r\rightarrow 0}\int\nolimits_{-\pi }^{\pi }r\mathrm{e}%
^{rt\cos \phi }\mathrm{d}\phi =0,\;\;x\in \left[ 0,1\right] ,\;t>0,
\end{eqnarray*}%
where we used (\ref{P za s nula}).

Integrals along $\Gamma _{3},$ $\Gamma _{5}$ and $\gamma _{0}$ give ($x\in %
\left[ 0,1\right] ,$ $t>0$)%
\begin{eqnarray}
\lim_{\substack{ R\rightarrow \infty  \\ r\rightarrow 0}}\int\nolimits_{%
\Gamma _{3}}\tilde{P}\left( x,s\right) \mathrm{e}^{st}\mathrm{d}s
&=&\int\nolimits_{0}^{\infty }\frac{M\left( q\mathrm{e}^{\mathrm{i}\pi
}\right) \sinh \left( \kappa xqM\left( q\mathrm{e}^{\mathrm{i}\pi }\right)
\right) \mathrm{e}^{-qt}}{q\left( qM\left( q\mathrm{e}^{\mathrm{i}\pi
}\right) \sinh \left( \kappa qM\left( q\mathrm{e}^{\mathrm{i}\pi }\right)
\right) +\kappa \cosh \left( \kappa qM\left( q\mathrm{e}^{\mathrm{i}\pi
}\right) \right) \right) }\mathrm{d}q,  \notag \\
&&  \label{P-plus} \\
\lim_{\substack{ R\rightarrow \infty  \\ r\rightarrow 0}}\int\nolimits_{%
\Gamma _{5}}\tilde{P}\left( x,s\right) \mathrm{e}^{st}\mathrm{d}s
&=&-\int\nolimits_{0}^{\infty }\frac{M\left( q\mathrm{e}^{-\mathrm{i}\pi
}\right) \sinh \left( \kappa xqM\left( q\mathrm{e}^{-\mathrm{i}\pi }\right)
\right) \mathrm{e}^{-qt}}{q\left( qM\left( q\mathrm{e}^{-\mathrm{i}\pi
}\right) \sinh \left( \kappa qM\left( q\mathrm{e}^{-\mathrm{i}\pi }\right)
\right) +\kappa \cosh \left( \kappa qM\left( q\mathrm{e}^{-\mathrm{i}\pi
}\right) \right) \right) }\mathrm{d}q,  \notag \\
&&  \label{P-minus} \\
\lim_{R\rightarrow \infty }\int\nolimits_{\gamma _{0}}\tilde{P}\left(
x,s\right) \mathrm{e}^{st}\mathrm{d}s &=&2\pi \mathrm{i}P\left( x,t\right) .
\label{P-pi}
\end{eqnarray}%
We note that (\ref{P-pi}) is valid if the inversion of the Laplace transform
exists, which is true since all the singularities of $\tilde{P}$ are left
from the line $\gamma _{0}$ and the estimates on $\tilde{P}$ over $\gamma
_{0}$ imply the convergence of the integral. Summing up (\ref{P-plus}), (\ref%
{P-minus}) and (\ref{P-pi}) we obtain the left hand side of (\ref{KF-P}) and
finally $P$ in the form given by (\ref{P1}). Analyzing separately%
\begin{eqnarray*}
&&\frac{1}{\pi }\dint\nolimits_{0}^{\infty }\func{Im}\left( \frac{M\left( q%
\mathrm{e}^{-\mathrm{i}\pi }\right) \sinh \left( \kappa xqM\left( q\mathrm{e}%
^{-\mathrm{i}\pi }\right) \right) }{qM\left( q\mathrm{e}^{-\mathrm{i}\pi
}\right) \sinh \left( \kappa qM\left( q\mathrm{e}^{-\mathrm{i}\pi }\right)
\right) +\kappa \cosh \left( \kappa qM\left( q\mathrm{e}^{-\mathrm{i}\pi
}\right) \right) }\right) \frac{\mathrm{e}^{-qt}}{q}\mathrm{d}q, \\
&&2\sum_{n=1}^{\infty }\func{Re}\left( \func{Res}\left( \tilde{P}\left(
x,s\right) \mathrm{e}^{st},s_{n}\right) \right) ,
\end{eqnarray*}%
we conclude that both terms appearing in (\ref{P1}) are continuous functions
on $t\in \left[ 0,\infty \right) ,$ for every $x\in \left[ 0,1\right] .$ The
continuity also holds with respect to $x\in \left[ 0,1\right] $ if we fix $%
t\in \left[ 0,\infty \right) $. This implies that $u$ is a continuous
function on $\left[ 0,1\right] \times \left[ 0,\infty \right) .$ From the
uniqueness of the Laplace transform it follows that $u$ is unique. Since $F$
belongs to $\mathcal{S}_{+}^{\prime }$, it follows that%
\begin{equation*}
u\left( x,\cdot \right) =F\left( \cdot \right) \ast P\left( x,\cdot \right)
\in \mathcal{S}_{+}^{\prime },
\end{equation*}%
for every $x\in \left[ 0,1\right] $ and $u\in C\left( \left[ 0,1\right] ,%
\mathcal{S}_{+}^{\prime }\right) .$ Moreover, if $F\in L_{loc}^{1}\left( %
\left[ 0,\infty \right) \right) ,$ then $u\in C\left( \left[ 0,1\right]
\times \left[ 0,\infty \right) \right) ,$ since $P$ is continuous.
\end{proof}

Next, we complete the proof of Theorem \ref{thmQ}.

\begin{proof}[Proof of Theorem \protect\ref{thmQ}. Step 2]
Since $\tilde{F}\left( s\right) =\tilde{H}\left( s\right) =\frac{1}{s},$ $%
s\neq 0,$ by (\ref{u,sigma-tilda}) and (\ref{Q-tilda}) we obtain%
\begin{equation}
\tilde{\sigma}_{H}\left( x,s\right) =\frac{1}{s}\frac{\kappa \cosh \left(
\kappa xsM\left( s\right) \right) }{sM\left( s\right) \sinh \left( \kappa
sM\left( s\right) \right) +\kappa \cosh \left( \kappa sM\left( s\right)
\right) },\;\;x\in \left[ 0,1\right] ,\;s\in V.  \label{sigma-h-tilda}
\end{equation}%
We calculate $\sigma _{H}\left( x,t\right) ,$ $x\in \left[ 0,1\right] ,$ $%
t\in
\mathbb{R}
,$ by the integration over the same contour from Figure \ref{fig}. The
Cauchy residues theorem yields
\begin{equation}
\oint\nolimits_{\Gamma }\tilde{\sigma}_{H}\left( x,s\right) \mathrm{e}^{st}%
\mathrm{d}s=2\pi \mathrm{i}\sum_{n=1}^{\infty }\left( \func{Res}\left(
\tilde{\sigma}_{H}\left( x,s\right) \mathrm{e}^{st},s_{n}\right) +\func{Res}%
\left( \tilde{\sigma}_{H}\left( x,s\right) \mathrm{e}^{st},\bar{s}%
_{n}\right) \right) ,  \label{KF-Q}
\end{equation}%
so that poles of $\tilde{Q}$ lie inside the contour $\Gamma $.

First we show that the series of residues in (\ref{Q1}) is convergent and
real-valued. The poles $s_{n},$ $n\in
\mathbb{N}
,$ of $\tilde{\sigma}_{H},$ given by (\ref{sigma-h-tilda}) are the same as
for the function $\tilde{P},$ (\ref{P-tilda}). By Proposition \ref{pr-left}
we have that the poles $s_{n},$ $n\in
\mathbb{N}
,$ are simple for sufficiently large $n.$ Then, for $n>n_{0},$ the residues
in (\ref{KF-Q}) can be calculated using (\ref{res-Q}) as%
\begin{eqnarray*}
\func{Res}\left( \tilde{\sigma}_{H}\left( x,s\right) \mathrm{e}%
^{st},s_{n}\right) &=&\left[ \frac{\kappa \cosh \left( \kappa xsM\left(
s\right) \right) }{\left( 1+\kappa ^{2}\right) \sinh \left( \kappa sM\left(
s\right) \right) +\kappa sM\left( s\right) \cosh \left( \kappa sM\left(
s\right) \right) }\right] _{s=s_{n}} \\
&&\times \left[ \frac{\mathrm{e}^{st}}{s\frac{\mathrm{d}}{\mathrm{d}s}\left(
sM\left( s\right) \right) }\right] _{s=s_{n}},\;\;x\in \left[ 0,1\right]
,\;t>0.
\end{eqnarray*}%
By the use of (\ref{sn}) we obtain ($x\in \left[ 0,1\right] ,$ $t>0$)%
\begin{equation*}
\func{Res}\left( \tilde{\sigma}_{H}\left( x,s\right) \mathrm{e}%
^{st},s_{n}\right) =\frac{\kappa \cos \left( \kappa w_{n}x\right) }{\left(
1+\kappa ^{2}\right) \sin \left( \kappa w_{n}\right) +\kappa w_{n}\cos
\left( \kappa w_{n}\right) }\frac{\mathrm{e}^{s_{n}t}}{\left[ s\frac{\mathrm{%
d}}{\mathrm{d}s}\left( sM\left( s\right) \right) \right] _{s=s_{n}}}.
\end{equation*}%
Proposition \ref{pr-left} implies%
\begin{equation*}
\left\vert \mathrm{e}^{s_{n}t}\right\vert <\bar{c}\mathrm{e}^{at},\;\;t>0,
\end{equation*}%
for some $a\in
\mathbb{R}
.$ Since $\left\vert s_{n}\right\vert \approx \frac{n\pi }{c_{\infty }\kappa
}$ and $\left\vert w_{n}\right\vert \approx \frac{n\pi }{\kappa },$ for $%
n>n_{0},\ $it follows%
\begin{equation*}
\left\vert \func{Res}\left( \tilde{\sigma}_{H}\left( x,s\right) \mathrm{e}%
^{st},s_{n}\right) \right\vert \leq \frac{\kappa }{n\pi }\frac{\bar{c}%
\mathrm{e}^{at}}{\left\vert \left[ s\frac{\mathrm{d}}{\mathrm{d}s}\left(
sM\left( s\right) \right) \right] _{s=s_{n}}\right\vert },\;\;x\in \left[ 0,1%
\right] ,\;t>0,\;n>n_{0}.
\end{equation*}%
Now we use assumption $\left( \mathrm{A3}\right) $ and conclude that there
exists $K>0$ such that
\begin{equation*}
\left\vert \func{Res}\left( \tilde{\sigma}_{H}\left( x,s\right) \mathrm{e}%
^{st},s_{n}\right) \right\vert \leq K\frac{\mathrm{e}^{at}}{n^{2}},\;\;x\in %
\left[ 0,1\right] ,\;t>0,\;n>n_{0}.
\end{equation*}%
This implies that the series of residues in (\ref{KF-Q}) (i.e., in (\ref{Q1}%
)) is convergent.

Let us calculate the integral over $\Gamma $ in (\ref{KF-Q}). Consider the
integral along contour%
\begin{equation*}
\Gamma _{1}=\left\{ s=p+\mathrm{i}R\mid p\in \left[ 0,s_{0}\right]
,\;R>0\right\} ,
\end{equation*}%
where $R$ is defined by (\ref{R}). We use estimates%
\begin{equation}
\frac{\left\vert \cosh \left( \kappa xsM\left( s\right) \right) \right\vert
}{\left\vert f(s)\right\vert }\leq \frac{C}{\left\vert s\right\vert },\;\;%
\text{or}\;\;\frac{\left\vert \cosh \left( \kappa xsM\left( s\right) \right)
\right\vert }{\left\vert f(s)\right\vert }\leq C,  \label{estim}
\end{equation}%
from Corollary \ref{IIzbognjegasveovo} in (\ref{sigma-h-tilda}). With the
first estimate in (\ref{estim}) we calculate the integral over $\Gamma _{1}$
as
\begin{eqnarray*}
\lim_{R\rightarrow \infty }\left\vert \int\nolimits_{\Gamma _{1}}\tilde{%
\sigma}_{H}\left( x,s\right) \mathrm{e}^{st}\mathrm{d}s\right\vert &\leq
&\lim_{R\rightarrow \infty }\int_{0}^{s_{0}}\left\vert \tilde{\sigma}%
_{H}\left( x,p+\mathrm{i}R\right) \right\vert \left\vert \mathrm{e}^{\left(
p+\mathrm{i}R\right) t}\right\vert \mathrm{d}p \\
&\leq &C\kappa \lim_{R\rightarrow \infty }\int_{0}^{s_{0}}\frac{1}{R^{2}}%
\mathrm{e}^{pt}\mathrm{d}p=0,\;\;x\in \left[ 0,1\right] ,\;t>0,
\end{eqnarray*}%
while with the second estimate in (\ref{estim}), we have%
\begin{eqnarray*}
\lim_{R\rightarrow \infty }\left\vert \int\nolimits_{\Gamma _{1}}\tilde{%
\sigma}_{H}\left( x,s\right) \mathrm{e}^{st}\mathrm{d}s\right\vert &\leq
&\lim_{R\rightarrow \infty }\int_{0}^{s_{0}}\left\vert \tilde{\sigma}%
_{H}\left( x,p+\mathrm{i}R\right) \right\vert \left\vert \mathrm{e}^{\left(
p+\mathrm{i}R\right) t}\right\vert \mathrm{d}p \\
&\leq &C\kappa \lim_{R\rightarrow \infty }\int_{0}^{s_{0}}\frac{1}{R}\mathrm{%
e}^{pt}\mathrm{d}p=0,\;\;x\in \left[ 0,1\right] ,\;t>0,
\end{eqnarray*}%
Similar arguments are valid for the integral along $\Gamma _{7}.$ Thus, we
have%
\begin{equation*}
\lim\limits_{R\rightarrow \infty }\left\vert \int\nolimits_{\Gamma _{7}}%
\tilde{\sigma}_{H}\left( x,s\right) \mathrm{e}^{st}\mathrm{d}s\right\vert
=0,\;\;x\in \left[ 0,1\right] ,\;t>0.
\end{equation*}

Next, we consider the integral along contour $\Gamma _{2},$ defined as in
the proof of Theorem \ref{thmP}. With the first estimate in (\ref{estim}) we
have that the integral over $\Gamma _{2}$ becomes
\begin{eqnarray*}
\lim_{R\rightarrow \infty }\left\vert \int\nolimits_{\Gamma _{2}}\tilde{%
\sigma}_{H}\left( x,s\right) \mathrm{e}^{st}\mathrm{d}s\right\vert &\leq
&\lim_{R\rightarrow \infty }\int\nolimits_{\frac{\pi }{2}}^{\pi }\left\vert
\tilde{\sigma}_{H}\left( x,R\mathrm{e}^{\mathrm{i}\phi }\right) \right\vert
\left\vert \mathrm{e}^{Rt\mathrm{e}^{\mathrm{i}\phi }}\right\vert \left\vert
\mathrm{i}R\mathrm{e}^{\mathrm{i}\phi }\right\vert \mathrm{d}\phi \\
&\leq &C\kappa \lim_{R\rightarrow \infty }\int\nolimits_{\frac{\pi }{2}%
}^{\pi }\frac{1}{R}\mathrm{e}^{Rt\cos \phi }\mathrm{d}\phi =0,\;\;x\in \left[
0,1\right] ,\;t>0,
\end{eqnarray*}%
since $\cos \phi \leq 0$ for $\phi \in \left[ \frac{\pi }{2},\pi \right] .$
The integral over $\Gamma _{2},$ in the case of the second estimate in (\ref%
{estim}) becomes%
\begin{eqnarray*}
\lim_{R\rightarrow \infty }\left\vert \int\nolimits_{\Gamma _{2}}\tilde{%
\sigma}_{H}\left( x,s\right) \mathrm{e}^{st}\mathrm{d}s\right\vert &\leq
&\lim_{R\rightarrow \infty }\int\nolimits_{\frac{\pi }{2}}^{\pi }\left\vert
\tilde{\sigma}_{H}\left( x,R\mathrm{e}^{\mathrm{i}\phi }\right) \right\vert
\left\vert \mathrm{e}^{Rt\mathrm{e}^{\mathrm{i}\phi }}\right\vert \left\vert
\mathrm{i}R\mathrm{e}^{\mathrm{i}\phi }\right\vert \mathrm{d}\phi \\
&\leq &C\kappa \lim_{R\rightarrow \infty }\int\nolimits_{\frac{\pi }{2}%
}^{\pi }\mathrm{e}^{Rt\cos \phi }\mathrm{d}\phi =0,\;\;x\in \left[ 0,1\right]
,\;t>0,
\end{eqnarray*}%
since $\cos \phi \leq 0$ for $\phi \in \left[ \frac{\pi }{2},\pi \right] .$
Similar arguments are valid for the integral along $\Gamma _{6}.$ Thus, we
have%
\begin{equation*}
\lim\limits_{R\rightarrow \infty }\left\vert \int\nolimits_{\Gamma _{6}}%
\tilde{\sigma}_{H}\left( x,s\right) \mathrm{e}^{st}\mathrm{d}s\right\vert
=0,\;\;x\in \left[ 0,1\right] ,\;t>0.
\end{equation*}

Consider the integral along contour $\Gamma _{4}.$ Let $\left\vert
s\right\vert \rightarrow 0.$ Then, by $\left( \mathrm{A1}\right) ,$ $%
sM\left( s\right) \rightarrow 0,$ $\cosh \left( \kappa sM\left( s\right)
\right) \rightarrow 1,$ $\sinh \left( \kappa sM\left( s\right) \right)
\rightarrow 0$ and $\cosh \left( \kappa xsM\left( s\right) \right)
\rightarrow 1.$ Hence, from (\ref{sigma-h-tilda}) we have%
\begin{equation*}
s\sigma _{H}\left( x,s\right) \approx 1.
\end{equation*}%
The integration along contour $\Gamma _{4}$ gives
\begin{eqnarray}
\lim_{r\rightarrow 0}\int\nolimits_{\Gamma _{4}}\tilde{\sigma}_{H}\left(
x,s\right) \mathrm{e}^{st}\mathrm{d}s &=&\lim_{r\rightarrow
0}\int\nolimits_{\pi }^{-\pi }\tilde{\sigma}_{H}\left( x,r\mathrm{e}^{%
\mathrm{i}\phi }\right) \mathrm{e}^{rt\mathrm{e}^{\mathrm{i}\phi }}\mathrm{i}%
r\mathrm{e}^{\mathrm{i}\phi }\mathrm{d}\phi  \notag \\
&=&\mathrm{i}\int\nolimits_{\pi }^{-\pi }\mathrm{d}\phi =-2\pi \mathrm{i}%
,\;\;x\in \left[ 0,1\right] ,\;t>0.  \label{Q-epsilon}
\end{eqnarray}

Integrals along $\Gamma _{3},$ $\Gamma _{5}$ and $\gamma _{0}$ give ($x\in %
\left[ 0,1\right] ,$ $t>0$)%
\begin{eqnarray}
\lim_{\substack{ R\rightarrow \infty  \\ r\rightarrow 0}}\int\nolimits_{%
\Gamma _{3}}\tilde{\sigma}_{H}\left( x,s\right) \mathrm{e}^{st}\mathrm{d}s
&=&-\int\nolimits_{0}^{\infty }\frac{\kappa \cosh \left( \kappa xqM\left( q%
\mathrm{e}^{\mathrm{i}\pi }\right) \right) \mathrm{e}^{-qt}}{q\left(
qM\left( q\mathrm{e}^{\mathrm{i}\pi }\right) \sinh \left( \kappa qM\left( q%
\mathrm{e}^{\mathrm{i}\pi }\right) \right) +\kappa \cosh \left( \kappa
qM\left( q\mathrm{e}^{\mathrm{i}\pi }\right) \right) \right) }\mathrm{d}q,
\notag \\
&&  \label{Q-plus} \\
\lim_{\substack{ R\rightarrow \infty  \\ r\rightarrow 0}}\int\nolimits_{%
\Gamma _{5}}\tilde{\sigma}_{H}\left( x,s\right) \mathrm{e}^{st}\mathrm{d}s
&=&\int\nolimits_{0}^{\infty }\frac{\kappa \cosh \left( \kappa xqM\left( q%
\mathrm{e}^{-\mathrm{i}\pi }\right) \right) \mathrm{e}^{-qt}}{q\left(
qM\left( q\mathrm{e}^{-\mathrm{i}\pi }\right) \sinh \left( \kappa qM\left( q%
\mathrm{e}^{-\mathrm{i}\pi }\right) \right) +\kappa \cosh \left( \kappa
qM\left( q\mathrm{e}^{-\mathrm{i}\pi }\right) \right) \right) }\mathrm{d}q,
\notag \\
&&  \label{Q-minus} \\
\lim_{R\rightarrow \infty }\int\nolimits_{\gamma _{0}}\tilde{\sigma}%
_{H}\left( x,s\right) \mathrm{e}^{st}\mathrm{d}s &=&2\pi \mathrm{i}\sigma
_{H}\left( x,t\right) .  \label{Q-pi}
\end{eqnarray}%
By the same arguments as in the proof of Theorem 1 we have that (\ref{Q-pi})
is valid if the inversion of the Laplace transform exists. This is true
since all the singularities of $\tilde{\sigma}_{H}$ are left from the line $%
\gamma _{0}$ and appropriate estimates on $\tilde{\sigma}_{H}$ are
satisfied. Adding (\ref{Q-epsilon}), (\ref{Q-plus}), (\ref{Q-minus}) and (%
\ref{Q-pi}) we obtain the left hand side of (\ref{KF-Q}) and finally $\sigma
_{H}$ in the form given by (\ref{Q1}).

Function $\sigma _{H}$ is a sum of three addends: $H$ and%
\begin{eqnarray*}
&&\frac{\kappa }{\pi }\dint\nolimits_{0}^{\infty }\func{Im}\left( \frac{%
\cosh \left( \kappa xqM\left( q\mathrm{e}^{\mathrm{i}\pi }\right) \right) }{%
qM\left( q\mathrm{e}^{\mathrm{i}\pi }\right) \sinh \left( \kappa qM\left( q%
\mathrm{e}^{\mathrm{i}\pi }\right) \right) +\kappa \cosh \left( \kappa
qM\left( q\mathrm{e}^{\mathrm{i}\pi }\right) \right) }\right) \frac{\mathrm{e%
}^{-qt}}{q}\mathrm{d}q, \\
&&2\sum_{n=1}^{\infty }\func{Re}\left( \func{Res}\left( \tilde{\sigma}%
_{H}\left( x,s\right) \mathrm{e}^{st},s_{n}\right) \right) .
\end{eqnarray*}%
As in the case of Theorem \ref{thmP}, the explicit form of solution implies
that $\sigma _{H}$ is continuous on $\left[ 0,1\right] \times \left[
0,\infty \right) .$
\end{proof}

\section{The case of elastic rod\label{ER}}

We treat the case of elastic rod separately. Then, for $s\in V,$ $M\left(
s\right) =1$ ($r\left( s\right) =1$ and $h\left( s\right) \equiv 0$) and
clearly, all the conditions $\left( \mathrm{A1}\right) $ - $\left( \mathrm{A4%
}\right) $ hold. By (\ref{P-tilda}) and (\ref{Q-tilda}) we have
\begin{eqnarray}
\tilde{P}_{el}\left( x,s\right) &=&\frac{1}{s}\frac{\sinh \left( \kappa
sx\right) }{s\sinh \left( \kappa s\right) +\kappa \cosh \left( \kappa
s\right) },\;\;x\in \left[ 0,1\right] ,\;s\in V,  \label{P-t-el} \\
\tilde{Q}_{el}\left( x,s\right) &=&\frac{\kappa \cosh \left( \kappa
sx\right) }{s\sinh \left( \kappa s\right) +\kappa \cosh \left( \kappa
s\right) },\;\;x\in \left[ 0,1\right] ,\;s\in V.  \label{Q-t-el}
\end{eqnarray}

We apply the results of the previous section. Propositions \ref{pr-conj} and %
\ref{pr-conj-besk} imply that the zeros of
\begin{equation*}
f_{el}\left( s\right) :=s\sinh \left( \kappa s\right) +\kappa \cosh \left(
\kappa s\right) =0,\;\;s\in V,
\end{equation*}%
are of the form
\begin{equation*}
s_{n}=\mathrm{i}w_{n},\;\;\tan \left( \kappa w_{n}\right) =\frac{\kappa }{%
w_{n}},\;\;w_{n}\approx \pm \frac{n\pi }{\kappa },\;\;\text{as}%
\;\;n\rightarrow \infty .
\end{equation*}%
Each of these zeros is of multiplicity one for $n>n_{0}$. Moreover, all the
zeros $s_{n},$ $n>n_{0},$ of $f_{el}$ lie on the imaginary axis, so that we
do not have the branch point at $s=0.$ This implies that the integrals over $%
\Gamma _{3}$\ and $\Gamma _{5}$ (see Figure \ref{fig}) are equal to zero. So
we have the following modifications.

\begin{theorem}
\label{thmP-el}Let $F\in \mathcal{S}_{+}^{\prime }.$ Then the unique
solution $u$ to (\ref{sys-1}) - (\ref{BC}) is given by%
\begin{equation*}
u\left( x,t\right) =F\left( t\right) \ast P_{el}\left( x,t\right) ,\;\;x\in
\left[ 0,1\right] ,\;t>0,
\end{equation*}%
where%
\begin{equation}
P_{el}\left( x,t\right) =2\sum_{n=1}^{\infty }\frac{\sin \left( \kappa
w_{n}x\right) \sin \left( w_{n}t\right) }{w_{n}\left( \left( 1+\kappa
^{2}\right) \sin \left( \kappa w_{n}\right) +\kappa w_{n}\cos \left( \kappa
w_{n}\right) \right) },\;\;x\in \left[ 0,1\right] ,\;t>0.  \label{P-el}
\end{equation}

In particular, $u\in C\left( \left[ 0,1\right] ,\mathcal{S}_{+}^{\prime
}\right) .$ Moreover, if $F\in L_{loc}^{1}\left( \left[ 0,\infty \right)
\right) ,$ then $u\in C\left( \left[ 0,1\right] \times \left[ 0,\infty
\right) \right) .$
\end{theorem}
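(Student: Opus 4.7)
The plan is to obtain Theorem \ref{thmP-el} as a direct specialization of Theorem \ref{thmP} by setting $M(s)\equiv 1$. Under this choice one has $r(s)\equiv 1$, $h(s)\equiv 0$, and assumptions $(\mathrm{A1})$--$(\mathrm{A4})$ are trivially verified with $c_\infty=c_0=1$, so Theorem \ref{thmP} applies and yields the convolution representation $u(x,t)=F(t)\ast P_{el}(x,t)$ with $P_{el}$ given in principle by formula (\ref{P1}). What remains is to check that in the present setting this general representation collapses to the pure trigonometric series (\ref{P-el}).

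The key step is to exploit two features that were not available in general. First, $M\equiv 1$ is entire, so $\tilde P_{el}$ in (\ref{P-t-el}) has no branch singularity; along the keyhole contour of Figure \ref{fig} one has $M(qe^{\mathrm{i}\pi})=M(qe^{-\mathrm{i}\pi})=1$, and the two boundary integrals (\ref{P-plus}) and (\ref{P-minus}) therefore cancel exactly, wiping out the first summand in (\ref{P1}). Second, the origin is only a removable singularity of $\tilde P_{el}(x,\cdot)$: the expansions $\sinh(\kappa sx)\sim \kappa s x$ and $s\sinh(\kappa s)+\kappa\cosh(\kappa s)\to\kappa$ as $s\to0$ give $\tilde P_{el}(x,s)\to x$, so the small arc $\Gamma_4$ contributes nothing either. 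Consequently $P_{el}$ reduces to the sum of residues at the poles $s_n=\mathrm{i}w_n$ and their conjugates $\bar s_n=-\mathrm{i}w_n$, which by Propositions \ref{pr-conj} and \ref{pr-left} all lie on the imaginary axis and are simple for $n>n_0$.

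Evaluating the residue at $s_n=\mathrm{i}w_n$ via the identities $\sinh(\mathrm{i}y)=\mathrm{i}\sin y$, $\cosh(\mathrm{i}y)=\cos y$ together with $f_{el}'(s)=(1+\kappa^2)\sinh(\kappa s)+\kappa s\cosh(\kappa s)$ yields
\[
\func{Res}\bigl(\tilde P_{el}(x,s)\mathrm{e}^{st},s_n\bigr)=\frac{\sin(\kappa w_n x)\,\mathrm{e}^{\mathrm{i}w_n t}}{\mathrm{i}w_n\bigl[(1+\kappa^2)\sin(\kappa w_n)+\kappa w_n\cos(\kappa w_n)\bigr]},
\]
whose real part equals $\sin(\kappa w_n x)\sin(w_n t)\bigm/\bigl(w_n[(1+\kappa^2)\sin(\kappa w_n)+\kappa w_n\cos(\kappa w_n)]\bigr)$; doubling to account for the conjugate pole and summing over $n$ gives precisely (\ref{P-el}). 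The only mildly delicate point is convergence of the series: by Proposition \ref{pr-conj-besk} one has $w_n\approx n\pi/\kappa$, so the denominator grows like $n^2$ and the series converges absolutely and uniformly on $[0,1]\times[0,T]$ for every $T>0$, yielding continuity of $P_{el}$ on $[0,1]\times[0,\infty)$. The remaining assertions (continuity of $u$ in $\mathcal{S}_+^{\prime}$, continuity on $[0,1]\times[0,\infty)$ when $F\in L_{loc}^1$, and uniqueness via injectivity of the Laplace transform on $\mathcal{S}_+^{\prime}$) are then inherited verbatim from Theorem \ref{thmP}.
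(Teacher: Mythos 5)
Your proposal is correct and follows essentially the same route as the paper: the paper likewise notes that $(\mathrm{A1})$--$(\mathrm{A4})$ hold trivially for $M\equiv 1$, that the poles $s_n=\mathrm{i}w_n$ lie on the imaginary axis so the branch-cut integrals over $\Gamma_3$ and $\Gamma_5$ vanish, and then reduces $P_{el}$ to the same residue sum with the same $n^{-2}$ bound. The only (harmless) difference is presentational: you specialize the already-established formula (\ref{P1}) directly, while the paper re-runs the contour argument over the simplified semicircular contour $\Gamma_{el}$ of Figure \ref{figg}.
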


\begin{proof}
The explicit form of $P_{el}$ is obtained from (\ref{P-t-el}) by the use of
the Cauchy residues theorem ($x\in \left[ 0,1\right] ,$ $t>0$)%
\begin{equation}
\frac{1}{2\pi \mathrm{i}}\oint\nolimits_{\Gamma _{el}}\tilde{P}_{el}\left(
x,s\right) \mathrm{e}^{st}\mathrm{d}s=\sum_{n=1}^{\infty }\left( \func{Res}%
\left( \tilde{P}_{el}\left( x,s\right) \mathrm{e}^{st},s_{n}\right) +\func{%
Res}\left( \tilde{P}_{el}\left( x,s\right) \mathrm{e}^{st},\bar{s}%
_{n}\right) \right) ,  \label{crt}
\end{equation}%
where the integration contour $\Gamma _{el}=\Gamma _{1}\cup \Gamma _{2}\cup
\Gamma _{3}\cup \gamma _{0}$ is presented in Figure \ref{figg}.
\begin{figure}[h]
\centering
\includegraphics[scale=0.45]{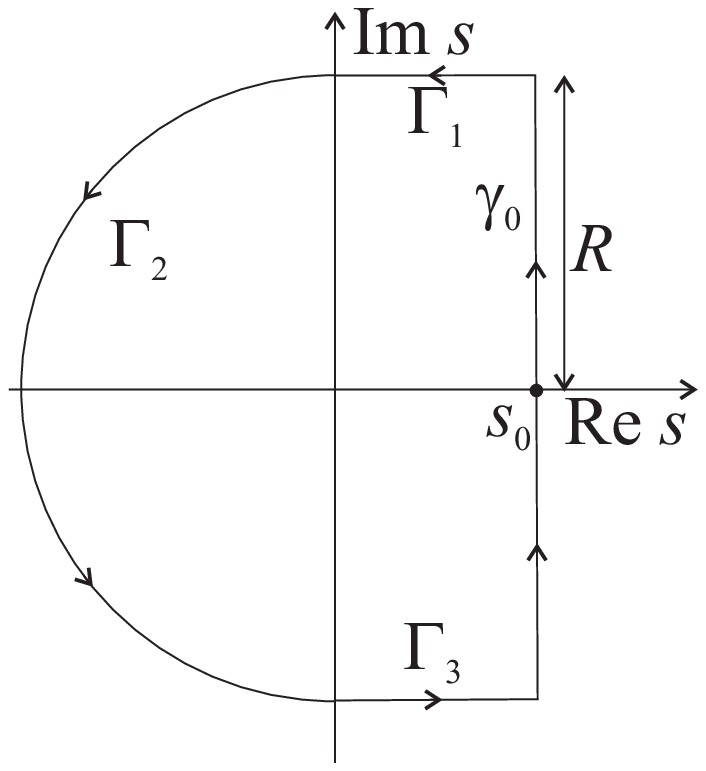}
\caption{Integration contour $\Gamma _{el}$}
\label{figg}
\end{figure}

First we show that the series of residues in (\ref{crt}) is real-valued and
convergent. Since the poles $s_{n}=\mathrm{i}w_{n}$ (and $\bar{s}_{n}=-%
\mathrm{i}w_{n}$) of $\tilde{P}_{el}\left( x,s\right) \mathrm{e}^{st}$ are
simple for $n>n_0$, the residues in (\ref{crt}) are calculated by ($x\in %
\left[ 0,1\right] ,$ $t>0$)%
\begin{eqnarray*}
\func{Res}\left( \tilde{P}_{el}\left( x,s\right) \mathrm{e}%
^{st},s_{n}\right) &=&\left[ \frac{1}{s}\frac{\sinh \left( \kappa sx\right)
\mathrm{e}^{st}}{\frac{\mathrm{d}}{\mathrm{d}s}\left( s\sinh \left( \kappa
s\right) +\kappa \cosh \left( \kappa s\right) \right) }\right] _{s=\mathrm{i}%
w_{n}} \\
&=&\frac{\sin \left( \kappa w_{n}x\right) \left( \sin \left( w_{n}t\right) -%
\mathrm{i}\cos \left( w_{n}t\right) \right) }{w_{n}\left( \left( 1+\kappa
^{2}\right) \sin \left( \kappa w_{n}\right) +\kappa w_{n}\cos \left( \kappa
w_{n}\right) \right) }.
\end{eqnarray*}%
Since $w_{n}\approx \pm \frac{n\pi }{\kappa },$ as $n\rightarrow \infty ,$
the previous expression becomes
\begin{equation*}
\left\vert \func{Res}\left( \tilde{P}_{el}\left( x,s\right) \mathrm{e}%
^{st},s_{n}\right) \right\vert \leq \frac{k}{n^{2}},\;\;x\in \left[ 0,1%
\right] ,\;t>0, n>n_0.
\end{equation*}%
We conclude that the series of residues is convergent.

Now, we calculate the integral over $\Gamma $ in (\ref{crt}). First, we
consider the integral along contour $\Gamma _{1}=\left\{ s=p+\mathrm{i}R\mid
p\in \left[ 0,s_{0}\right] ,\;R>0\right\} ,$ where $R$ is defined as
\begin{equation}
R=\frac{n\pi }{\kappa }+\frac{1}{2}\frac{\pi }{\kappa },\;\;n>n_{0}.
\label{er}
\end{equation}%
Let $x\in \left[ 0,1\right] ,$ $t>0$. By (\ref{P-t-el}) and Corollary \ref%
{zbognjegasveovo}, we have
\begin{equation}
\left\vert \tilde{P}_{el}\left( x,s\right) \right\vert \leq \frac{C}{%
\left\vert s\right\vert ^{2}},\;\;\left\vert s\right\vert \rightarrow \infty
.  \label{P-s-besk-el}
\end{equation}%
Using (\ref{P-s-besk-el}), we calculate the integral over $\Gamma _{1}$ as
\begin{eqnarray*}
\lim_{R\rightarrow \infty }\left\vert \int_{\Gamma _{1}}\tilde{P}_{el}\left(
x,s\right) \mathrm{e}^{st}\mathrm{d}s\right\vert &\leq &\lim_{R\rightarrow
\infty }\int_{0}^{s_{0}}\left\vert \tilde{P}_{el}\left( x,p+\mathrm{i}%
R\right) \right\vert \left\vert \mathrm{e}^{\left( p+\mathrm{i}R\right)
t}\right\vert \mathrm{d}p \\
&\leq &C\lim_{R\rightarrow \infty }\int_{0}^{s_{0}}\frac{1}{R^{2}}\mathrm{e}%
^{pt}\mathrm{d}p=0.
\end{eqnarray*}%
Similar arguments are valid for the integral along $\Gamma _{3}.$ Thus, we
have%
\begin{equation*}
\lim\limits_{R\rightarrow \infty }\left\vert \int\nolimits_{\Gamma _{3}}%
\tilde{P}_{el}\left( x,s\right) \mathrm{e}^{st}\mathrm{d}s\right\vert
=0,\;\;x\in \left[ 0,1\right] ,\;t>0.
\end{equation*}%
Next, we consider the integral along contour $\Gamma _{2}=\left\{ s=R\,%
\mathrm{e}^{\mathrm{i}\phi }\mid \phi \in \left[ -\frac{\pi }{2},\frac{\pi }{%
2}\right] \right\} .$ By using (\ref{P-s-besk-el}), the integral over the
contour $\Gamma _{2}$ becomes%
\begin{eqnarray*}
\lim_{R\rightarrow \infty }\left\vert \int\nolimits_{\Gamma _{2}}\tilde{P}%
_{el}\left( x,s\right) \mathrm{e}^{st}\mathrm{d}s\right\vert &\leq
&\lim_{R\rightarrow \infty }\int\nolimits_{-\frac{\pi }{2}}^{\frac{\pi }{2}%
}\left\vert \tilde{P}_{el}\left( x,R\mathrm{e}^{\mathrm{i}\phi }\right)
\right\vert \left\vert \mathrm{e}^{Rt\mathrm{e}^{\mathrm{i}\phi
}}\right\vert \left\vert \mathrm{i}R\mathrm{e}^{\mathrm{i}\phi }\right\vert
\mathrm{d}\phi \\
&\leq &C\lim_{R\rightarrow \infty }\int\nolimits_{-\frac{\pi }{2}}^{\frac{%
\pi }{2}}\frac{1}{R}\mathrm{e}^{Rt\cos \phi }\mathrm{d}\phi =0,\;\;x\in %
\left[ 0,1\right] ,\;t>0,
\end{eqnarray*}%
since $\cos \phi \leq 0$ for $\phi \in \left[ -\frac{\pi }{2},\frac{\pi }{2}%
\right] .$ Integrating along the Bromwich contour, we have%
\begin{equation*}
\lim_{R\rightarrow \infty }\int_{\gamma _{0}}\tilde{P}_{el}\left( x,s\right)
\mathrm{e}^{st}\mathrm{d}s=2\pi \mathrm{i}P_{el}\left( x,t\right) ,\;\;x\in %
\left[ 0,1\right] ,\;t>0.
\end{equation*}

Therefore, (\ref{crt}) yields $P_{el}$ in the form (\ref{P-el}). The last
assertion of the theorem follows from the proof of Theorem \ref{thmP}.
\end{proof}

\begin{theorem}
\label{thmQ-el}Let $F=H.$ Then the unique solution $\sigma _{H}^{\left(
el\right) }$ to (\ref{sys-1}) - (\ref{BC}), is given by%
\begin{equation}
\sigma _{H}^{\left( el\right) }\left( x,t\right) =-2\kappa
\sum_{n=1}^{\infty }\frac{\cos \left( \kappa w_{n}x\right) \cos \left(
w_{n}t\right) }{w_{n}\left( \left( 1+\kappa ^{2}\right) \sin \left( \kappa
w_{n}\right) +\kappa w_{n}\cos \left( \kappa w_{n}\right) \right) },\;\;x\in %
\left[ 0,1\right] ,\;t>0.  \label{sigma-el}
\end{equation}%
In particular, $\sigma _{H}^{\left( el\right) }$ is continuous on $\left[ 0,1%
\right] \times \left[ 0,\infty \right) .$
\end{theorem}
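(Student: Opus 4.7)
The plan is to follow the contour-integration strategy of the proof of Theorem \ref{thmP-el}, now applied to
\[
\tilde{\sigma}_H^{(el)}(x,s) = \frac{1}{s}\tilde{Q}_{el}(x,s) = \frac{\kappa \cosh(\kappa sx)}{s\left[s \sinh(\kappa s) + \kappa \cosh(\kappa s)\right]},
\]
obtained from (\ref{u,sigma-tilda}) and (\ref{Q-t-el}) with $\tilde{F}(s) = 1/s$. Since $M \equiv 1$ is entire, there is no branch cut in the elastic case, and the singularities of $\tilde{\sigma}_H^{(el)}$ are the simple poles $s_n = \pm \mathrm{i}\, w_n$ coming from the zeros of $f_{el}$ (simple for $n>n_0$ by Proposition \ref{pr-left}), together with a simple pole at $s=0$ from the prefactor $1/s$.

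First I would compute the residues at $s_n = \mathrm{i}w_n$ and $\bar s_n = -\mathrm{i}w_n$ using the formula (\ref{res-Q}). The identities $\sinh(\mathrm{i}\kappa w) = \mathrm{i}\sin(\kappa w)$ and $\cosh(\mathrm{i}\kappa w) = \cos(\kappa w)$ convert each complex residue into its trigonometric form, and pairing each residue with its conjugate yields precisely the summand appearing in (\ref{sigma-el}). Because $|w_n| \approx n\pi/\kappa$ by Proposition \ref{pr-conj-besk}, each term is $O(1/n^2)$ uniformly in $(x,t) \in [0,1] \times [0,T]$ for every $T>0$, so the series converges absolutely and uniformly and defines a continuous function of $(x,t)$.

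Next I would apply Cauchy's residue theorem on the contour $\Gamma_{el} = \Gamma_1 \cup \Gamma_2 \cup \Gamma_3 \cup \gamma_0$ of Figure \ref{figg}, with $R$ chosen as in (\ref{er}) so that $\Gamma_2$ stays a fixed distance from all the imaginary-axis poles. The three arc and segment integrals must vanish as $R \to \infty$. The required bounds come from the elastic specialization of Corollary \ref{IIzbognjegasveovo}: away from the poles one has either $|\tilde{\sigma}_H^{(el)}(x,s)| \leq C/|s|^2$ (when $|\func{Re}s|$ is bounded below, using that $|\cosh(\kappa sx)| \leq c|\sinh(\kappa s)|$ there) or $|\tilde{\sigma}_H^{(el)}(x,s)| \leq C/|s|$ (otherwise, via Proposition \ref{estimreal}(ii)). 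Combined with the $\mathrm{e}^{Rt\cos\phi}$ decay on the left half-plane arc $\Gamma_2$, both bounds kill the arc integrals in the limit, by the same computations carried out in the proofs of Theorems \ref{thmP-el} and \ref{thmQ}.

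The one genuinely new feature compared with Theorem \ref{thmP-el} is the pole at $s=0$: using $\sinh(\kappa s) \sim \kappa s$ and $\cosh(\kappa s) \to 1$ as $s \to 0$, one finds that the residue of $\tilde{\sigma}_H^{(el)}(x,s)\mathrm{e}^{st}$ at the origin equals $1$. Including this residue together with the conjugate-pair contributions in the Cauchy sum produces the formula (\ref{sigma-el}); uniqueness is then immediate from the injectivity of the Laplace transform on $\mathcal{S}_+^{\prime}$, and continuity of $\sigma_H^{(el)}$ on $[0,1]\times[0,\infty)$ follows from the uniform convergence established above. I expect the main technical point to be the sharp control of $\cosh(\kappa sx)/f_{el}(s)$ on $\Gamma_2$, but the elastic case is considerably simpler than Theorem \ref{thmQ} because no Hankel-type deformation around a branch cut on $(-\infty,0]$ is needed.
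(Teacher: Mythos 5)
Your proposal follows exactly the paper's route: the same contour $\Gamma _{el}$ from Figure \ref{figg} with $R$ as in (\ref{er}), the same residue computation at $s_{n}=\pm \mathrm{i}w_{n}$ via (\ref{res-Q}) and the identities $\sinh (\mathrm{i}\kappa w)=\mathrm{i}\sin (\kappa w)$, $\cosh (\mathrm{i}\kappa w)=\cos (\kappa w)$, the same $O(1/n^{2})$ bound from $w_{n}\approx \pm n\pi /\kappa $, and the same estimates from Corollary \ref{IIzbognjegasveovo} to kill the integrals over $\Gamma _{1},\Gamma _{2},\Gamma _{3}$. The one point where you genuinely depart from the paper is the pole at $s=0$, and there you are right and the paper is not: since $f_{el}(0)=\kappa \neq 0$, the function $\tilde{\sigma}_{H}^{(el)}(x,s)=\frac{1}{s}\,\frac{\kappa \cosh (\kappa sx)}{f_{el}(s)}$ has a simple pole at the origin with residue $1$, the origin lies inside $\Gamma _{el}$ (there is no branch cut and no small circle $\Gamma _{4}$ to exclude it, unlike in Theorem \ref{thmQ} where that small circle is precisely what supplies the $H(t)$ term in (\ref{Q1})), and the paper's identity (\ref{crt-s}) silently omits this residue. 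Your only slip is the assertion that including it ``produces the formula (\ref{sigma-el})'': it produces (\ref{sigma-el}) \emph{plus} $H(t)$. That extra $H(t)$ is in fact required: it is what survives of (\ref{Q1}) in the elastic limit (where the branch-cut integral vanishes because the integrand is real); it is forced by $\lim_{s\rightarrow 0}s\tilde{\sigma}_{H}^{(el)}(x,s)=1$, whereas the series in (\ref{sigma-el}) has zero time-average; and it is needed for consistency with $\sigma (x,0^{+})=0$, which then requires the series at $t=0$ to sum to $-1$ rather than $0$. So your argument is correct and, on this point, more careful than the paper's own proof; the statement as printed is missing the $H(t)$ term.
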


\begin{proof}
The explicit forms of $\sigma _{H}^{\left( el\right) }$ is obtained from (%
\ref{Q-t-el}) and (\ref{u,sigma-tilda})$_{2},$ with $\tilde{F}=\frac{1}{s},$
i.e.
\begin{equation*}
\tilde{\sigma}_{H}^{\left( el\right) }\left( x,s\right) =\frac{1}{s}\frac{%
\kappa \cosh \left( \kappa sx\right) }{s\sinh \left( \kappa s\right) +\kappa
\cosh \left( \kappa s\right) },\;\;x\in \left[ 0,1\right] ,\;s\in V,
\end{equation*}%
by the use of the Cauchy residues theorem ($x\in \left[ 0,1\right] ,$ $t>0$)%
\begin{equation}
\frac{1}{2\pi \mathrm{i}}\oint\nolimits_{\Gamma _{el}}\tilde{\sigma}%
_{H}^{\left( el\right) }\left( x,s\right) \mathrm{e}^{st}\mathrm{d}%
s=\sum_{n=1}^{\infty }\left( \func{Res}\left( \tilde{\sigma}_{H}^{\left(
el\right) }\left( x,s\right) \mathrm{e}^{st},s_{n}\right) +\func{Res}\left(
\tilde{\sigma}_{H}^{\left( el\right) }\left( x,s\right) \mathrm{e}^{st},\bar{%
s}_{n}\right) \right) ,  \label{crt-s}
\end{equation}%
where the integration contour $\Gamma _{el}$ is the contour from Figure \ref%
{figg}.

First we show that the series of residues in (\ref{crt-s}) is convergent and
real-valued. Since the poles $s_{n}=\mathrm{i}w_{n}$ (and $\bar{s}_{n}=-%
\mathrm{i}w_{n}$) of $\tilde{\sigma}_{H}^{\left( el\right) }\left(
x,s\right) \mathrm{e}^{st}$ are simple for $n>n_{0}$, the residues in (\ref%
{crt-s}) are calculated by%
\begin{eqnarray*}
\func{Res}\left( \tilde{\sigma}_{H}^{\left( el\right) }\left( x,s\right)
\mathrm{e}^{st},s_{n}\right) &=&\left[ \frac{1}{s}\frac{\kappa \cosh \left(
\kappa sx\right) \mathrm{e}^{st}}{\frac{\mathrm{d}}{\mathrm{d}s}\left(
s\sinh \left( \kappa s\right) +\kappa \cosh \left( \kappa s\right) \right) }%
\right] _{s=\mathrm{i}w_{n}} \\
&=&-\frac{\kappa \cos \left( \kappa w_{n}x\right) \left( \cos \left(
w_{n}t\right) +\mathrm{i}\sin \left( w_{n}t\right) \right) }{w_{n}\left(
\left( 1+\kappa ^{2}\right) \sin \left( \kappa w_{n}\right) +\kappa
w_{n}\cos \left( \kappa w_{n}\right) \right) },\;n>n_{0}.
\end{eqnarray*}%
As $n\rightarrow \infty $ $w_{n}\approx \pm \frac{n\pi }{\kappa },$ so that
the previous expression becomes%
\begin{equation*}
\left\vert \func{Res}\left( \tilde{\sigma}_{H}^{\left( el\right) }\left(
x,s\right) \mathrm{e}^{st},s_{n}\right) \right\vert \leq \frac{k}{n^{2}}%
,\;\;x\in \left[ 0,1\right] ,\;t>0,\;n>n_{0}.
\end{equation*}%
We conclude that the series of residues is convergent.

Let us calculate the integral over $\Gamma _{el}$ in (\ref{crt-s}). Consider
the integral along contour $\Gamma _{1}=\left\{ s=p+\mathrm{i}R\mid p\in %
\left[ 0,s_{0}\right] ,\;R>0\right\} ,$ where $R$ is defined by (\ref{er}).
We use estimates%
\begin{equation}
\frac{\left\vert \cosh \left( \kappa xs\right) \right\vert }{\left\vert
f_{el}(s)\right\vert }\leq \frac{C}{\left\vert s\right\vert },\;\;\text{or}%
\;\;\frac{\left\vert \cosh \left( \kappa xs\right) \right\vert }{\left\vert
f_{el}(s)\right\vert }\leq C,  \label{estim-el}
\end{equation}%
from Corollary \ref{IIzbognjegasveovo} in (\ref{sigma-el}). With the first
estimate in (\ref{estim-el}) we calculate the integral over $\Gamma _{1}$ as
\begin{eqnarray*}
\lim_{R\rightarrow \infty }\left\vert \int\nolimits_{\Gamma _{1}}\tilde{%
\sigma}_{H}^{\left( el\right) }\left( x,s\right) \mathrm{e}^{st}\mathrm{d}%
s\right\vert &\leq &\lim_{R\rightarrow \infty }\int_{0}^{s_{0}}\left\vert
\tilde{\sigma}_{H}^{\left( el\right) }\left( x,p+\mathrm{i}R\right)
\right\vert \left\vert \mathrm{e}^{\left( p+\mathrm{i}R\right) t}\right\vert
\mathrm{d}p \\
&\leq &C\kappa \lim_{R\rightarrow \infty }\int_{0}^{s_{0}}\frac{1}{R^{2}}%
\mathrm{e}^{pt}\mathrm{d}p=0,\;\;x\in \left[ 0,1\right] ,\;t>0,
\end{eqnarray*}%
while with the second estimate in (\ref{estim-el}), we have%
\begin{eqnarray*}
\lim_{R\rightarrow \infty }\left\vert \int\nolimits_{\Gamma _{1}}\tilde{%
\sigma}_{H}^{\left( el\right) }\left( x,s\right) \mathrm{e}^{st}\mathrm{d}%
s\right\vert &\leq &\lim_{R\rightarrow \infty }\int_{0}^{s_{0}}\left\vert
\tilde{\sigma}_{H}^{\left( el\right) }\left( x,p+\mathrm{i}R\right)
\right\vert \left\vert \mathrm{e}^{\left( p+\mathrm{i}R\right) t}\right\vert
\mathrm{d}p \\
&\leq &C\kappa \lim_{R\rightarrow \infty }\int_{0}^{s_{0}}\frac{1}{R}\mathrm{%
e}^{pt}\mathrm{d}p=0,\;\;x\in \left[ 0,1\right] ,\;t>0.
\end{eqnarray*}%
Similar arguments are valid for the integral along $\Gamma _{3}.$ Thus, we
have%
\begin{equation*}
\lim\limits_{R\rightarrow \infty }\left\vert \int\nolimits_{\Gamma 3}\tilde{%
\sigma}_{H}^{\left( el\right) }\left( x,s\right) \mathrm{e}^{st}\mathrm{d}%
s\right\vert =0,\;\;x\in \left[ 0,1\right] ,\;t>0.
\end{equation*}%
Next, we consider the integral along contour $\Gamma _{2}=\left\{ s=R\,%
\mathrm{e}^{\mathrm{i}\phi }\mid \phi \in \left[ -\frac{\pi }{2},\frac{\pi }{%
2}\right] \right\} .$ With the first estimate in (\ref{estim-el}) we have
that the integral over $\Gamma _{2}$ becomes
\begin{eqnarray*}
\lim_{R\rightarrow \infty }\left\vert \int\nolimits_{\Gamma _{2}}\tilde{%
\sigma}_{H}^{\left( el\right) }\left( x,s\right) \mathrm{e}^{st}\mathrm{d}%
s\right\vert &\leq &\lim_{R\rightarrow \infty }\int\nolimits_{-\frac{\pi }{2}%
}^{\frac{\pi }{2}}\left\vert \tilde{\sigma}_{H}^{\left( el\right) }\left( x,R%
\mathrm{e}^{\mathrm{i}\phi }\right) \right\vert \left\vert \mathrm{e}^{Rt%
\mathrm{e}^{\mathrm{i}\phi }}\right\vert \left\vert \mathrm{i}R\mathrm{e}^{%
\mathrm{i}\phi }\right\vert \mathrm{d}\phi \\
&\leq &C\kappa \lim_{R\rightarrow \infty }\int\nolimits_{-\frac{\pi }{2}}^{%
\frac{\pi }{2}}\frac{1}{R}\mathrm{e}^{Rt\cos \phi }\mathrm{d}\phi
=0,\;\;x\in \left[ 0,1\right] ,\;t>0,
\end{eqnarray*}%
since $\cos \phi \leq 0$ for $\phi \in \left[ -\frac{\pi }{2},\frac{\pi }{2}%
\right] .$ The integral over $\Gamma _{2},$ in the case of the second
estimate in (\ref{estim-el}) becomes%
\begin{eqnarray*}
\lim_{R\rightarrow \infty }\left\vert \int\nolimits_{\Gamma _{2}}\tilde{%
\sigma}_{H}\left( x,s\right) \mathrm{e}^{st}\mathrm{d}s\right\vert &\leq
&\lim_{R\rightarrow \infty }\int\nolimits_{-\frac{\pi }{2}}^{\frac{\pi }{2}%
}\left\vert \tilde{\sigma}_{H}\left( x,R\mathrm{e}^{\mathrm{i}\phi }\right)
\right\vert \left\vert \mathrm{e}^{Rt\mathrm{e}^{\mathrm{i}\phi
}}\right\vert \left\vert \mathrm{i}R\mathrm{e}^{\mathrm{i}\phi }\right\vert
\mathrm{d}\phi \\
&\leq &C\kappa \lim_{R\rightarrow \infty }\int\nolimits_{-\frac{\pi }{2}}^{%
\frac{\pi }{2}}\mathrm{e}^{Rt\cos \phi }\mathrm{d}\phi =0,\;\;x\in \left[ 0,1%
\right] ,\;t>0,
\end{eqnarray*}%
since $\cos \phi \leq 0$ for $\phi \in \left[ -\frac{\pi }{2},\frac{\pi }{2}%
\right] .$ Integrating along the Bromwich contour, we have%
\begin{equation*}
\lim_{R\rightarrow \infty }\int_{\gamma _{0}}\tilde{\sigma}_{H}^{\left(
el\right) }\left( x,s\right) \mathrm{e}^{st}\mathrm{d}s=2\pi \mathrm{i}%
\sigma _{H}^{\left( el\right) }\left( x,t\right) ,\;\;x\in \left[ 0,1\right]
,\;t>0.
\end{equation*}

Therefore, (\ref{crt-s}) yields $\sigma _{H}^{\left( el\right) }$ in the
form (\ref{sigma-el}). The last assertion of the theorem follows from the
proof of Theorem \ref{thmQ}.
\end{proof}

\begin{remark}
Numerical analysis for specific choices of constitutive equations, as well
as the interpretation of the results is given in \cite{APZ-6}.
\end{remark}

\begin{acknowledgement}
This research is supported by the Serbian Ministry of Education and Science
projects $174005$ (TMA and DZ) and $174024$ (SP), as well as by the
Secretariat for Science of Vojvodina project $114-451-2167$ (DZ).
\end{acknowledgement}

\end{document}